\DeclareMathAlphabet{\mathbfsl}{OT1}{ppl}{b}{it} 
\newcommand{\Strut}[2]{\rule[-#2]{0cm}{#1}}
\newcommand{\deff}{\mbox{$\stackrel{\rm def}{=}$}}
 \DeclareRobustCommand{\nsbinom}{\genfrac[]\z@{}}
 \newcommand{\sbinomq}[2]{\nsbinom{{#1}}{{#2}}_{q}}
  \newcommand{\sbinomtwo}[2]{\nsbinom{{#1}}{{#2}}_{2}}
\newcommand{\field}[1]{\mathbb{#1}}
\newcommand{\F}{\field{F}}
\newcommand{\dS}{\field{S}}
\newcommand{\T}{\field{T}}
\newcommand{\V}{\field{V}}
\newcommand{\cA}{{\cal A}}
\newcommand{\cC}{{\cal C}}
\newcommand{\cF}{{\cal F}}
\newcommand{\cS}{{\cal S}}
\newcommand{\cP}{{\cal P}}
\newcommand{\cU}{{\cal U}}
\newcommand{\cV}{{\cal V}}
\newcommand{\cN}{{\cal N}}
\newcommand{\cG}{{\cal G}}
\newcommand{\B}{{\mathbb B}}
\newcommand{\C}{{\mathbb C}}
\newcommand{\CMRD}{\C^{\textmd{MRD}}}
\newcommand{\linadd}{\kern1pt\mbox{\small$\boxplus$}\kern1pt}
\newtheorem{defn}{Definition}
\newtheorem{theorem}{Theorem}
\newtheorem{lemma}{Lemma}
\newtheorem{cor}{Corollary}
\newtheorem{example}{Example}
\newtheorem{problem}{Research problem}
\begin{document}

\bibliographystyle{plain}

\title{
\begin{center}
Problems on $q$-Analogs in Coding Theory
\end{center}
}
\author{
{\sc Tuvi Etzion}\thanks{Department of Computer Science, Technion,
Haifa 32000, Israel, e-mail: {\tt etzion@cs.technion.ac.il}.}}

\maketitle

\begin{abstract}
The interest in $q$-analogs of codes and designs has been increased
in the last few years as a consequence of their new application
in error-correction for random network coding.
There are many interesting theoretical, algebraic,
and combinatorial coding problems
concerning these $q$-analogs which remained unsolved.
The first goal of this paper is to make a short summary of
the large amount of research which was done in the area
mainly in the last few years and to provide
most of the relevant references. The second goal of this paper is to present
one hundred open questions and problems for future research,
whose solution will advance the knowledge in this area. The third goal of
this paper is to present and start
some directions in solving some of these problems.
\end{abstract}

\vspace{0.5cm}


\footnotetext[1] { This research was supported in part by the Israeli
Science Foundation (ISF), Jerusalem, Israel, under
Grant 10/12.}

\newpage
\section{Introduction}

Let $\F_q$ be the finite field with $q$ elements
and let $\F_q^n$ be the set of all vectors
of length $n$ over $\F_q$. $\F_q^n$ is a vector
space with dimension $n$ over $\F_q$. The \emph{projective space} $\cP_q(n)$,
is the set of all subspaces of $\F_q^n$, including $\{ {\bf 0} \}$
and $\F_q^n$. For a given integer $k$,
$1 \leq k \leq n$, let $\cG_q(n,k)$ denote the set of all
$k$-dimensional subspaces of $\F_q^n$. $\cG_q(n,k)$ is often
referred to as Grassmannian. It is well known that
$$ \begin{small}
| \cG_q (n,k) | = \sbinomq{n}{k}
\deff \frac{(q^n-1)(q^{n-1}-1) \cdots
(q^{n-k+1}-1)}{(q^k-1)(q^{k-1}-1) \cdots (q-1)}
\end{small}
$$
where $\sbinomq{n}{k}$ is the $q-$\emph{ary Gaussian
coefficient}~\cite[pp. 325-332]{vLWi92}.

A \emph{subspace code} $\C$ is a subset of $\cP_q(n)$ and
a \emph{Grassmannian code} (known also as a \emph{constant dimension code})
$\C$ is a subset of
$\cG_q(n,k)$. Clearly, the Grassmannian codes are subset
of the subspace codes. In recent years there has been an increasing
interest in subspace codes as a result of their
application to error-correction in random network coding as was
demonstrated by Koetter and Kschischang~\cite{KoKs08}. But, the
interest in these codes has been also before this application,
since Grassmannian codes are $q$-analogs of the well
studied constant weight codes~\cite{BSSS}.
For example, the nonexistence of nontrivial perfect codes
in the Grassmann scheme was proved in~\cite{Chi87,MaZh95}. The
well-known concept of $q$-analogs replaces subsets by subspaces of
a vector space over a finite field and their orders by the
dimensions of the subspaces. In particular, the $q$-analog
of a constant weight code in the Johnson space is a constant
dimension code in the Grassmannian space. $q$-analogs of various
combinatorial objects are well known~\cite[pp. 325-332]{vLWi92}.
Of special interest are $q$-analogs in extremal
combinatoric as well as other well-known combinatorial
problems, e.g.~\cite{BBS12,BBSW,CGR06,ChPa10,FrWi86,Hsi75}.
The related techniques and results might be of usage in coding theory.

It turns out that the natural measure of distance in $\cP_q(n)$ is given by
$$
d_S (X,Y)  \deff \dim X+ \dim Y -2 \dim\bigl( X\, {\cap} Y\bigr)~,
$$
for all $X,Y \in \cP_q(n)$. This measure of distance is called the
\emph{subspace distance} and it is the $q$-analog of the
Hamming distance in the Hamming space.

From the point of view of error-correction in random network coding
it is better to use another measure of distance,
called the \emph{injection distance} given by
$$
d_I(X,Y) = \max \{ \dim(X),\dim(Y) \} - \dim ( X \cap Y)~.
$$

The injection distance is the $q$-analog of the asymmetric
distance between binary words~\cite{Etz91,EtOs98,Shi02}.
Both, the subspace distance and the injection distance are metrics.
When $X$ and $Y$ have the same dimension $k$, the subspace metric and
the injection metric coincide. If $X,~Y \in \cG_q(n,k)$ then
we can define their distance slightly different as follows,

$$
d_G(X,Y) = k - \dim ( X \cap Y)~.
$$
This measure of distance
will be called the \emph{Grassmannian distance}.
The Grassmannian distance is the $q$-analog of the
Johnson distance used for constant weight codes.
It is equal to half of the subspace distance and it is equal exactly
to the injection distance, i.e. $2 d_G (X,Y) = d_S(X,Y)=2d_I(X,Y)$.

The three measures of distance are metrics and they have related
families of graphs $G(\cG_q(n,k))$, $G(\cP_q^S(n))$, and $G(\cP_q^I(n))$,
for the Grassmannian metric, the subspace metric, and the injection metric,
respectively. The vertices of the graphs are the subspaces of $\cG_q(n,k)$,
$\cP_q(n)$, and $\cP_q(n)$, respectively. Two vertices
$X$ and $Y$ are connected by an edge
if the distance between $X$ and $Y$ is \emph{one} in the Grassmannian
metric, the subspace metric, and the injection metric, respectively.

This paper has three goals. The first one is to give a brief survey
on the known results on
$q$-analog problems in coding theory and
related problems in block design. This
will enable to put in one
place all the relevant references in this area.
The second goal of this paper is to suggest
problems and questions for future research
in this area and to motivate further research
on the related topics. One hundred such questions and problems are presented.
We remark that some problems might be contained in other
problem, but the level of difficulty of such contained
problems should be different. Other problems for future
research will be understood from the text and the context.
The third goal is to start some directions
of solution in some of the given research
problems. Each one of the sections which follows
will be devoted to another topic in this area.
In Section~\ref{sec:cdc} we will discuss bounds on the
size of Grassmannian codes. We will start to construct
a table on the lower and upper bounds on the size of constant dimension
codes in $\cP_2(n)$ for $n \leq 7$. In Section~\ref{sec:multiL}
we will discuss the multilevel construction which is the most
effective and simple way to construct very large Grassmannian
codes or subspace codes with a given distance measure.
The Grassmannian codes are usually larger than any other
known codes with the same parameters.
Bounds on the size of
subspace codes with the subspace distance, are discussed
in Section~\ref{sec:subspace}. We will present lower and upper bounds
on the size of the codes in $\cP_2(n)$ for $n \leq 7$.
We will also present a new general bound and a related
interesting problem, which is also interesting in
the context of extremal combinatorics.
In Section~\ref{sec:injection} we will discuss bounds on the size of
subspace codes with the injection distance.
A new interesting cyclic code which we present will
raise some interesting questions.
In Section~\ref{sec:Steiner} the existence
question for $q$-analogs of Steiner systems which are one family of
optimal Grassmannian codes, are discussed.
We will present a new method which might lead for
exclusion of parameters in which $q$-analog of Steiner systems can exist.
In Section~\ref{sec:spreads} we will discuss
the concepts of spreads and partial spreads which are used in
projective geometry, but they are also optimal Grassmannian codes.
Section~\ref{sec:rank} will be devoted to rank-metric codes which
are highly connected to Grassmannian codes and also to subspace codes.
Encoding and decoding of subspace codes are discussed in Section~\ref{sec:decoding}.
Designs over~$F_q$, i.e. $q$-analogs of designs
are considered in Section~\ref{sec:designs}.
In Section~\ref{sec:covering} we consider covering proplems
in the projective space and the Grassmannian space.
The asymptotic behavior of codes and designs in
the projective space and the Grassmannian space
is discussed in Section~\ref{sec:asymptotic}.
Disjoint spreads are considered in Section~\ref{sec:parallel}.
In Section~\ref{sec:other} we discuss three more $q$-analog of coding problems.
The first one is $q$-analog of Gray codes and in particular the
$q$-analog of the the middle levels
problem. The second problem is the existence question
of complements in the projective space.
The third problem is the existence question of linear codes in the projective space.

\section{Constant Dimension Codes}
\label{sec:cdc}

A Grassmannian code is also called a constant dimension code
since all the codewords have the same dimension. An $(n,\delta,k)_q$ code is
a subset of $\cG_q(n,k)$ with minimum Grassmannian distance $\delta$.
Let $\cA_q(n,\delta,k)$ denote the maximum size of an $(n,\delta,k)_q$ code.
Koetter and Kschischang~\cite{KoKs08}, Etzion and Vardy~\cite{EtVa11}
developed several upper bounds on $\cA_q(n,\delta,k)$.
For a subspace code $\C$ we define the \emph{orthogonal complement}
$\C^\perp$ as the code which consists of the dual subspaces of $\C$,
i.e. $\C^\perp \deff \{ X^\perp ~:~ X \in \C \}$. $\C$ and $\C^\perp$
have the same minimum distance (subspace, Grassmannian, or injection).
Therefore, $\cA_q(n,\delta,k)=\cA_q(n,\delta,n-k)$ and hence
in the sequel we will
also consider only $(n,\delta ,k)_q$ codes and only bounds
on $\cA_q(n,\delta,k)$
for which $2k \leq n$. The upper bounds on $\cA_q(n,\delta,k)$
are usually the $q$-analog of the bounds in the
Hamming scheme and the Johnson scheme.
These include the sphere packing bound and the
Singleton bound~\cite{KoKs08}, the Johnson bounds~\cite{EtVa11,XiFu09} from
which the most important one is:

\begin{theorem}
\label{thm:Joh}
$
\cA_q(n,\delta,k) \leq \left\lfloor \frac{q^n-1}{q^k-1} \cA_q (n-1,\delta,k-1)
\right\rfloor ~.
$
\end{theorem}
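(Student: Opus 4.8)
The plan is to prove this Johnson-type bound by a standard double-counting argument over the $1$-dimensional subspaces (points) of $\F_q^n$. Let $\C \subseteq \cG_q(n,k)$ be an $(n,\delta,k)_q$ code of maximum size $\cA_q(n,\delta,k)$. For each $1$-dimensional subspace $P$ of $\F_q^n$, I will consider the collection of codewords that contain $P$, and relate this to a punctured code of one lower dimension in one lower ambient dimension. The key combinatorial facts I need are: the total number of points in $\F_q^n$ is $\frac{q^n-1}{q-1}$, and each $k$-dimensional codeword contains exactly $\frac{q^k-1}{q-1}$ points.

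**First I would** set up the quotient/residue construction at a fixed point $P$. Given $P \in \cG_q(n,1)$, define $\C_P \deff \{ X \in \C : P \subseteq X \}$, and map each such $X$ to the quotient space $X/P$ inside the $(n-1)$-dimensional space $\F_q^n / P$. Each image $X/P$ is a $(k-1)$-dimensional subspace. The crucial observation is that this quotient map preserves enough of the intersection structure: for $X, Y \in \C_P$ we have $\dim(X \cap Y) = \dim((X/P) \cap (Y/P)) + 1$ (since both contain $P$), so $d_G(X/P, Y/P) = d_G(X,Y) \geq \delta$. Hence the images form an $(n-1, \delta, k-1)_q$ code, giving $|\C_P| \leq \cA_q(n-1, \delta, k-1)$ for every point $P$.

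**The main step** is the double count. I will count pairs $(P, X)$ with $P \in \cG_q(n,1)$, $X \in \C$, and $P \subseteq X$. Counting by codewords, each $X \in \C$ contributes exactly $\frac{q^k-1}{q-1}$ points, giving a total of $|\C| \cdot \frac{q^k-1}{q-1}$. Counting by points, the total is $\sum_{P} |\C_P| \leq \frac{q^n-1}{q-1} \cdot \cA_q(n-1,\delta,k-1)$, since there are $\frac{q^n-1}{q-1}$ points and each $|\C_P|$ is bounded as above. Equating and rearranging yields
$$
\cA_q(n,\delta,k) \leq \frac{q^n-1}{q^k-1} \, \cA_q(n-1,\delta,k-1),
$$
and since the left side is an integer we may take the floor, giving the claim.

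**The part requiring the most care** is verifying that the quotient map $X \mapsto X/P$ genuinely produces a valid constant dimension code in $\F_q^n/P \cong \F_q^{n-1}$ with the correct minimum distance — in particular confirming the dimension bookkeeping $\dim(X \cap Y) = \dim((X \cap Y)/P) + 1$ relies on $P \subseteq X \cap Y$, which holds precisely because both codewords were chosen to contain $P$. Everything else is routine counting of points in projective spaces over $\F_q$.
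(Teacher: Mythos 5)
Your proof is correct. The paper itself states Theorem~\ref{thm:Joh} without proof, citing the Johnson-bound references, and your argument --- bounding $|\C_P|$ via the quotient map $X \mapsto X/P$ (which preserves Grassmannian distance since $\dim\bigl((X/P)\cap(Y/P)\bigr) = \dim(X\cap Y)-1$ for codewords through $P$), then double-counting incident pairs $(P,X)$ --- is exactly the standard argument used in those cited works, with all the key points (injectivity of the quotient, distance preservation, and the final integrality step justifying the floor) handled properly.
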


Theorem~\ref{thm:Joh} can be iterated to obtain the iterated Johnson bound
and the packing bound, which was proved earlier also in~\cite{WXSN03}, where
the context was linear authentication codes.

\begin{theorem}
\label{thm:packing}
$$
\cA_q(n, \delta,k) \leq \left\lfloor
\frac{q^n-1}{q^k-1} \left\lfloor \frac{q^{n-1}-1}{q^{k-1}-1}
\cdots \left\lfloor \frac{q^{n+\delta -k}-1}{q^{\delta}-1}
\right\rfloor \cdots \right\rfloor \right\rfloor \leq
\frac{\sbinomq{n}{k-\delta+1}}{\sbinomq{k}{k-\delta+1}}~.
$$
\end{theorem}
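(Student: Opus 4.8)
The plan is to obtain the left-hand inequality by iterating Theorem~\ref{thm:Joh} and to obtain the right-hand inequality by discarding all the floor functions and recognizing the resulting product as a ratio of Gaussian coefficients. Theorem~\ref{thm:Joh} reads $\cA_q(n,\delta,k) \leq \lfloor \frac{q^n-1}{q^k-1}\cA_q(n-1,\delta,k-1)\rfloor$, and it lowers both $n$ and $k$ by one while keeping $\delta$ fixed. I would apply it repeatedly, so that after $j$ steps one reaches a bound on $\cA_q(n-j,\delta,k-j)$ carrying the accumulated factors $\frac{q^{n}-1}{q^{k}-1},\dots,\frac{q^{n-j+1}-1}{q^{k-j+1}-1}$, each wrapped in a floor. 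The iteration should be continued exactly until the dimension has dropped to $\delta$, that is, for $j=k-\delta$ steps, arriving at a bound on $\cA_q(n+\delta-k,\delta,\delta)$ with accumulated factors $\frac{q^n-1}{q^k-1},\dots,\frac{q^{n-k+\delta+1}-1}{q^{\delta+1}-1}$, which are all the factors in the statement except the innermost one.

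Next I would supply the base case. A codeword of an $(n+\delta-k,\delta,\delta)_q$ code is a $\delta$-dimensional subspace, and minimum Grassmannian distance $\delta$ forces any two distinct codewords to meet only in $\{ {\bf 0} \}$; hence the sets of nonzero vectors they contain are pairwise disjoint. Since each codeword accounts for $q^\delta-1$ of the $q^{n+\delta-k}-1$ available nonzero vectors, this yields $\cA_q(n+\delta-k,\delta,\delta) \leq \lfloor \frac{q^{n+\delta-k}-1}{q^\delta-1}\rfloor$, which is precisely the innermost floor. Substituting this into the accumulated bound produces the full nested-floor expression and establishes the first inequality. Equivalently, one could simply apply Theorem~\ref{thm:Joh} one further time together with the observation that $\cA_q(\cdot,\delta,\delta-1)=1$, since two distinct $(\delta-1)$-dimensional subspaces can never be at Grassmannian distance $\delta$.

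For the second inequality I would use $\lfloor x\rfloor \leq x$, so that deleting every floor can only increase the value, giving the product $\prod_{i=0}^{k-\delta}\frac{q^{n-i}-1}{q^{k-i}-1}$. Writing $m=k-\delta+1$ and invoking the definition of the Gaussian coefficient, the shared denominators $(q^m-1)\cdots(q-1)$ of $\sbinomq{n}{m}$ and $\sbinomq{k}{m}$ cancel, so that $\frac{\sbinomq{n}{m}}{\sbinomq{k}{m}}$ collapses to $\frac{(q^n-1)\cdots(q^{n+\delta-k}-1)}{(q^k-1)\cdots(q^\delta-1)}$, which is exactly the floor-free product. I expect no genuine obstacle in this argument; the only point demanding care is the index bookkeeping, namely verifying that the iteration terminates after precisely $k-\delta$ steps and that the three index ranges---the numerator exponents $n,\dots,n+\delta-k$, the denominator exponents $k,\dots,\delta$, and the value $m=k-\delta+1$---align so that the product coincides with the claimed ratio of Gaussian coefficients.
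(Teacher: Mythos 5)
Your proposal is correct and follows exactly the route the paper intends: the paper proves this theorem simply by remarking that Theorem~\ref{thm:Joh} can be iterated, which is precisely your argument, with the base case (pairwise trivial intersections of $\delta$-dimensional codewords, or equivalently $\cA_q(\cdot,\delta,\delta-1)=1$) and the floor-removal identification of the Gaussian-coefficient ratio being the details the paper leaves implicit. Your index bookkeeping checks out, so nothing is missing.
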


\vspace{0.3cm}

As for lower bounds on $\cA_q(n,\delta,k)$,
in~\cite{KoKs08} there is a construction,
of codes based on linearlized polynomials,
which yields the bound $\cA_q(n,\delta,k) \geq q^{(n-k)(k-\delta+1)}$.
The same bound was developed in~\cite{SKK08} by using lifted
rank-metric codes. In this context we define the rank distance
and rank-metric codes which play an important role in the
discussion on subspace codes.
For two $k \times \ell$ matrices $A$ and $B$ over $\F_q$ the {\it
rank distance} is defined by
$$
d_R (A,B) \deff \text{rank}(A-B)~.
$$
A $[k \times \ell,\varrho,\delta]$ {\it rank-metric code} $\cC$
is a linear code, whose codewords are $k \times \ell$ matrices
over $\F_q$; they form a linear subspace with dimension $\varrho$
of $\F_q^{k \times \ell}$, and for each two distinct codewords $A$
and $B$ we have that $d_R (A,B) \geq \delta$. For a
${[k \times \ell,\varrho,\delta]}$ rank-metric code $\cC$ it was proved
in~\cite{Del78,Gab85,Rot91} that
\begin{equation}
\label{eq:MRD} \varrho \leq
\text{min}\{k(\ell-\delta+1),\ell(k-\delta+1)\}~.
\end{equation}
This bound, called the Singleton bound for rank-metric codes,
is attained for all
feasible parameters. The codes which attain this bound are called {\it
maximum rank distance} codes (or MRD codes in short).

Let $A$ be a $k \times \ell$ matrix over $\F_q$ and let $I_k$ be a
$k \times k$ identity matrix. The matrix $[ I_k ~ A ]$ can be
viewed as a generator matrix of a $k$-dimensional subspace of
$\F_q^{k+\ell}$, and it is called the \emph{lifting} of $A$~\cite{SKK08}.
When all the codewords of a rank-metric code
$\cC$ are lifted to $k$-dimensional subspaces, the result is
a constant dimension code $\C$. If $\cC$ is an MRD code
then $\C$ is called a \emph{lifted MRD
code}~\cite{SKK08}. This code will be denoted by $\CMRD$.
This code is not maximal and it can be extended by using
a multilevel construction~\cite{EtSi09} as described in the next section.
An upper bound on the size of a code which contains $\CMRD$
can be found in~\cite{EtSi13}. Codes based on linearlized
polynomials, where each code contains the related code based
on linearlized polynomial constructed in~\cite{KoKs08}, were
developed in~\cite{Ska10}. But, these codes are smaller in size
then the codes obtained by the multilevel construction.

Another family of Grassmannian codes are codes which admit
a certain automorphism group. These kind of codes are discussed
in~\cite{EtVa11,KoKu08}. One of the most interesting family of
such codes are the cyclic codes.
Let $\F_{q^n}$ be a finite field with $q^n$ elements, where $q$ is
a power of a prime number,
and let $\alpha$ be a primitive element in $\F_{q^n}$.
It is well-known that there is an isomorphism between $\F_{q^n}$ and
$\F_q^n$, where the \emph{zero} elements are mapped into each other,
and $\alpha^i \in \F_{q^n}$, $0 \leq i \leq q^n-2$, is mapped into its
$q$-ary $n$-tuple representation in $\F_q^n$, and vice versa.
Using this mapping, a subspace
of~$\F_q^n$ is represented by the corresponding elements of~$\F_{q^n}$.
Let $\alpha$ be a primitive element of $\F_{q^n}$. We say
that a code $\C \subseteq \cP_q(n)$ is \emph{cyclic} if it has the
following property: whenever
$\{ {\bf 0} ,\alpha^{i_1},\alpha^{i_2},\ldots,\alpha^{i_m}\}$ is a
codeword of $\C$, so is its cyclic shift $\{ {\bf 0},
\alpha^{i_1+1},\alpha^{i_2+1},\ldots,\alpha^{i_m+1} \}$. In other
words, if we map each vector space $V \,{\in}\, \C$ into the
corresponding binary characteristic vector $x_V =
(x_0,x_1,\ldots,x_{2^n-2})$ given by
$$
x_i = 1 ~~\text{if $\alpha^i {\in}\kern1pt V$}
\hspace{4ex}\text{and}\hspace{4ex} x_i = 0 ~~\text{if $\alpha^i
{\not\in}\, V$}
$$
then the set of all such characteristic vectors is closed under
cyclic shifts. We note that the property of being cyclic does
\emph{not} depend on the choice of a primitive element $\alpha$ in
$\F_{q^n}$.

Cyclic codes have a nice automorphism group. But, there are other automorphisms
which can be forced on the code. One example is the use of the Frobenius mapping
which was used for example in~\cite{BEOVW}.
Some automorphisms of constant dimension codes were
studied in~\cite{Tra13}. Constructions for small dimensions
might be attractive in this context. Interesting codes admitting some automorphisms
were constructed in~\cite{BrRe12}. Some of these codes have an interesting
combinatorial structure and some were found only by computer search.
These were used to obtain lower bounds on $\cA_2(n,2,3)$. Lower
bounds on $\cA_q(n,2,3)$ were also considered in~\cite{EtSi13}.
Codes with subspaces of dimension 3 are of special
interest mainly since the value of
$\cA_q(n,\delta ,2)$ is known for all parameters.

Another family of codes which was considered, even so the codes were
not as large as in previous constructions, are the
orbit codes~\cite{MTR11,TMBR11,TMR11,TrRo13}. This family
of codes might deserve further attention in the future.
Another line of research for Grassmannain codes is based on
Schubert calculus and Pl\"{u}cker coordinates. These were
considered for example in~\cite{Gha13,TSR13,TrRo12} and their
further research might lead to new interesting results.
Lexicodes in the Grassmannian and their search
were discussed in~\cite{SiEt11a}.
Codes which are able to correct also
errors in coordinates (such as deletion or localized errors)
are considered in~\cite{Cai13,SMN12}.
A question on the size of equidistant Grassmannian codes
was asked in~\cite{EtVa11}. In an equidistant
code, any two codewords have the same distance,
which is clearly the minimum distance of the code.
This problem is highly connected to problems in
extremal combinatorics, e.g.~\cite{FrWi86,Hal77}.
Some work in this direction was
done lately in~\cite{Han12,Han12a,Han13}.
Finally, there are other related coding problems in the Grassmannian.
For example, the intersection size of balls around codewords has
some interesting applications~\cite{YSLB}, where the case
for the intersection size two balls in the Grassmannian was considered.

We conclude this section with our first list of research problems.

\begin{problem}
Find a systematic construction for cyclic codes in $\cG_q(n,k)$.
\end{problem}

\begin{problem}
Find new methods to construct large constant constant dimension
codes which are not based on lifting of rank-metric codes.
\end{problem}

\begin{problem}
Can we derive some constraints to form a linear programming
method to obtain new upper bounds on $\cA_q(n,\delta,k)$?
\end{problem}

\begin{problem}
Find new general upper bounds on the size of Grassmannian codes.
\end{problem}

\begin{problem}
Find a general upper bound on the size of a Grassmannian code
which contains $\CMRD$.
\end{problem}

\begin{problem}
Improve the lower and upper bounds on $\cA_q(n,\delta,3)$ for
specific values of $n$ and for large $n$.
\end{problem}

\begin{problem}
What is the size of the largest equidistant $(n,\delta,k)_q$ code?
\end{problem}

\begin{problem}
Find new applications for Grassmannian codes.
\end{problem}

\begin{problem}
Find the size of the intersection of more than
two balls in the Grassmannian.
\end{problem}

\begin{problem}
\label{pr:tab_cdc}
Compile a table for the lower and upper bounds on $\cA_q(n,\delta,k)$
for small values of $q$, $n$, $\delta$, and $k$.
\end{problem}

We start with some tables related to Research problem~\ref{pr:tab_cdc}.
We consider the first few tables for $q=2$. Each table will be for
a different value of $n \geq 4$. We omit the trivial cases where $k=1$ or
$\delta=2$. We also consider only the cases where $k \leq n-k$ and
ignore the cases where the size of the optimal code is one. By
Theorem~\ref{thm:exactk_2k} we have that $\cA_2(4,2,2)=5$; by
Theorems~\ref{thm:upperk_2k} and~\ref{thm:lowerk_2k} we have
$\cA_2(5,2,2)=9$. Hence, the first table is for $n=6$.

\begin{center}

\hspace{4.00ex}
\begin{tabular}{|c||c@{\hspace{0.50ex}}%
                    c|}
\multicolumn{3}{c}{\sf \hspace*{-10ex}Bounds on
$\cA_2(6,\delta,k)$\hspace*{-10ex}}
\vspace{0.5cm}
\\[0.75ex]
\hline \raisebox{-1.25ex}{\large$\delta$} &
\multicolumn{2}{c|}{\large$k$\Strut{2.50ex}{0ex}}
\\[-1.00ex]
& 3 & 2
\\
\hline\hline $3$ & $^a9^a$\Strut{2.50ex}{0ex} &
\multicolumn{1}{|c|}{}
\\
\cline{3-3} $2$ & $^b77-81^c$ & $^a21^a$\Strut{2.25ex}{0ex}
\\
\hline
\end{tabular}

\vspace{0.7cm}

\hspace{4.00ex}
\begin{tabular}{|c||c@{\hspace{0.50ex}}%
                    c|}
\multicolumn{3}{c}{\sf \hspace*{-10ex}Bounds on
$\cA_2(7,\delta,k)$\hspace*{-10ex}}
\vspace{0.5cm}
\\[0.75ex]
\hline \raisebox{-1.25ex}{\large$\delta$} &
\multicolumn{2}{c|}{\large$k$\Strut{2.50ex}{0ex}}
\\[-1.00ex]
& 3 & 2
\\
\hline\hline $3$ & $^d17^d$\Strut{2.50ex}{0ex} &
\multicolumn{1}{|c|}{}
\\
\cline{3-3} $2$ & $^e329-381^f$ & $^g31^h$\Strut{2.25ex}{0ex}
\\
\hline
\end{tabular}

\vspace{0.7cm}

\begin{tabular}{|c||c@{\hspace{0.50ex}}%
                    c@{\hspace{0.50ex}}%
                    c|}
\multicolumn{4}{c}{\sf \hspace*{-10ex}Bounds on
$\cA_2(8,\delta,k)$\hspace*{-10ex}}
\vspace{0.5cm}
\\[0.75ex]
\hline \raisebox{-1.25ex}{\large$\delta$} &
\multicolumn{3}{c|}{\large$k$\Strut{2.50ex}{0ex}}
\\[-1.00ex]
& 4 & 3 & 2
\\
\hline\hline $4$ & $^a17^a$\Strut{2.50ex}{0ex} &
\multicolumn{2}{|c|}{}
\\
\cline{3-3} $3$ & $^i257-289^j$ & $^d34^d$\Strut{2.25ex}{0ex}
&\multicolumn{1}{|c|}{}
\\
\cline{4-4} $2$ & $^k4797-6477^f$ & $^e1312-1493^c$ & $^a85^a$\Strut{2.25ex}{0ex}
\\
\hline
\end{tabular}

\end{center}

\noindent
\begin{itemize}
\item $a$ - Theorem~\ref{thm:exactk_2k}.

\item $b$ - \cite{KoKu08}.

\item $c$ - \cite{EtVa11}.

\item $d$ - \cite{EJSSS}.

\item $e$ - \cite{BrRe12}.

\item $f$ - Theorem~\ref{thm:packing}.

\item $g$ - Theorem~\ref{thm:lowerk_2k}.

\item $h$ - Theorem~\ref{thm:upperk_2k}.

\item $i$ - The Multilevel Construction with $\CMRD$.

\item $j$ - Theorem~\ref{thm:Joh}.

\item $k$ - \cite{EtSi13}.
\end{itemize}

Some of the specific values of $\cA_q (n,\delta,k)$
can be of special interest. Some of these are discussed
in Sections~\ref{sec:Steiner} and~\ref{sec:spreads},
but there are some other as well. For example, the value
of $\cA_2 (2k,k-1,k)$, $k \geq 3$, is one such value.
By Theorem~\ref{thm:S_exactk_2k} we have $\cA_2 (2k-1,k-1,k-1) = 2^k+1$.
Hence, by Theorem~\ref{thm:Joh} we have
$\cA_2 (2k,k-1,k) \leq \frac{2^{2k}-1}{2^k-1} (2^k+1) =(2^k+1)^2$.
If there is a code which attains this bound then it should have
a very interesting symmetry even so a code which attains the
bound $\cA_2 (2k-1,k-1,k-1) = 2^k+1$ does not have a symmetry.
An interesting related question will be discussed in Section~\ref{sec:subspace}.

\begin{problem}
Is there some $k \geq 3$ for which $\cA_2 (2k,k-1,k) = (2^k+1)^2$ ?
Does $\cA_2(6,2,3)=81$ ?
\end{problem}

\begin{problem}
\label{pr:almost_Steiner}
Is there some $k \geq 4$ for which $\cA_2(2k,k-1,k) \geq (2k+1)^2 - 2^{k-1}$?
This bound will be of special interest in Section~\ref{sec:subspace}.
\end{problem}

\section{The Multilevel Construction}
\label{sec:multiL}

The multilevel construction is a method for which the outcome
is a code in $\cP_q(n)$ which contains lifted rank-metric codes. The
method can be used to construct Grassmannian codes, subspace
codes with the subspace distance, and subspace codes with the
injection distance. The construction is based on a new type of
rank-metric codes, namely Ferrers diagram rank-metric codes.
The description of the construction requires some
methods to represent subspaces of $\cP_q(n)$.

A $k$-dimensional subspace $X$ of $\F_q^n$ can be represented by a
$k \times n$ \emph{generator matrix} whose rows form a basis for $X$.
The basis of $X$ is in \emph{reduced row echelon
form}, denoted by $E(X)$, if the following conditions are satisfied.
\begin{itemize}
\item The leading coefficient of a row is always to the right of
the leading coefficient of the previous row.

\item All leading coefficients are {\it ones}.

\item Every leading coefficient is the only nonzero entry in its
column.
\end{itemize}

Each $k$-dimensional subspace $X$ of $\F_q^n$ has an
\emph{identifying vector} $v(X)$. $v(X)$ is a binary vector of length
$n$ and weight $k$, where the {\it ones} in $v(X)$ are in the
positions (columns) where $E(X)$ has the leading {\it ones} (of
the rows).

A \emph{Ferrers diagram} represents partitions of
positive integers as patterns of dots
with the $i$-th row having the same number of dots as the $i$-th
term in the partition~\cite{AnEr04,Sta86,vLWi92}. A Ferrers
diagram satisfies the following conditions.
\begin{itemize}
\item
The number of dots in a row is at most the number of dots in the
previous row.

\item All the dots are shifted to the right of the diagram.
\end{itemize}
The \emph{number of rows (columns)} of the Ferrers diagram $\cF$ is
the number of dots in the rightmost column (top row) of $\cF$. If
the number of rows in the Ferrers diagram is $m$ and the number of
columns is $\eta$ we say that it is an $m \times \eta$ Ferrers
diagram.
If we read the Ferrers diagram by columns we get another partition
of the same integer.

The \emph{echelon Ferrers form} of a vector $v$ of length $n$ and
weight $k$, $EF(v)$, is the $k\times n$ matrix in reduced row
echelon form with leading entries (of rows) in the columns indexed
by the nonzero entries of $v$ and $"\bullet"$  in all entries
which do not have terminals {\it zeroes} or {\it ones}. A
$"\bullet"$ will be called in the sequel a {\it dot}. The dots of this
matrix form the Ferrers diagram of $EF(v)$. If we substitute
elements of $\F_q$ in the dots of $EF(v)$ we obtain a
$k$-dimensional subspace $X$ of $\cG_q(n,k)$. $EF(v)$ will be called also
the echelon Ferrers form of $X$.

Let $v$ be a vector of length $n$ and weight $k$ and let $EF(v)$
be its echelon Ferrers form. Let $\cF$ be the Ferrers diagram of
$EF(v)$. $\cF$ is an $m \times \eta$ Ferrers diagram, where $m \leq k$
and $\eta \leq n-k$. A code $\cC$ is an $[\cF,\varrho,\delta]$ {\it
Ferrers diagram rank-metric code} if all codewords are $m \times
\eta$ matrices in which all entries not in $\cF$ are {\it zeroes},
it forms a rank-metric code with dimension $\varrho$, and minimum
rank distance $\delta$.

\vspace{1.8cm}

\noindent
{\Large{\bf The Multilelvel Construction}}

\vspace{0.3cm}

\noindent
{\bf First step:} choose a binary code
${\bf C}$ of length $n$ and
minimum Hamming distance $d$, $2\delta-1
\leq d \leq 2\delta$. This code will be called the
\emph{skeleton code}.

The next three steps are performed for each codeword
$c \in {\bf C}$.

\noindent
{\bf Second step:} construct the echelon Ferrers form
$EF(c)$.

\noindent
{\bf Third step:} construct an
$[\cF,\varrho,\delta]$ Ferrers diagram
rank-metric code $\cC_{\cF}$ for the Ferrers
diagram $\cF$ of $\mbox{EF}(c)$.

\noindent
{\bf Fourth step:} lift $\cC_{\cF}$
to an $(n,\delta,k)_q$ code
$\C_c$, for which the echelon Ferrers form of
$X \in \C_c$ is $\mbox{EF}(c)$.

\noindent
{\bf Finally:}
$$
\C = \bigcup_{c \in {\bf C}} \C_c ~.
$$
\begin{theorem}
The size of the code $\C$ is $\sum_{c \in {\bf C}} | \C_c |$ and
it has minimum subspace distance $d$.
\end{theorem}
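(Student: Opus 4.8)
The plan is to prove the two assertions---disjointness of the union defining $\C$, and that the minimum subspace distance equals $d$---from a single structural fact about pivots. For a subspace $W$ write $p(W)$ for the set of pivot columns of its reduced row echelon form $E(W)$; this is exactly the support of the identifying vector $v(W)$, and the number of pivots equals $\dim W$, so $\dim W = |p(W)|$. The fact I would isolate first is \emph{monotonicity of pivots}: if $Z \subseteq W$ then $p(Z) \subseteq p(W)$. This is immediate from the basis-free description that $j \in p(W)$ iff $W$ contains a vector whose leftmost nonzero coordinate lies in column $j$, since any such vector of $Z$ also lies in $W$. For the size, every $X \in \C_c$ has echelon Ferrers form $EF(c)$ by the fourth step, hence $v(X) = c$; as distinct elements of ${\bf C}$ are distinct vectors, the fibres $\C_c$ are pairwise disjoint and $|\C| = \sum_{c \in {\bf C}} |\C_c|$.

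For the distance I would next establish the lemma $d_S(X,Y) \ge d_H(v(X),v(Y))$ for all $X,Y \in \cP_q(n)$, $d_H$ being the Hamming distance. Writing $d_S(X,Y) = \dim(X+Y) - \dim(X\cap Y)$ and applying pivot-monotonicity to $X,Y \subseteq X+Y$ and to $X\cap Y \subseteq X,Y$ gives $\dim(X+Y) \ge |p(X)\cup p(Y)|$ and $\dim(X\cap Y) \le |p(X)\cap p(Y)|$; subtracting yields $d_S(X,Y) \ge |p(X)\triangle p(Y)| = d_H(v(X),v(Y))$. For codewords lying in different fibres $\C_c$ and $\C_{c'}$ this already gives $d_S \ge d_H(c,c') \ge d$, the last step being the minimum distance of the skeleton code.

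It remains to handle two codewords $X,Y$ in the same fibre $\C_c$ and to verify the distance is not larger than $d$. In the same-fibre case $E(X)$ and $E(Y)$ share the form $EF(c)$ and differ only in the dot positions, where they carry rank-metric codewords $A$ and $B$; matching the pivot coordinates of a common vector forces a single coefficient vector $\lambda$ for both representations, after which the non-pivot coordinates agree precisely when $\lambda(A-B)=0$. Hence $\dim(X\cap Y) = \dim X - \text{rank}(A-B)$, so $d_S(X,Y) = 2\,d_R(A,B) \ge 2\delta \ge d$, using that $\cC_{\cF}$ has minimum rank distance $\delta$ and $d \le 2\delta$. Both cases give $d_S \ge d$. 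For the matching upper bound, choose $c,c' \in {\bf C}$ with $d_H(c,c') = d$; since each Ferrers diagram rank-metric code is linear it contains the zero matrix, whose lift fills every dot with $0$ and is therefore the coordinate subspace $\spn\{e_j : j \in \text{supp}(c)\} \in \C_c$. For the two coordinate subspaces attached to $c$ and $c'$ a direct count gives $d_S = |\text{supp}(c)\triangle\text{supp}(c')| = d$, so the minimum subspace distance is exactly $d$.

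The main obstacle is the pivot lemma and its corollary $d_S \ge d_H(v(X),v(Y))$: once this inequality is available, disjointness, the same-fibre rank computation, and the coordinate-subspace example all reduce to routine bookkeeping. The delicate points within that lemma are the identities $\dim W = |p(W)|$ and the preservation of pivots under passage to a subspace, both of which I would derive from the leftmost-nonzero-coordinate characterization of $p(\cdot)$ rather than from manipulating a fixed generator matrix, so as to apply them uniformly to $X$, $Y$, $X+Y$, and $X\cap Y$.
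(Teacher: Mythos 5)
Your proof is correct, and it is essentially the argument behind this theorem: the paper itself states the result without proof, deferring to the reference \cite{EtSi09}, and your write-up reconstructs exactly the standard proof given there. Specifically, your two key steps --- the pivot-monotonicity inequality $d_S(X,Y) \ge d_H\bigl(v(X),v(Y)\bigr)$ handling codewords from distinct fibres (and giving disjointness of the fibres, hence the size formula), and the identity $d_S(X,Y) = 2\,d_R(A,B)$ for two lifts in the same fibre --- are precisely the lemmas used in \cite{EtSi09}, with your zero-matrix lifts supplying the pair of codewords at distance exactly $d$.
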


The code $\C$ obtained by the Multilevel Construction
can be a Grassmannian code, a subspace code with the
subspace distance, or a subspace code with the
injection metric. If we want $\C$ to be a Grassmannian code then
${\bf C}$ must be a constant weight code in the Hamming scheme
(or equivalently a code in the Johnson scheme). The other two
options will be discussed in the next two sections.
A comprehensive description and discussion
on the Multilevel Construction can be found in~\cite{EtSi09}.
Some improvements which enable to add more codewords
for the final code can be found in~\cite{SiTr13,TrRo10}.

\begin{problem}
Find a method which combine the Multilevel Construction
with more concepts to obtain larger codes.
\end{problem}

\begin{problem}
What is the best way to choose a skeleton code for the Multilevel Construction.
\end{problem}

\section{Subspace Codes with the Subspace Distance}
\label{sec:subspace}

An $(n,d)_q^S$ code is
a subspace code in $\cP_q(n)$ with minimum subspace distance $d$.
Let $\cA_q^S (n,d)$ be the maximum number of codewords in
an $(n,d)_q^S$ code. Recall that the Grassmannian distance is the
$q$-analog of the Johnson distance and that the subspace distance
is the $q$-analog of the Hamming distance. But, while the related Johnson
graph, Grassmannian graph, and Hamming graph are distance regular,
the related graph of $\cP_q(n)$, $G(\cP_q^S(n))$, is not
distance regular. In fact, it is not regular since the sizes of the balls
with radius one around a vertex depends on the
dimension of the related subspace. This makes the task, of handling coding questions
in $\cP_q(n)$ in general and obtaining bounds on $\cA_q^S(n,d)$ in particular,
more difficult. Even so, classic lower bounds on $\cA_q^S (n,d)$
such as the Gilbert-Varshamov bound
were given in~\cite{EtVa11}.

The Multilevel Construction can be applied to
obtain subspace codes with the subspace
distance, where the skeleton code is taken to be a binary code
in the Hamming scheme. Subspace codes can be also cyclic codes.
Hence, a method to construct cyclic subspace codes is
one of the important tasks in this area. An example
of a cyclic $(6,3)_2^S$ code of size 85 was given in~\cite{EtVa11}.
Construction of cyclic codes in $\cP_q(n)$ is a method which
should be further explored.
A third method to construct subspace codes was presented in~\cite{EtSi09}.
This is a puncturing method which is used to obtain codes with subspace distance
$d-1$ from codes with subspace distance $d$. This method works as follows,

Let $X$ be an $\ell$-subspace of $\F_q^n$ such that the unity
vector with an {\it one} in the $i$-th coordinate is not an element
in $X$. The \emph{puncturing} of the $i$-th coordinate of $X$, $\Delta_i(X)$,
is defined as the $\ell$-dimensional subspace of $\F_q^{n-1}$
obtained from $X$ by deleting coordinate $i$ from each vector in $X$.
Let $\C$ be a code in $\cP_q(n)$ and let $Q$ be an $(n-1)$-dimensional
subspace of $\F_q^n$. Let $E(Q)$ be the $(n-1) \times n$ generator
matrix of $Q$ (in reduced row echelon form) and let $\tau$ be the
position of the unique {\it zero} in $v(Q)$. Let $v \in \F_q^n$ be an
element such that $v \notin Q$. We define the {\it punctured} code

$$\C'_{Q,v} = \C_Q \cup \C_{Q,v}~,$$
where
$$\C_Q=\left\{ \Delta_\tau (X) ~:~ X \in \C, ~ X \subseteq Q\right\})$$
and
$$\C_{Q,v}= \left\{ \Delta_\tau (X \cap Q)~:\: X \in \C,~v\in X \right\}~.$$

This method is very useful and
will usually produce codes which are larger than the
codes generated by the Multilevel Construction when
$d$ is even and the resulting code has an odd subspace distance. It is
also more effective in terms of good bounds when we start with a code
of $\F_q^n$ for even $n$, in which a large $(n,\delta,\frac{n}{2})_q$ code
is contained, and the resulting code contains subspaces
of $\F_q^{n-1}$. Sometimes it would be better to start only with the
related Grassmannian code $\C$ and to puncture $\C$. To the punctured
code $\C'_{Q,v}$ codewords with other dimensions should be added.
The right way, when and how to apply the puncturing method is
a topic for future research.

The only reasonable known methods to obtain upper bounds on $\cA_q^S (n,d)$ are
various types of linear programming
methods. The linear programming method suggested in~\cite{EtVa11}
is different from the classic linear programming of Delsarte~\cite{Del73}.
The classic linear programming does not yield any improvements,
when applied for bounds on codes in $\cG_q(n,k)$
or in $\cP_q(n)$. A somehow
stronger method is the semidefinite programming~\cite{Sch79,Sch05}
as suggested in~\cite{BPV13} for the projective space.
This method and the one in~\cite{EtVa11}
yield all the best known upper bounds on $\cA_q^S (n,d)$,
when in most cases the semidefinite programming yields better bounds.
The following list introduces a small list of the related research problems.

\begin{problem}
Find a counterpart Sphere packing bound on $\cA_q^S(n,d)$.
\end{problem}

\begin{problem}
Find a $q$-analog for the well-known Plotkin bound on $\cA_q^S (n,d)$
and analyze the related codes which attain the bound.
\end{problem}

\begin{problem}
Can additional constraints be added for the semidefinite programming
to improve the upper bounds on $\cA_q^S (n,d)$?
\end{problem}

\begin{problem}
Find a new method to construct large subspace codes with the subspace distance.
\end{problem}

\begin{problem}
Find a method to construct large cyclic codes in $\cP_q(n)$.
\end{problem}

\begin{problem}
\label{pr:tab_sub}
Compile a table with lower and upper bounds on $\cA_q^S (n,d)$ for small $q$, $n$, and $d$.
\end{problem}

Let's consider Research problem~\ref{pr:tab_sub} for $q=2$ and $4 \leq n \leq 7$.
$\cA_2^S (n,1) = | \cP_q(n)|$
and $\cA_2^S(n,2)=\Sigma_{\text{even}~k} \sbinomtwo{n}{k}$, where
$\cA_2^S (n,2) =\frac{|\cP_q(n)|}{2}$ for odd $n$.
$\cA_2^S (4,3) =\cA_2^S(4,4)=5$, where the lower bound is
derived from a spread (see Section~\ref{sec:spreads})
and the upper bound is obtain using trivial analysis.
It was proved that $\cA_2^S(5,3)=18$ in~\cite{EtVa11},
by Theorem~\ref{thm:b_2n+1_2n} we have $\cA_2^S(5,4)=9$,
and it is easy to verify that $\cA_2^S (5,5)=2$.
$\cA_2^S(6,3) \leq 123$ as was proved in~\cite{EtVa11}, where a lower
bound of $\cA_2^S(6,3) \geq 85$ was also given. The lower bound can be probably
improved considerably by considering the 77 codewords of the $(6,2,3)_2$
code reported in~\cite{KoKu08} and adding to it subspaces with dimensions
2 and 4. The actual code size should be found by using
a computer search. We leave it as we leave
other basic computations which follow to the
interested reader. $\cA_2^S(6,4) = \cA_2(6,2,3)$
and $\cA_2^S(6,5)=\cA_2^S(6,6)=\cA_2(6,3,3)=9$. $\cA_2^S(7,3 ) \leq 776$
by using semidefinite programming as was proved in~\cite{BPV13}. The lower
bound $\cA_2^S(7,3) \geq 584$ is obtained by puncturing the $(8,2,4)_2$
code of size 4797~\cite{EtSi13} and adding to it
the null space and $\F_2^7$. It is not difficult to verify
based on previous results which were stated that
$330 \leq \cA_2^S(7,4) = \cA_2(7,2,3)+1 \leq 382$.
For the next two cases, $\cA_2^S(7,5)$ and $\cA_2^S(7,6)$ we will
provide more detailed and general analysis.

\begin{lemma}
\label{lem:half_dim}
If $\C$ is a $(2n+1,n,n+1)_2$ code of size $2^{n+1}+1$
then each one-dimensional subspace of $\F_2^{2n+1}$ is contained
in one of its codewords.
\end{lemma}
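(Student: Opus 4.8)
The plan is to dualize: I would pass to the orthogonal complement code and reduce the covering claim to a statement about a partial spread, where a direct hole-counting argument settles it.

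First I would record the incidence facts. Since $\C \subseteq \cG_2(2n+1,n+1)$ has Grassmannian distance $n$, any two distinct codewords $X,Y$ satisfy $\dim(X\cap Y)\le 1$; and because $2(n+1)>2n+1$ we also have $\dim(X\cap Y)\ge 1$, so in fact every pair of codewords meets in exactly a point. Passing to $\C^{\perp}=\{X^{\perp}:X\in\C\}\subseteq\cG_2(2n+1,n)$, the identity $\dim(X^{\perp}\cap Y^{\perp})=(2n+1)-\dim(X+Y)=\dim(X\cap Y)-1$ shows $X^{\perp}\cap Y^{\perp}=\{{\bf 0}\}$ for distinct codewords. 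Hence $\C^{\perp}$ is a partial $n$-spread of $\F_2^{2n+1}$ of size $2^{n+1}+1$.

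Next I would translate the target. Since $P\subseteq X \iff X^{\perp}\subseteq P^{\perp}$, a one-dimensional $P$ lies in some codeword of $\C$ iff the hyperplane $P^{\perp}$ (a $2n$-subspace) contains some member of $\C^{\perp}$. So it suffices to prove that every hyperplane of $\F_2^{2n+1}$ contains at least one element of the partial spread $\C^{\perp}$. At this point I would count the \emph{holes} of $\C^{\perp}$, i.e. the points covered by no spread element: the $2^{n+1}+1$ pairwise-disjoint $n$-subspaces cover $(2^{n+1}+1)(2^n-1)=2^{2n+1}-2^n-1$ of the $2^{2n+1}-1$ points, leaving exactly $2^n$ holes.

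The crux is a counting argument inside a putative spread-free hyperplane. Suppose a hyperplane $H$ contains no spread element. Then each $S\in\C^{\perp}$ satisfies $S\not\subseteq H$, so $\dim(S\cap H)=n-1$; since distinct spread elements meet only in ${\bf 0}$, the subspaces $S\cap H$ are pairwise disjoint and cover $(2^{n+1}+1)(2^{n-1}-1)=2^{2n}-2^{n+1}+2^{n-1}-1$ of the $2^{2n}-1$ points of $H$. The remaining $3\cdot 2^{n-1}$ points of $H$ lie in no spread element, i.e. they are holes; but then $H$ alone contains $3\cdot 2^{n-1}$ holes while $\C^{\perp}$ has only $2^n=2\cdot 2^{n-1}$ holes in total — a contradiction, since $3\cdot 2^{n-1}>2^n$. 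This rules out spread-free hyperplanes and proves the lemma.

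I would flag the main obstacle, which is really a matter of choosing the right invariant: naive double counting of (point, codeword) incidences for $\C$ itself does \emph{not} suffice. Writing $r_P$ for the number of codewords through a point $P$, the only moments available are $\sum_P r_P = 2^{2n+2}-1$ and $\sum_P\binom{r_P}{2}=\binom{2^{n+1}+1}{2}$, which combine to give $\sum_P (r_P-1)(r_P-2)=2^{n+1}$ and hence merely at most $2^n$ uncovered points — consistent with, but not equal to, zero. The decisive extra ingredient is passing to the dual partial spread and exploiting that a spread-free hyperplane is forced to absorb strictly more holes ($3\cdot 2^{n-1}$) than exist globally ($2^n$); making that inequality come out is the heart of the argument.
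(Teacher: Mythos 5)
Your proof is correct and takes essentially the same route as the paper: both pass to the dual code $\C^\perp$, which is a partial spread of $n$-dimensional subspaces, and derive a counting contradiction from the assumption that some hyperplane ($V^\perp$ in the paper, your $H=P^\perp$) contains no spread element, using pairwise disjointness and $\dim(S\cap H)=n-1$. The only difference is bookkeeping: the paper counts the covered vectors \emph{outside} the hyperplane, getting $(2^{n+1}+1)2^{n-1}=2^{2n}+2^{n-1}>2^{2n}$, while you count the uncovered points \emph{inside} it ($3\cdot 2^{n-1}$ holes versus $2^n$ globally) — arithmetically equivalent formulations of the same contradiction.
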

\begin{proof}
Assume the contrary, i.e. that no subspace of $\C$ contains
a given vector $v \in \F_2^{2n+1}$. Let $V$ be the one-dimensional
subspace of $\F_2^n$ which contains $v$. Hence, $\C^\perp$
is an $(2n+1,n,n)_2$ code of size $2^{n+1}+1$ in which no
codeword is contained in the $(2n)$-dimensional subspace $V^\perp$.
Therefore, for each codeword $X \in \C^\perp$ we have $dim (X \cap V^\perp)=n-1$
and $| X \cap (\F_2^{2n+1} \setminus V^\perp ) | =2^{n-1}$.
Clearly, for each two codewords $X, Y \in \C^\perp$ we have
$\dim (X \cap Y) =0$. Therefore,
$$2^{2n} = | \F_2^{2n+1} \setminus V^\perp | \geq | \bigcup_{X \in \C^\perp} ( X \cap V^\perp )| = \sum_{X \in \C^\perp} 2^{n-1} = 2^{2n} +2^{n-1}~,$$
a contradiction.
\end{proof}

By Lemma~\ref{lem:half_dim}, since $\cA_2(2n+1,n,n+1)=\cA_2(2n+1,n,n)=2^{n+1}+1$
(see Theorem~\ref{thm:S_exactk_2k}),
and simple distance analysis (we also have to use
the orthogonal complement code) we have

\begin{theorem}
\label{thm:b_2n+1_2n}
$\cA_2^S(2n+1,2n) = 2^{n+1} +1$.
\end{theorem}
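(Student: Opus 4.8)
The plan is to prove the two bounds separately. For the lower bound I would invoke the constant-dimension code promised by $\cA_2(2n+1,n,n+1)=2^{n+1}+1$ (Theorem~\ref{thm:S_exactk_2k}): a $(2n+1,n,n+1)_2$ code of size $2^{n+1}+1$ has all codewords of dimension $n+1$ and Grassmannian distance $n$, so every pair of codewords is at subspace distance $2n$. Hence it is already a $(2n+1,2n)_2^S$ code and $\cA_2^S(2n+1,2n)\ge 2^{n+1}+1$; the maximal partial spread guaranteed by $\cA_2(2n+1,n,n)=2^{n+1}+1$ serves equally well.

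The bulk of the work is the matching upper bound. Let $\C$ be any $(2n+1,2n)_2^S$ code with at least two codewords. First I would run the distance analysis: for codewords $X,Y$, combining $d_S(X,Y)\ge 2n$ with $\dim(X\cap Y)\ge \dim X+\dim Y-(2n+1)$ yields $2n\le \dim X+\dim Y\le 2n+2$. This pins down the dimension pattern: at most one codeword has dimension $\le n-1$ and at most one has dimension $\ge n+2$ (two of either kind violate the sum bounds), while all remaining codewords have dimension $n$ or $n+1$. Since a codeword of dimension $\le n-1$ cannot coexist with one of dimension $n$, and dually a codeword of dimension $\ge n+2$ cannot coexist with one of dimension $n+1$, every exceptional codeword meets every codeword of the corresponding bulk layer trivially. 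I would then dispose of the exceptional codewords using Lemma~\ref{lem:half_dim}: a low exceptional codeword contains a one-dimensional subspace disjoint from the whole dimension-$(n+1)$ layer, so if that layer had size $2^{n+1}+1$ it would already cover every one-dimensional subspace, a contradiction; hence whenever an exceptional codeword is present the relevant layer has size at most $2^{n+1}$ and the total is at most $2^{n+1}+1$. The orthogonal complement $\C^\perp$, which has the same size and distance and exchanges dimensions $n\leftrightarrow n+1$, lets me treat the two sides symmetrically.

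This reduces everything to the core case $\C=\C_n\cup\C_{n+1}$, where $\C_n$ and $\C_{n+1}$ collect the codewords of dimensions $n$ and $n+1$. Reading the distance bound on each layer, $\C_n$ is a family of $n$-subspaces meeting pairwise in $\{0\}$, i.e.\ a partial spread, so $|\C_n|\le\cA_2(2n+1,n,n)=2^{n+1}+1$; and $\C_{n+1}$ is a $(2n+1,n,n+1)_2$ code, so $|\C_{n+1}|\le\cA_2(2n+1,n,n+1)=2^{n+1}+1$. Moreover every cross pair $X\in\C_n$, $Y\in\C_{n+1}$ satisfies $\dim(X\cap Y)=0$ together with $\dim X+\dim Y=2n+1$, so $X\oplus Y=\F_2^{2n+1}$: each dimension-$n$ codeword is a complement of each dimension-$(n+1)$ codeword. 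If either layer is full (size $2^{n+1}+1$) I would again apply Lemma~\ref{lem:half_dim}: a full $\C_{n+1}$ covers every one-dimensional subspace, so a nonzero $\C_n$ codeword, being disjoint from all of $\C_{n+1}$, cannot exist; thus a full layer forces the other empty, and the total is $2^{n+1}+1$.

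The remaining, and genuinely hardest, case is that both layers are nonempty and submaximal, where I must establish the sharp trade-off $|\C_n|+|\C_{n+1}|\le 2^{n+1}+1$. My plan is to linearize the configuration: fixing $Y_0\in\C_{n+1}$ and $X_0\in\C_n$ with $X_0\oplus Y_0=\F_2^{2n+1}$, I would write every $Y\in\C_{n+1}$ as the graph of a linear map $Y_0\to X_0$, turning $\C_{n+1}$ into a rank-metric code whose pairwise differences all have full rank $n$ (this recovers $|\C_{n+1}|\le 2^{n+1}$, the MRD bound). Each $X\in\C_n\setminus\{X_0\}$ is the graph of a bijection from a hyperplane of $Y_0$ onto $X_0$, and $X\cap Y=0$ says precisely that this bijection differs from each $Y$-map by an invertible map on that hyperplane. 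Extending each such $X$ to a full map $Y_0\to X_0$, the conditions that it differ with full rank from all of $\C_{n+1}$ and, via $X\cap X'=0$, from the other extended maps, are automatic on the hyperplane part; the only remaining requirement is to choose the single free value of each extension so that all \emph{pairwise} differences among the extended $\C_n$-maps also attain full rank. Making this global choice simultaneously consistent — so that $\C_n$ and $\C_{n+1}$ together form a single full-rank-distance code of size $|\C_n|+|\C_{n+1}|-1\le 2^{n+1}$ — is the main obstacle, and I expect it to require exploiting the MRD structure of $\C_{n+1}$ rather than a crude union bound. Once that injection is in place the bound $|\C_n|+|\C_{n+1}|\le 2^{n+1}+1$ follows, matching the lower bound and proving the theorem.
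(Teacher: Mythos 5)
Your skeleton matches the paper's intended argument step for step: the lower bound from the optimal $(2n+1,n,n+1)_2$ code, the distance analysis confining codewords to dimensions $n$ and $n+1$ plus at most one exceptional codeword on each side, the per-layer bounds $\cA_2(2n+1,n,n)=\cA_2(2n+1,n,n+1)=2^{n+1}+1$, Lemma~\ref{lem:half_dim} to show that a full layer excludes every other codeword, and the orthogonal complement for symmetry. The genuine gap is exactly the case you single out at the end: $\C_n$ and $\C_{n+1}$ both nonempty, neither full. Your extension strategy is not carried out there, and it is not a routine verification. You are right that $\dim(\hat{X}\cap Y)\le 1$ holds automatically for every extension $\hat{X}\supset X$ and every $Y\in\C_{n+1}$, but the pairwise requirement $\dim(\hat{X}_1\cap \hat{X}_2)\le 1$ is a real constraint: each $X\in\C_n$ has only $2^{n+1}-1$ admissible extensions, while each previously extended codeword can forbid as many as $2^n-1$ of them, so a greedy or union-bound choice breaks down already for $|\C_n|\ge 4$, whereas the configurations you must beat can have $|\C_n|$ as large as $2^{n+1}$. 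Whether the simultaneous extension always exists is precisely the unproved assertion; without it your argument yields only $|\C_n|+|\C_{n+1}|\le 2^{n+2}$ in this case, not $2^{n+1}+1$.

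This case can be closed by an elementary double count, which is what the paper's ``simple distance analysis'' amounts to. In the mixed case the nonzero vectors covered by distinct codewords of $\C_n$ are pairwise disjoint, and they are disjoint from all vectors covered by $\C_{n+1}$, while two codewords of $\C_{n+1}$ share at most one nonzero vector. Hence, with $m=|\C_n|$ and $M=|\C_{n+1}|$,
$$
m(2^n-1)+M(2^{n+1}-1)-\binom{M}{2}\;\le\; 2^{2n+1}-1~.
$$
Lemma~\ref{lem:half_dim} first rules out $M=2^{n+1}+1$, as you note. If $m+M\ge 2^{n+1}+2$, discard codewords of $\C_n$ until $m+M=2^{n+1}+2$, so that $1\le M\le 2^{n+1}$ and $m\ge 2$. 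The left-hand side is then a concave function of $M$, and at both endpoints $M=1$ and $M=2^{n+1}$ it equals $2^{2n+1}+2^n-2$, which exceeds $2^{2n+1}-1$ for all $n\ge 1$; by concavity it exceeds it on the whole interval, a contradiction. Substituting this count for your extension step, the rest of your proposal goes through and coincides with the paper's proof.
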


\begin{theorem}
$2^{n+2}+1 \leq \cA_2^S (2n+1,2n-1) \leq 2^{n+2} +2$.
\end{theorem}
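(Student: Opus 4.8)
The plan is to prove the two inequalities separately, both resting on a dimension analysis of the codewords together with the partial-spread values $\cA_2(2n+1,n,n)=\cA_2(2n+1,n,n+1)=2^{n+1}+1$ of Theorem~\ref{thm:S_exactk_2k} and the covering property of Lemma~\ref{lem:half_dim}. First I would pin down which dimensions a codeword may have. If $X,Y$ are codewords of dimensions $a,b$ then $d_S(X,Y)=a+b-2\dim(X\cap Y)\le a+b$, so two codewords of dimensions $\le n-1$ are always at distance $\le 2n-2<2n-1$; hence $\C$ has at most one codeword of dimension $\le n-1$ and, dually (applying $\perp$, which preserves $d_S$ and sends dimension $a$ to $2n+1-a$), at most one of dimension $\ge n+2$. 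Among the rest, two codewords of dimension $n$ force $\dim(X\cap Y)=0$ (a partial spread, at most $2^{n+1}+1$ of them), while two of dimension $n+1$ force $\dim(X\cap Y)=1$, i.e. they form a $(2n+1,n,n+1)_2$ Grassmannian code (again at most $2^{n+1}+1$). Write $D_n,D_{n+1}$ for the numbers of such codewords and $c_{\mathrm{lo}},c_{\mathrm{hi}}\in\{0,1\}$ for the indicators of a low/high extreme codeword; we may assume some codeword has dimension $n$ or $n+1$, since otherwise $|\C|\le 2$.

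For the upper bound the crux is that a full partial spread and an extreme cannot coexist. Suppose a low extreme $W$ is present, of dimension between $1$ and $n-1$ (the case $W=\{\mathbf 0\}$, and dually $W=\F_2^{2n+1}$, forces all other codewords to dimension $\ge 2n-1$ and makes $\C$ tiny). For every $(n+1)$-dimensional codeword $Y$ one gets $\dim W+(n+1)-2\dim(W\cap Y)\ge 2n-1$, hence $\dim(W\cap Y)=0$, so the nonzero vectors of $W$ lie in no $(n+1)$-codeword. By Lemma~\ref{lem:half_dim} a $(2n+1,n,n+1)_2$ code of the maximal size $2^{n+1}+1$ covers every one-dimensional subspace, so the existence of an uncovered point gives $D_{n+1}\le 2^{n+1}$. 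Applying the same argument to $\C^\perp$ (whose $(n+1)$-codewords are the duals of the $n$-codewords of $\C$) shows a high extreme forces $D_n\le 2^{n+1}$. Thus $D_n\le 2^{n+1}+1-c_{\mathrm{hi}}$ and $D_{n+1}\le 2^{n+1}+1-c_{\mathrm{lo}}$, and summing,
\[
|\C|=D_n+D_{n+1}+c_{\mathrm{lo}}+c_{\mathrm{hi}}\le 2^{n+2}+2 ,
\]
the extreme contributions cancelling exactly.

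For the lower bound I would realize $D_n+D_{n+1}=2^{n+2}+1$ with $D_n=2^{n+1}+1$ and $D_{n+1}=2^{n+1}$. Start from a maximal partial spread $\mathcal A$ of $n$-subspaces: intersecting a spread of $\F_2^{2n+2}$ into $(n+1)$-subspaces with a hyperplane $H\cong\F_2^{2n+1}$ yields exactly $2^{n+1}$ mutually disjoint $n$-subspaces $S_i\cap H$ together with the one spread element $S_0\subseteq H$, and adjoining an $n$-subspace $U\subseteq S_0$ completes a partial spread of size $2^{n+1}+1$. To this I would adjoin a family $\mathcal B$ of $2^{n+1}$ subspaces of dimension $n+1$ that pairwise meet in a point and meet every member of $\mathcal A$ in dimension at most $1$, so that $\C=\mathcal A\cup\mathcal B$ has minimum subspace distance exactly $2n-1$; the required pairwise and cross intersection bounds are then short linear-algebra estimates.

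The main obstacle is precisely the construction of this compatible $(n+1)$-family. Slicing a single spread is not enough: for each element one has the containment $S_i\cap H\subseteq \pi(S_i)$ between the $n$-slice and its projection along the hyperplane direction, so one cannot keep both for the same index, and a single spread therefore caps the total at $2^{n+1}+1$ rather than doubling it. Breaking past this cap requires selecting the $(n+1)$-subspaces in general position relative to $\mathcal A$ — simultaneously a dual partial spread and transversal-like against $\mathcal A$ — rather than reading them off the same spread; this is where the real work lies, and where the gap of $1$ between the two bounds originates, the extra unit being recoverable in special cases such as $n=2$, where $\cA_2^S(5,3)=18=2^{n+2}+2$ meets the upper bound.
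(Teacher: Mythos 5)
Your upper bound argument is correct and is exactly what the paper means by ``the same arguments as the ones for the proof of Theorem~\ref{thm:b_2n+1_2n}'': the dimension analysis, the two bounds $\cA_2(2n+1,n,n)=\cA_2(2n+1,n,n+1)=2^{n+1}+1$ from Theorem~\ref{thm:S_exactk_2k}, the covering property of Lemma~\ref{lem:half_dim} to rule out coexistence of a full-size middle class with an extreme codeword, and duality. That half of your proposal is fine, and in fact more detailed than the paper's one-sentence treatment.

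The lower bound, however, has a genuine gap, and you name it yourself: the family $\mathcal{B}$ of $2^{n+1}$ subspaces of dimension $n+1$, pairwise intersecting in dimension one and meeting each member of $\mathcal{A}$ in dimension at most one, is never constructed; you call this ``the main obstacle'' and ``where the real work lies''. But exhibiting exactly such a mixed-dimension configuration \emph{is} the content of the lower bound --- without it you have only a partial spread plus a constant number of extra codewords, roughly $2^{n+1}+2$, not $2^{n+2}+1$. The paper closes this gap with a puncturing construction rather than a general-position selection: start from the lifted MRD code $\CMRD$, a $(2n+2,n,n+1)_2$ code with $2^{2n+2}$ codewords in which every nonzero vector outside the excluded $(n+1)$-dimensional subspace lies in exactly $2^{n+1}$ codewords; an averaging count over the $2^{2n+2}-2^{n+1}$ hyperplanes of $\F_2^{2n+2}$ that do not contain the excluded subspace produces a hyperplane $Q$ containing $2^{n+1}$ codewords. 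The punctured code $\C'_{Q,v}$, for a suitable $v\notin Q$, then consists of those $2^{n+1}$ codewords punctured to $(n+1)$-dimensional subspaces of $\F_2^{2n+1}$, together with the $2^{n+1}$ subspaces $\Delta_\tau(X\cap Q)$ of dimension $n$ arising from the codewords $X$ that contain $v$; the MRD (subspace transversal design) structure automatically supplies the intersection conditions you could not arrange by hand, and puncturing costs only one unit of subspace distance, leaving $2n-1$. Adding one further codeword --- an $n$- or $(n+1)$-dimensional subspace containing all vectors that begin with $n$ zeroes --- gives $2^{n+2}+1$. So your diagnosis of where the difficulty sits is accurate, but the proposal stops precisely where the paper's proof does its work.
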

\begin{proof}
The upper bound is implied by the same arguments as
the ones for
the proof of Theorem~\ref{thm:b_2n+1_2n}.

The lower bound is derived by puncturing a $(2n+2,n,n+1)_2$
$\CMRD$ code $\C$ and adding one codeword to the punctured code.

The size of $\C$ is $2^{2n+2}$ and all the
$2^{n+1}-1$ nonzero vectors of $\F_2^{2n+2}$
which start with $n+1$ \emph{zeroes} are not contained
in any codeword of $\C$. Each other
nonzero vector of $\F_2^{2n+2}$
is contained in the same number of codeword (see~\cite{EtSi13}).
Each codeword contains $2^{n+1}-1$ nonzero vectors of $\F_2^{2n+1}$.
Hence, each of these $2^{2n+2}-2^{n+1}$ vectors is contained
in $\frac{2^{2n+2}(2^{n+1}-1)}{2^{2n+2}-2^{n+1}}=2^{n+1}$ codewords of $\C$.

Now, we compute the number of $(2n+1)$-dimensional subspaces
of $\F_2^{2n+2}$ which contain a given $(n+1)$-dimensional
subspace $X$ of $\F_2^{n+1}$. The number of such
$(2n+1)$-dimensional subspaces is
$$\frac{(2^{2n+2}-2^{n+1})(2^{2n+2}-2^{n+2}) ~ \cdots ~ (2^{2n+2}-2^{2n})}
{(2^{2n+1}-2^{n+1})(2^{2n+1}-2^{n+2}) ~ \cdots ~ (2^{2n+1}-2^{2n})}
=\frac{(2^{2n+2}-2^{n+1})2^{n-1}}{2^{2n+1}-2^{2n}} =2^{n+1}-1~.$$

Hence, the number of $(2n+1)$-dimensional subspaces which do not
contain all the
$2^{n+1}-1$ nonzero vectors of $\F_2^{2n+2}$
which start with $n+1$ \emph{zeroes} is
$2^{2n+2}-1 - (2^{n+1} -1) = 2^{2n+2} -2^{n+1}$.

Therefore, there exists at least one $(2n+1)$-dimensional
subspace $Q$ of $\F_2^{2n+2}$ which
contains $\frac{2^{2n+2}(2^{n+1}-1)}{2^{2n+2} -2^{n+1}} =2^{n+1}$
codewords (~$(n+1)$-dimensional subspaces) of $\C$.

Thus, the punctured code $\C'_{Q,v}$, where $v$ is any nonzero vector
not in $Q$ has size $2^{n+2}$. To $\C'_{Q,v}$ we can add either any
$n$-dimensional subspace which contains all the $2^n-1$
vectors which start with $n$ \emph{zeroes} or any $(n+1)$-dimensional
subspace which contains  all these vectors, without damaging the minimum
subspace distance.

Thus, $2^{n+2}+1 \leq \cA_2^S (2n+1,2n-1) \leq 2^{n+2} +2$.
\end{proof}

Note, that for any $k$ for which
$\cA_2(2n+2,n,n+1) \geq (2n+3)^2 - 2^n$
(see Research problem~\ref{pr:almost_Steiner})
we will have $\cA_2^S (2n+1,2n-1) \leq 2^{n+2} +2$.

\begin{problem}
Find good bounds on the size of the $(n,d)_q^S$ code,
$d=2 \delta +1$, which contains
subspaces only with dimension $k$ and $k-1$,
where $\delta +2 \leq k \leq \lceil \frac{n}{2} \rceil$.
\end{problem}

\begin{problem}
Find good bounds on the size of the
$(2k,d)_q^S$ code, $d$ odd, which contains subspaces
only with dimensions $k-1$, $k$, and $k+1$.
\end{problem}

\begin{problem}
Find a $(7,5)_2^S$ code with 34 codewords or prove that
$\cA_2^S(7,5) < 34$.
\end{problem}

\begin{problem}
Determine whether $\cA_2^S (2n+1,2n-1) = 2^{n+2} +1$
or $\cA_2^S (2n+1,2n-1) = 2^{n+2} +2$.
\end{problem}

\begin{problem}
What is the size of the largest equidistant $(n,d)_q^S$ code?
\end{problem}

\begin{problem}
Is the size of the
largest equidistant $(n,d)_q^S$ code is greater by one from the
size of the largest equidistant $(n,\lfloor \frac{n}{2},d)_q$ code?
\end{problem}

Another question of interest is the existence question of perfect
subspace codes with the subspace distance. It was proved in~\cite{EtVa11}
that such codes do not exist. The
Johnson space and the Grassmann space admit
\emph{diameter-perfect} codes~\cite{AAK01}. All such
diameter-perfect codes are optimal for their parameters.
Unfortunately, the definition of diameter-perfect codes does not
extend to the projective space $\cP_q(n)$, since the size of a
sphere~in $\cP_q(n)$ depends on its center.

\begin{problem}
Can one define another type
of perfect codes in the projective space, so that certain optimal
codes become ``perfect'' under this definition?
\end{problem}

\section{The Injection Metric}
\label{sec:injection}

An $(n,d)_q^I$ code is a subspace code in $\cP_q(n)$
with minimum injection distance $d$.
The injection distance is the one which is more useful,
than the subspace distance, from a practical
point of view~\cite{SiKs09}.
Let $\cA_q^I (n,d)$ denote the maximum number of codewords
in an $(n,d)_q^I$ code. The injection distance is the $q$-analog
of the asymmetric distance~\cite{KhKs09}.
Also, the related graph $G(\cP_q^I(n))$ is not
distance regular which makes the analysis of some
bounds more difficult as in the case of the
subspace distance and the related
graph $G(\cP_q^S(n))$. Similarly to the Grassmannian codes
and the subspace codes with the subspace distance,
we can generate a large code for the injection
metric by using the Multilevel Construction, where our skeleton code is
a binary code with the asymmetric distance~\cite{KhKs09}. Instead of
Hamming distance $d$ for the skeleton code in the Multilevel Construction,
we use asymmetric distance $\delta$. Large asymmetric codes can be found
for example in~\cite{Etz91,EtOs98,Shi02}. Puncturing is also useful to obtain good codes
with the injection distance. But, puncturing constant dimension codes
yields the same subspace codes with the injection distance as the ones
with the subspace distance. Therefore, puncturing might be useful,
to obtain good codes with the injection distance,
when we start from a subspace code with the injection distance
which is larger then the related code with the subspace distance.
Another option is to puncture a Grassmannian code and to the
punctured code, codewords with permitted dimensions (by the
required injection distance) are added. We note that the permitted
dimensions are more dense than in the case of a similar subspace
code with the subspace distance. This will be further
demonstrated and explained at
the end of this section. Also, as in the case of the subspace distance,
classic lower bounds such as the Gilbert-Varshamov bound were developed
in~\cite{GaYa10,KhKs09}. As for upper bounds, a linear programming
was developed in~\cite{AhAy}, and in~\cite{BPV13} there is a modification
of the linear programming given in~\cite{EtVa11}. A semidefinite
programming bound was also given in~\cite{BPV13} and it was shown that the
bounds obtained are very similar to those obtained by linear programming.

\begin{problem}
Find a counterpart Sphere packing bound on $\cA_q^I(n,d)$.
\end{problem}

\begin{problem}
\label{pr:tab_inj}
Compile a table with lower and upper bounds on $\cA_q^I (n,d)$ for small $q$, $n$, and $d$.
\end{problem}

\begin{problem}
Find a new method to construct large subspace codes with the injection distance,
which differ from the ones used for subspace codes with the subspace distance.
\end{problem}

\begin{problem}
Make a comprehensive comparison between the injection metric and the subspace metric,
beyond consequences related to the asymmetric distance and the
Hamming distance, respectively.
\end{problem}

For very small values of $n$ related to a given
injection distance $d$, the tables of Research problems~\ref{pr:tab_sub}
and~\ref{pr:tab_inj} are very similar. For example, while
$\cA_2^S (4,4)=5$, the related bound for the injection distance is
$\cA_2^I(4,2)=7$ since we can add the null space and $\F_2^4$
to five subspaces of dimension two (since subspace codes with injection
distance $d$ can be denser than their related subspace
codes with subspace distance $2d-1$). But, in some cases,
we can find much larger
codes with nice structure for the injection distance.
As an example we consider $n=6$ and injection distance 2.
Let $\alpha$ be a root of $x^6+x+1$,
and use this primitive polynomial to generate $\F_{64}$. Consider
the code $\C$ in $\cP_2(6)$ which consists of all the cyclic
shifts of $$\{ {\bf 0} , \alpha^0 , \alpha^{21} , \alpha^{42} \}~,$$
$$\{ {\bf 0} , \alpha^0 , \alpha^1 , \alpha^4 , \alpha^6 , \alpha^{16} ,
\alpha^{24}, \alpha^{33} \}~,$$ and
$$\{ {\bf 0} , \alpha^0 , \alpha^1 , \alpha^6 , \alpha^8 , \alpha^{18} ,
\alpha^{21}, \alpha^{22}, \alpha^{27}, \alpha^{29} , \alpha^{39},
\alpha^{42}, \alpha^{43}, \alpha^{48}, \alpha^{50} , \alpha^{60} \}~.$$
To the 105 subspaces obtained in this process
we add the null space and $\F_2^6$ to obtain
a cyclic $(6,2)_2^I$ code of size 107 which implies that $\cA_2^I(6,2) \geq 107$,
which is much better than what we can obtain for a related $(6,3)_2^S$ code.
The upper bounds in both cases are very similar,
since by linear programming we have $\cA_2^I(6,2) \leq 125$,
while $\cA_2^S(6,2) \leq 123$.

\begin{problem}
Find good bounds on the size of a
$(2k,d)_q^I$ code, $d$ odd, which contains subspaces
only with dimensions $k-1$, $k$, and $k+1$. Compare
the bounds with the related bounds on the
large size of a $(2k,2d-1)_q^S$ code.
\end{problem}

\begin{problem}
Find large cyclic subspace codes with the injection distance.
\end{problem}

\section{$q$-Analog of Steiner Systems}
\label{sec:Steiner}

A \emph{Steiner system} $S(t,k,n)$ is a collection $S$ of $k$-subsets
from an $n$-set $\cN$ such that each $t$-subset of $\cN$ is contained
in exactly one element of $S$. Steiner systems were subject to an
extensive research in combinatorial designs~\cite{CoDi07}.
A Steiner system is
also equivalent to an optimal constant weight codes in
the Hamming scheme.

Cameron~\cite{Cam74,Cam74a} and Delsarte~\cite{Del76} have
extended the notions of block design and Steiner systems
to vector spaces.
A \emph{Steiner structure} (\emph{$q$-Steiner system})
$\dS_q(t,k,n)$ is a collection $\dS$ of
elements from $\cG_q(n,k)$ (called \emph{blocks}) such that each element from
$\cG_q(n,t)$ is contained in exactly one block of $\dS$.
A Steiner structure $\dS_q(t,k,n)$ is a constant dimension
code which attains the bound of $\cA_q(n,k-t+1,k)$.
Similarly, to Steiner systems, simple necessary divisibility conditions
for the existence of a given Steiner structure are developed~\cite{ScEt}.

\begin{theorem}
\label{thm:derived}
If a Steiner structure $\dS_q (t,k,n)$ exists then
for each $i$, $1 \leq i \leq t-1$, a Steiner structure
$\dS_q(t-i,k-i,n-i)$ exists.
\end{theorem}

\begin{cor}
If a Steiner structure $\dS_q(t,k,n)$ exists then for all $0 \leq i \leq t-1$.
$$
\frac{\sbinomq{n-i}{t-i}}{\sbinomq{k-i}{t-i}}
$$
must be integers.
\end{cor}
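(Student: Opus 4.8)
The plan is to reduce the corollary to a single counting identity---namely, that the number of blocks in any $q$-Steiner system is forced to equal a ratio of Gaussian coefficients---and then to apply this identity to the family of derived structures supplied by Theorem~\ref{thm:derived}.

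First I would establish the block-count identity. Fix a Steiner structure $\dS_q(s,m,N)$ with block set $\dS$. Count incident pairs $(T,B)$, where $B \in \dS$ is a block and $T \in \cG_q(N,s)$ is an $s$-subspace with $T \subseteq B$. On one hand, each block $B$ is an $m$-subspace and hence contains exactly $\sbinomq{m}{s}$ subspaces of dimension $s$, so the number of such pairs is $|\dS| \cdot \sbinomq{m}{s}$. On the other hand, by the defining property of a Steiner structure each $s$-subspace of $\F_q^N$ lies in exactly one block, so the number of pairs equals the total number of $s$-subspaces, which is $\sbinomq{N}{s}$. Equating the two counts gives $|\dS| = \sbinomq{N}{s}/\sbinomq{m}{s}$, and since $|\dS|$ is a nonnegative integer, this ratio must be an integer.

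Next I would apply the identity repeatedly. Taking $(s,m,N)=(t,k,n)$ handles the case $i=0$: the number of blocks of $\dS_q(t,k,n)$ itself equals $\sbinomq{n}{t}/\sbinomq{k}{t}$, which is therefore an integer. For $1 \le i \le t-1$, Theorem~\ref{thm:derived} guarantees the existence of the derived structure $\dS_q(t-i,k-i,n-i)$; applying the identity with $(s,m,N)=(t-i,k-i,n-i)$ shows that its block count $\sbinomq{n-i}{t-i}/\sbinomq{k-i}{t-i}$ is a nonnegative integer. This is exactly the claimed quantity, so the corollary follows for all $0 \le i \le t-1$.

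There is essentially no hard step here: the content is entirely in Theorem~\ref{thm:derived}, which is assumed, and the remaining work is the elementary double count. The only thing to check carefully is that the parameters of the derived system line up so that its block count is precisely $\sbinomq{n-i}{t-i}/\sbinomq{k-i}{t-i}$, which is immediate upon substituting $s=t-i$, $m=k-i$, $N=n-i$. One may alternatively note that the same ratio is the right-hand side of the packing bound in Theorem~\ref{thm:packing} with $\delta = k-t+1$ (so that $k-\delta+1 = t$), whence integrality is also visible as the condition that the relevant Steiner structure meets that bound with equality by an exact count rather than a floored one.
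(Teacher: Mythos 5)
Your proof is correct and takes essentially the same route the paper intends: the corollary is stated as an immediate consequence of Theorem~\ref{thm:derived}, obtained by applying the standard block-count identity $|\dS| = \sbinomq{N}{s}/\sbinomq{m}{s}$ (via the double count of incident pairs, using that each $s$-subspace lies in exactly one block) to the original structure for $i=0$ and to each derived structure $\dS_q(t-i,k-i,n-i)$ for $1 \le i \le t-1$. Nothing is missing.
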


Steiner structures and Steiner systems are highly related.
In~\cite{EtVa11a,ScEt} there are some constructions of Steiner systems
derived from Steiner structures. Further research on Steiner structures
seems to be fascinating, but also extremely difficult. We list some interesting,
but probably very difficult research problems.

\begin{problem}
Let $\dS$ be a Steiner structure $\dS_q(t,k,n)$.
Find new sets of parameters $t' < k' < n'$
for which there exists a Steiner system $S(t',k',n')$
derived from $\dS$.
\end{problem}

\begin{problem}
Are there more Steiner structures embedded in a Steiner structure
$\dS_q(t,k,n)$, except from the ones implied by Theorem~\ref{thm:derived}.
\end{problem}

Until recently, the only known Steiner structures $\dS_q(t,k,n)$ were either trivial or
for $t=1$, where such structures exist if and only if $k$ divides $n$.
These are called spreads and they will be discussed in the next section.
Thomas~\cite{Tho96} showed that certain kind of Steiner
structures $\dS_2(2,3,7)$ cannot exist. Metsch~\cite{Met99} conjectured
that nontrivial Steiner structures with $t \geq 2$ do not exist.
Steiner structures appear also in connection of diameter perfect
codes in the Grassmann scheme. It was proved in~\cite{AAK01}
that the only diameter perfect codes in the Grassamnn scheme
are the $q$-Steiner systems.
The following theorem given in~\cite{EtVa11a} has given more
indication that finding Steiner structures with $t \geq 2$ would
be a very difficult task.

\begin{theorem}
\label{thm:StoS}
If there exists a Steiner structure $\dS_2 (2,k,n)$ then there
exists a Steiner system $S(3,2^k,2^n)$.
\end{theorem}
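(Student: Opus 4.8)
The plan is to realize the Steiner system $S(3,2^k,2^n)$ concretely on the point set $\F_2^n$, which has exactly $2^n$ elements, and to take as blocks the affine translates of the blocks of the given $q$-Steiner system. Writing $\dS$ for the collection of $k$-dimensional blocks of $\dS_2(2,k,n)$, I would define
$$
\cS = \{ B + v ~:~ B \in \dS,~ v \in \F_2^n \},
$$
the set of all cosets of all blocks. Each coset $B+v$ is a subset of $\F_2^n$ of size $2^k$, as required, and distinct cosets yield distinct subsets: if $B+v = B'+w$ as point sets then the two affine subspaces share a direction, forcing $B = B'$. It then remains to verify that every $3$-subset of $\F_2^n$ lies in exactly one member of $\cS$.

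The key observation, and the place where characteristic two is essential, is that a $3$-subset $\{x,y,z\}\subseteq\F_2^n$ canonically determines a $2$-dimensional linear subspace. Indeed the three pairwise sums $x+y$, $x+z$, $y+z$ are nonzero (since $x,y,z$ are distinct), pairwise distinct, and satisfy $(x+y)+(x+z)+(y+z)={\bf 0}$; hence together with ${\bf 0}$ they form a $2$-dimensional subspace $T=\{{\bf 0},\,x+y,\,x+z,\,y+z\}$. I would then establish the containment equivalence: for a block $B\in\dS$, some coset $B+v$ contains $\{x,y,z\}$ if and only if $T\subseteq B$, and in that case the coset is \emph{unique}, namely the one containing $x$. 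The forward direction is immediate because differences of elements of a common coset of $B$ lie in $B$; the reverse direction follows because if $T\subseteq B$ and $x\in B+v$ then $y = x+(x+y)$ and $z = x+(x+z)$ also lie in $B+v$.

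With this equivalence in hand the counting is automatic: the number of members of $\cS$ containing $\{x,y,z\}$ equals the number of blocks $B\in\dS$ with $T\subseteq B$. Since $\dS$ is a $q$-Steiner system $\dS_2(2,k,n)$, the $2$-dimensional subspace $T$ is contained in exactly one block, so $\{x,y,z\}$ lies in exactly one coset. This is precisely the defining property of $S(3,2^k,2^n)$, completing the argument.

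I expect the main obstacle to be pinning down the correspondence between $3$-subsets and $2$-dimensional subspaces correctly --- in particular checking that the three pairwise sums are genuinely distinct and nonzero and span a plane, since this is exactly the characteristic-two phenomenon that makes the parameters $t=2\mapsto t'=3$ and $k\mapsto 2^k$ line up, and verifying that for each block through $T$ the witnessing coset is unique rather than merely existent. Once that dictionary is set up, the passage through the $q$-Steiner property is a one-line counting step.
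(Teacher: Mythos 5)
Your proof is correct. The paper itself states this theorem without proof, citing~\cite{EtVa11a}, and your argument --- taking as blocks all cosets of the blocks of $\dS_2(2,k,n)$ and using the characteristic-two correspondence between a $3$-subset $\{x,y,z\}$ and the $2$-dimensional subspace $\{{\bf 0},\,x+y,\,x+z,\,y+z\}$, so that the cosets through $\{x,y,z\}$ are in bijection with the blocks containing that subspace --- is precisely the construction used in that reference, so you have reproduced the standard proof.
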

As a consequence of Theorem~\ref{thm:StoS},
we have that if there exists a Steiner structure
$\dS_2(2,3,7)$ then there exists a Steiner system $S(3,8,128)$.
The existence of a Steiner system $S(3,8,128)$ is an open problem,
which might strengthen the conjecture that a Steiner structure
$\dS_2(2,3,7)$ does not exist.

Recently, the first Steiner structure $\dS_q(t,k,n)$
with $t \geq 2$ was found.
This is a Steiner structure $\dS_2(2,3,13)$ which have a large
automorphism group~\cite{BEOVW}. We will describe this group in terms of two mappings
and an equivalence relation defined on  $k$-dimensional subspaces.
Let $\alpha$ be a primitive element in $\F_{q^n}$ and define
the following two mappings

The \emph{Frobenius mapping} $\Upsilon_\ell$,
$0 \leq \ell \leq n-1$, $\Upsilon_\ell : \F_{q^n} \setminus \{ 0 \}
\longrightarrow \F_{q^n} \setminus \{ 0 \}$
is defined by $\Upsilon_\ell (x) \deff x^{q^\ell}$ for each
$x \in \F_{q^n} \setminus \{ 0 \}$.

The \emph{cyclic shift mapping} $\Phi_j$, $0 \leq j \leq q^n-2$,
$\Phi_j : \F_{q^n} \setminus \{ 0 \}
\longrightarrow \F_{q^n} \setminus \{ 0 \}$ is defined by
$\Phi_j (\alpha^i) \deff \alpha^{i+j}$, for each $0 \leq i \leq q^n-2$.

The two types of mappings $\Upsilon_\ell$ and $\Phi_j$ can be applied
on a subset or a subspace, by applying the mapping on each
element of the subset or the subspace, respectively. Formally,
given two integers $0 \leq \ell \leq n-1$ and $0 \leq j \leq q^n-2$,
$$
\Upsilon_\ell \{ x_1 , x_2 , \ldots , x_r \} \deff
\{ \Upsilon_\ell (x_1), \Upsilon_\ell (x_2),\ldots,\Upsilon_\ell(x_r) \}~,
$$
$$
\Phi_j \{ x_1 , x_2 , \ldots , x_r \} \deff \{ \Phi_j (x_1), \Phi_j
(x_2),\ldots,\Phi_j (x_r) \}.
$$

We define the following equivalence relation $\tilde{E}$ on the
subspaces of $\cG_q(n,k)$.
$$
(X,Y) \in \tilde{E} ~~~ \text{if} ~~~ \text{there~exist~two~integers},~\ell_1,~j_1,~ \text{such~that}~ Y= \Phi_{j_1} (\Upsilon_{\ell_1} (X))~.
$$

In~\cite{BEOVW} Steiner structures $\dS_q(t,k,n)$, in
which a $k$-dimensional
subspace $X$ is contained in the system if all its equivalence class,
under the equivalence relation $\tilde{E}$,
is contained in the structure, are considered.
Such structures can be described with a relatively small
number of representatives. The Steiner structure
$\dS_2(2,3,13)$ constructed in~\cite{BEOVW} was constructed
in this way and only 15 representatives describe the whole system.
The knowledge on $q$-Steiner systems is very small and there are many more
research problems for future research. As we noted before, these problems are
probably extremely difficult. The next few problems to consider are as follows.

\begin{problem}
Prove or disprove that a $q$-Steiner system $\dS_2(2,3,7)$ exists.
\end{problem}

\begin{problem}
Construct more $q$-Steiner systems $\dS_2(2,3,n)$, where $n \equiv 1~(\text{mod}~6)$
is a prime.
\end{problem}

\begin{problem}
By using the equivalence relation $\tilde{E}$,
develop the theory and find an example of a $q$-Steiner system $\dS_2(2,4,n)$
for some $n$; and $\dS_q(2,3,n)$ for some $q>2$ and some $n$.
\end{problem}

\begin{problem}
Find parameters for which the necessary conditions for
the existence conditions for a $q$-Steiner systems
$\dS_q(t,k,n)$ are satisfied, but the systems do not exist.
\end{problem}

\begin{problem}
Find new necessary conditions for the existence of
$q$-Steiner systems.
\end{problem}

We conclude this section with a method that can lead for the construction
of a $q$-Steiner system $\dS_2(2,3,7)$ or to a proof for its nonexistence.
This method will be called the \emph{projections method} and it is described
for a general $q$-Steiner system $\dS_q(t,k,n)$.

We start by assuming that $\dS$ is a $q$-Steiner system
$\dS_q(t,k,n)$. Each $r$-dimensional subspace~$Z$
of $\F_q^n$ will be represented
by an $n \times \frac{q^r-1}{r-1}$ matrix whose columns
represent the nonzero vectors contained in~$Z$, where the leading
nonzero element in a vector is an \emph{one}. For a given~$\rho$,
$1 \leq \rho \leq n$, we construct a system $\dS_\rho$, which consists
of the projections of the first~$\rho$ rows of each subspace of $\dS$.
Each $\rho \times \frac{q^r-1}{r-1}$ matrix formed in this way represents
an $\ell$-dimensional subspace of $\F_q^\rho$
for some $\ell$, $0 \leq \ell \leq \min \{ \rho,k \}$.
Similarly, the projection of the first $\rho$ rows in
a $t$-dimensional subspace is an $\ell$-dimensional subspace
of $\F_q^\rho$ for some $\ell$, $0 \leq \ell \leq \min \{ \rho,t \}$.
For each $i$, $0 \leq i \leq \min \{ \rho ,k \}$, we have
$\sbinomq{\rho}{i}$ variables. A variable $a_Y$ for each
$i$-dimensional subspace $Y$ of $\F_q^\rho$. The value
of a variable is the number of times the related $i$-dimensional
subspace appears in the system $\dS_\rho$.
For each $i$, $0 \leq i \leq \min \{ \rho,t \}$,
we generate $\sbinomq{\rho}{i}$ equations. An
equation for each $i$-dimensional subspace of $\F_q^\rho$.
Let $X$ be such an $i$-dimensional subspace of $\F_q^\rho$.
Let $\delta_X$ the number of distinct ways to complete $X$
into a $t$-dimensional subspace of~$\F_q^n$.
Let $\Gamma_{X,Y}$ be the number of times that $X$ is contained in
an $\ell$-dimensional subspace $Y$ of~$\F_q^\rho$ such
that $i \leq \ell$ (taking into account
that $X$ is completed into a $t$-dimensional subspace and~$Y$
is completed into a $k$-dimensional subspace). For each such
$i$-dimensional subspace $X$ we generate the equation

$$
\delta_X = \sum_{\substack{{Y \in \F_q^\rho} \\ {\dim X \leq \dim Y}}} \Gamma_{X,Y} a_Y  ~.
$$
If the system of equations does not have a nonnegative integer solution then
a $q$-Steiner system $\dS_q(t,k,n)$ does not exist.

\begin{example}
Let $n=7$, $k=3$, $t=2$, and $\rho =2$. Let $a_0$ be the variable for the
null space $X_0$. Let $a_1$, $a_2$, $a_3$ be the
variable for the one-dimensional
subspaces of $\F_2^2$, $X_1 = \left\{ \begin{array}{c} 0\\ 1
\end{array} \right\}$, $X_2 = \left\{ \begin{array}{c} 1\\ 0
\end{array} \right\}$, $X_3 = \left\{ \begin{array}{c} 1\\ 1
\end{array} \right\}$, respectively. Let $a_4$ be the
variable for the two-dimensional
subspace $X_4 = \left\{ \begin{array}{ccc} 0 & 1 & 1\\
1 & 0 & 1
\end{array} \right\}$ of $\F_2^2$. One can easily verify that $\delta_{X_0} =
\sbinomtwo{5}{2} =155$, $\delta_{X_1} = \delta_{X_2} =\delta_{X_3}
= \binom{32}{2} = 496$, and $\delta_{X_4} = 32^2 = 1024$. Note,
that $\sbinomtwo{5}{2} + 3 \binom{32}{2} + 32^2 = \sbinomtwo{7}{2}
= 2667$. Now, let $Y_i = X_i$ for $0 \leq i \leq 4$.
We now have $\Gamma_{X_0,Y_0} =7$, $\Gamma_{X_0,Y_i}=1$,
$\Gamma_{X_i,Y_i} = 6$, $\Gamma_{X_i,Y_4}=1$, for $i=1,2,3$, and
$\Gamma_{X_4,Y_4} =4$. For any other $i$ and $j$ we have
$\Gamma_{X_i,Y_j}=0$.

Therefore, we have the following 5 equations:
\begin{enumerate}
\item $155= 7 a_0 + a_1 + a_2 + a_3$.

\item $496 = 6a_1 + a_4$.

\item $496 = 6a_2 + a_4$.

\item $496 = 6a_3 + a_4$.

\item $1024 = 4a_4$.
\end{enumerate}
This system of equations has exactly one solution, $a_0=5$,
$a_1=a_2=a_3=40$, and $a_4=256$.
\end{example}

If $n=7$, $k=3$, $t=2$, and $\rho =4$ then there are
$\sbinomtwo{4}{0} + \sbinomtwo{4}{1} + \sbinomtwo{4}{2} + \sbinomtwo{4}{3} =66$
variables and $\sbinomtwo{4}{0} + \sbinomtwo{4}{1} + \sbinomtwo{4}{2} =51$ equations.
From the 16 variables which represents the
null space and the 15 one-dimensional
subspaces exactly one variable should be equal \emph{one}. If we set one of these
16 variables to \emph{one} then there is a unique solution to the set of equations.
For example, if the variable which represents the null subspace
is equal to \emph{one} then each variable out of the 35 variables which represent
the 35 two-dimensional subspaces is equal to 4 and
each variable out of the 15 variables which represent
the 15 three-dimensional subspaces is equal to 16.

If $n=7$, $k=3$, $t=2$, and $\rho =5$ then there are
$\sbinomtwo{5}{0} + \sbinomtwo{5}{1} + \sbinomtwo{5}{2} + \sbinomtwo{5}{3} =342$
variables and $\sbinomtwo{5}{0} + \sbinomtwo{5}{1} + \sbinomtwo{5}{2} =187$ equations.
A computer program found a large number of solutions to these equations.

If $n=7$, $k=3$, $t=2$, and $\rho =6$ then there are
$\sbinomtwo{6}{0} + \sbinomtwo{6}{1} + \sbinomtwo{6}{2} + \sbinomtwo{6}{3} =2110$
variables and $\sbinomtwo{6}{0} + \sbinomtwo{6}{1} + \sbinomtwo{6}{2} =715$ equations.
We were not able to handle the system of equations, but this system
might be the key to settle the existence question of a $q$-Steiner system
$\dS_2(2,3,7)$.

\begin{problem}
Finish the projections method to settle the existence problem
of a $q$-Steiner system $\dS_2(2,3,7)$.
\end{problem}

\begin{problem}
Find more necessary conditions for the existence of a $q$-Steiner system
$\dS_q(t,k,n)$ by using the projections method.
\end{problem}

\begin{problem}
Apply the projections method on various parameters for
$q$-Steiner systems $\dS_q(t,k,n)$ either to exclude the existence of
some systems or to find more insight on the existence question.
\end{problem}

\section{Spreads and Partial Spreads}
\label{sec:spreads}

Two subspaces $X$, $Y$ of $\cP_q(n)$ are called \emph{disjoint} if
their intersection is the null space, i.e. $X \cap Y = \{ {\bf 0} \}$.
A \emph{spread} $\dS$ in $\cG_q(n,k)$ is a set of pairwise disjoint
subspaces of $\cG_q(n,k)$ in which each one-dimensional subspace
of $\cP_q(n)$ is a subspace of exactly one element of $\dS$.
A~\emph{partial spread} $\dS$ in $\cG_q(n,k)$ is a set of pairwise disjoint
subspaces of $\cG_q(n,k)$ in which each one-dimensional subspace
of $\cP_q(n)$ is a subspacce of at most one element of $\dS$.
A spread in the Grassmannian $\cG_q(n,k)$ is clearly
a constant dimension code with $\frac{q^n-1}{q^k-1}$
codewords and minimum Grassmannian distance $k$.
It attains the bound on $\cA_q(n,k,k)$ and it is
also a $q$-Steiner system $\dS_q(1,k,n)$. Any $(n,k,k)_q$
code is a \emph{partial spread}. A spread is also a well-known
and important concept in projective geometry. The projective
geometry PG($n,q$) consists of $\frac{q^{n+1}-1}{q-1}$
points and $\frac{(q^{n+1}-1)(q^n-1)}{(q^2-1)(q-1)}$ lines.
The points are represented by a set of
nonzero elements from $\F_q^{n+1}$, of maximum size,
in which each two elements are linearly independent.
Each element $x$ of these $\frac{q^{n+1}-1}{q-1}$
elements represents $q-1$
elements of $\F_q^{n+1}$ which are the multiples of $x$ by the nonzero
elements of $\F_q$. A line in PG($n,q$) consists of $q+1$ points.
Given two distinct points $x$ and $y$, there
is exactly one line which contains these two points.
This line contains $x$ and $y$ and the $q-1$ points
of the form $\gamma x +y$, where $\gamma \in \F_q \setminus \{ 0 \}$.
A point is a 0-subspace in PG($n,q$), a line is a 1-subspace in PG($n,q$),
and a $k$-subspace is constructed by taking a
$(k-1)$-subspace $Y$ and a point $x$ not on $Y$ and all points that
are constructed by a linear combination of $x$ with any set of points
from $Y$. We note that there is a difference of one in the definition between
the dimensions of subspaces in projective geometry and subspaces
in the projective space (or the Grassmannian). In the
sequel, we will continue to
use the notation used for the Grassmannian. A $k$-spread
in PG($n,q$) is a set of pairwise disjoint $k$-subspaces
of PG($n,q$) for which any point of PG($n,q$) is contained
in exactly one $k$-subspace. Such a $k$-spread is a spread
in $\cG_q(n+1,k+1)$. Spreads and partial spreads are basic concepts
which were very well studied in projective geometry.
We will not go into all the details, except for some concepts related to
spreads which will be discussed in the sections which follow.
We already mentioned that a spread in $\cG_q(n,k)$, where $k$
divides $n$ is a Steiner structure $\dS_q(1,k,n)$.
The value of $\cA_q(n,k,k)$ is of a very special interest.
This value has a special interest since $(n,k,k)_q$ codes have
applications as byte-correcting codes~\cite{Etz98,HoPa72}.
Decoding of such constant dimension codes was considered in~\cite{GMR12,MGR08}

The known upper and lower bounds on $\cA_q(n,k,k)$
are summarized in the following theorems. The first three well-known
theorems can be found in~\cite{EtVa11}.

\begin{theorem}
\label{thm:exactk_2k}
If $k$ divides $n$ then $\cA_q(n,k,k) = \frac{q^n-1}{q^k-1}$.
\end{theorem}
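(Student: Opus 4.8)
The plan is to reinterpret an optimal $(n,k,k)_q$ code geometrically and then establish matching upper and lower bounds. First I would observe that a code $\C \subseteq \cG_q(n,k)$ with minimum Grassmannian distance $k$ is precisely a collection of pairwise disjoint $k$-subspaces: for distinct $X,Y \in \C$ the requirement $d_G(X,Y) = k - \dim(X\cap Y) \geq k$ forces $\dim(X \cap Y) = 0$, i.e. $X \cap Y = \{ \mathbf{0} \}$. Thus $\cA_q(n,k,k)$ is exactly the maximum size of a partial spread in $\cG_q(n,k)$, and the theorem amounts to showing this maximum equals $\frac{q^n-1}{q^k-1}$ and is attained by a genuine spread whenever $k \mid n$.

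For the upper bound I would count one-dimensional subspaces (points). Each $k$-subspace contains $\frac{q^k-1}{q-1}$ one-dimensional subspaces, and pairwise disjoint codewords share no point. Since $\cP_q(n)$ contains $\frac{q^n-1}{q-1}$ one-dimensional subspaces in total, summing over $\C$ gives $|\C| \cdot \frac{q^k-1}{q-1} \leq \frac{q^n-1}{q-1}$, hence $|\C| \leq \frac{q^n-1}{q^k-1}$. This argument is valid for every partial spread and uses no divisibility hypothesis.

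For the lower bound I would exhibit a spread of the required size, which is where the assumption $k \mid n$ enters essentially. Identifying $\F_q^n$ with $\F_{q^n}$, the divisibility $k \mid n$ guarantees that $\F_{q^n}$ contains $\F_{q^k}$ as a subfield, so $\F_{q^n}$ is a vector space over $\F_{q^k}$ and in particular $\F_{q^k}$ is a $k$-dimensional $\F_q$-subspace. I would then take the multiplicative cosets $a\F_{q^k}^*$ of the subgroup $\F_{q^k}^* \leq \F_{q^n}^*$; each coset together with $\{ \mathbf{0} \}$ is a translate $a\F_{q^k}$, which is a one-dimensional $\F_{q^k}$-subspace and hence a $k$-dimensional $\F_q$-subspace. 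Distinct cosets are disjoint, so these subspaces meet only in $\mathbf{0}$, and there are exactly $\frac{q^n-1}{q^k-1}$ of them; they therefore form a spread attaining the upper bound.

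The routine direction is the counting upper bound; the substantive content is the lower bound, and the one step that genuinely requires $k \mid n$ is the embedding of $\F_{q^k}$ as a subfield of $\F_{q^n}$, which is what makes the multiplicative cosets close up into honest $\F_q$-subspaces of dimension exactly $k$. I would flag the verification that each $a\F_{q^k}$ is $\F_q$-linear (closure under $\F_q$-scalars follows from $\F_q \subseteq \F_{q^k}$) as the only point needing a line of justification; everything else is a direct count.
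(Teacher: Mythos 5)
Your proposal is correct and follows essentially the same route the paper takes: the paper treats this theorem as the classical statement that $(n,k,k)_q$ codes are precisely partial spreads, that the point-counting bound gives $\frac{q^n-1}{q^k-1}$, and that when $k \mid n$ a spread (constructed exactly as you do, via the cosets of $\F_{q^k}^*$ in $\F_{q^n}^*$ under the subfield identification) attains it. Your write-up simply makes explicit the standard spread construction that the paper invokes by reference to \cite{EtVa11} and the projective-geometry literature.
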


\begin{theorem}
\label{thm:upperk_2k}
$\cA_q (n,k,k) \leq \left\lfloor \frac{q^n-1}{q^k-1} \right\rfloor -1$
if $n \not\equiv 0~(\text{mod}~k)$.
\end{theorem}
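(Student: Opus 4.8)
The plan is to reinterpret $\cA_q(n,k,k)$ as the maximum size of a partial spread and then sharpen the elementary packing bound by one. First, I would note that an $(n,k,k)_q$ code is precisely a partial spread: if $X,Y$ are distinct codewords then $d_G(X,Y)=k-\dim(X\cap Y)\ge k$ forces $\dim(X\cap Y)=0$, so the codewords are pairwise disjoint $k$-subspaces, and conversely any partial spread is such a code. Hence $m:=\cA_q(n,k,k)$ is the largest number of pairwise disjoint $k$-subspaces $V_1,\dots,V_m$ of $\F_q^n$.

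Since the $V_i$ meet pairwise only in $\{{\bf 0}\}$, the sets of nonzero vectors they contain are disjoint, so $m(q^k-1)\le q^n-1$, giving $m\le (q^n-1)/(q^k-1)$. I would then invoke the standard fact that $q^k-1$ divides $q^n-1$ if and only if $k\mid n$; since here $k\nmid n$, the quotient is not an integer, and therefore $m\le\lfloor (q^n-1)/(q^k-1)\rfloor=:m_0$. This is just the packing bound, so the entire content of the theorem lies in the improvement by one.

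To obtain the final $-1$, I would argue by contradiction, assuming $m=m_0$. Writing $n=kt+r$ with $1\le r\le k-1$ and using $q^{kt+r}\equiv q^r \pmod{q^k-1}$, the number of nonzero vectors left uncovered by $V_1,\dots,V_{m_0}$ is exactly $q^r-1$; equivalently there are $(q^r-1)/(q-1)$ hole points, a strikingly small deficiency. The main obstacle is to show that a partial spread cannot be this efficient when $k\nmid n$. The route I would pursue is hyperplane slicing: for any hyperplane $W$, the traces $V_i\cap W$ are again pairwise disjoint and have dimension $k$ or $k-1$ according as $V_i\subseteq W$ or not, so if $a_W$ of the $V_i$ lie in $W$ then $a_W(q^k-1)+(m-a_W)(q^{k-1}-1)\le q^{n-1}-1$. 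I would combine these inequalities for a carefully chosen $W$ (for instance one containing the whole hole set, which exists whenever the holes fail to span $\F_q^n$) with the integrality of the intersection numbers $a_W$, pushing the argument down in dimension until the tiny deficiency $q^r-1$ becomes numerically impossible. Turning the slicing inequalities together with integrality into an outright contradiction, rather than merely reproducing the packing bound in lower dimension, is the delicate step, and it is exactly the point at which the hypothesis $k\nmid n$ must be used.
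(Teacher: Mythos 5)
Your opening reductions are fine: an $(n,k,k)_q$ code is indeed a partial spread, the vector count gives $m\le\lfloor (q^n-1)/(q^k-1)\rfloor=m_0$, and an extremal code with $m=m_0$ would leave exactly $q^r-1$ uncovered nonzero vectors, i.e.\ $(q^r-1)/(q-1)$ hole points, where $n\equiv r\pmod k$, $1\le r\le k-1$. But, as you say yourself, all of this is just the packing bound; the entire content of the theorem is the final $-1$, and at precisely that point your proposal stops being a proof and becomes a plan whose crucial step ("turning the slicing inequalities together with integrality into an outright contradiction") is explicitly left open. That is a genuine gap, not a presentational one. (For calibration: the paper itself states this theorem without proof, quoting it from~\cite{EtVa11}, so there is no in-paper argument that your sketch could be matching.)

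Moreover, the specific route you sketch does not work as stated. The hyperplane you propose to use, one containing the whole hole set, need not exist when $r$ is large, since the $q^r-1$ hole vectors may span $\F_q^n$; and even when it does exist, integrality yields no contradiction: your slicing identity then forces $a_W=\frac{q^{n-k}-q^r}{q^k-1}$, which is a nonnegative integer precisely because $n-k\equiv r\pmod k$. In fact, the hypothetical extremal configuration passes every simple counting test (summing over all hyperplanes, or over the $(k+1)$-spaces through a fixed codeword, gives identities, not contradictions). What does close the gap, and is close enough in spirit to your plan that it is worth recording, is to read your slicing equation as a congruence valid for \emph{every} hyperplane $W$: writing $h_W$ for the number of uncovered nonzero vectors in $W$, one has $h_W=q^{n-1}-1-a_Wq^{k-1}(q-1)-m_0(q^{k-1}-1)$, so $h_W\equiv m_0-1\equiv q^r-1\pmod{q^{k-1}}$, and since $0\le q^r-1<q^{k-1}$ this forces $h_W\ge q^r-1\ge 1$ for all $W$. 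Double counting hole--hyperplane incidences then gives
$(q^r-1)\frac{q^{n-1}-1}{q-1}=\sum_W h_W\ge (q^r-1)\frac{q^n-1}{q-1}$,
which is absurd. It is this uniform congruence combined with a global double count over all hyperplanes, rather than one carefully chosen hyperplane or a descent in dimension, that converts your inequalities into a proof; without it, the proposal establishes only the floor bound.
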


\begin{theorem}
\label{thm:lowerk_2k}
Let $n \equiv r~(\text{mod}~k)$. Then, for all
$q$, we have
$$
\cA_q(n,k,k) \geq  \frac{q^n - q^k(q^r -1)-1}{q^k-1}
$$
\end{theorem}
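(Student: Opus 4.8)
The plan is to prove this lower bound constructively by exhibiting a partial spread of the stated size, and to do so by induction on $m$, where $n = mk + r$ with $0 \le r < k$. I would carry a slightly stronger hypothesis than a bare size bound: that $\F_q^n$ admits a partial spread $\cP$ of size $\frac{q^n - q^{k+r} + q^k - 1}{q^k-1}$ whose set of uncovered nonzero points is exactly $T \setminus S_0$, where $T$ is a fixed $(k+r)$-dimensional subspace, $S_0 \subseteq T$ is a $k$-dimensional subspace lying in $\cP$, and every other element of $\cP$ meets $T$ only in $\{ {\bf 0} \}$. Tracking the uncovered set this precisely is what makes the induction self-sustaining. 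The base case $m=1$ (so $n = k+r$) is immediate: take $T = \F_q^n$, let $S_0$ be any $k$-subspace, and set $\cP = \{ S_0 \}$, whose size is $1 = \frac{q^{k+r} - q^{k+r} + q^k - 1}{q^k-1}$.

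For the inductive step I would pass from $\F_q^n$ to $\F_q^{n+k}$ by writing $\F_q^{n+k} = \F_q^n \oplus C$ with $\dim C = k$, and then adjoin to the old partial spread a family of $q^n$ new $k$-subspaces that covers exactly the points having nonzero $C$-component. Each new subspace will be the graph $D_M = \{ (Mc, c) : c \in C \}$ of a linear map $M : C \to \F_q^n$, i.e. of an $n \times k$ matrix over $\F_q$. Such a graph meets $\F_q^n \oplus \{ {\bf 0} \}$ only in $\{ {\bf 0} \}$, and two graphs $D_M$, $D_{M'}$ meet only in $\{ {\bf 0} \}$ precisely when $M - M'$ has full column rank. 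Hence I need a family of $q^n$ matrices whose pairwise differences all have rank $k$, and whose graphs jointly cover every point with nonzero $C$-component.

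The crucial observation, and the step I expect to be the heart of the argument, is that such a family is exactly a maximum rank distance code of parameters $[n \times k, n, k]$: the Singleton bound~(\ref{eq:MRD}) gives $\varrho \le \min\{ n, k(n-k+1) \} = n$, and a code attaining it exists since the bound is met for all feasible parameters. Invoking this MRD code supplies the desired $q^n$ pairwise-disjoint graphs covering precisely the points with nonzero $C$-component. Combining them with $\cP$ preserves $S_0$ and $T$ as the distinguished subspaces: the new elements meet $T$ only trivially, the uncovered set remains $T \setminus S_0$, and the size grows by exactly $q^n$, reaching $\frac{q^{n+k} - q^{k+r} + q^k - 1}{q^k-1}$. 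The remaining work is the bookkeeping that the coverage and disjointness conditions of the hypothesis are genuinely maintained; once the MRD code is in hand this is routine, and discarding the auxiliary data $T, S_0$ at the end yields the stated bound (and, when $r=0$, recovers the exact spread count of Theorem~\ref{thm:exactk_2k}).
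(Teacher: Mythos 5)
Your proof is correct. The key counting does go through: the $q^n$ graphs of a linear $[n \times k, n, k]$ MRD code are pairwise disjoint, meet $\F_q^n \oplus \{ {\bf 0} \}$ only trivially, and each contains $q^k-1$ of the $q^n(q^k-1)$ points with nonzero $C$-component, so together they cover those points exactly; hence your invariant is preserved and the size $\frac{q^{n+k}-q^{k+r}+q^k-1}{q^k-1}$, which is the stated bound with $n$ replaced by $n+k$, is reached. The paper, however, does not argue this way: it quotes Theorem~\ref{thm:lowerk_2k} from~\cite{EtVa11} and remarks only that the bound follows from the Multilevel Construction of Section~\ref{sec:multiL} --- take as skeleton the constant-weight code whose $\lfloor n/k \rfloor$ codewords of weight $k$ have pairwise disjoint supports (Hamming distance $2k$), and on each level lift a full-rectangle MRD Ferrers diagram code; the level sizes $q^{n-k}, q^{n-2k}, \dots, q^{k+r}, 1$ sum to the same quantity. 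Unrolling your induction produces precisely this code, since each inductive step adjoins the lifted MRD code of one level, so the two constructions yield the same partial spread; the difference is architectural. Your route is more elementary and self-contained: MRD existence plus a counting argument replace the echelon-Ferrers machinery and the general distance theorem for the multilevel construction, with the invariant that the uncovered points are exactly those of $T \setminus S_0$ doing the work that the skeleton code's Hamming distance does in the paper's approach. What the multilevel route buys is generality --- one construction serving every Grassmannian distance $\delta \leq k$, not only partial spreads. A minor remark: your invariant is slightly stronger than necessary, since pairwise disjointness together with the size count already forces the uncovered set to have cardinality $q^{k+r}-q^k$; but carrying $T$ and $S_0$ explicitly is what lets each inductive step close cleanly, so it is a reasonable price.
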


We note that one method to obtain the lower bound
of Theorem~\ref{thm:lowerk_2k} is to apply the Multilevel Construction.
The next theorem was proved in~\cite{HoPa72} for $q=2$ and for
any other $q$ in~\cite{Beu75}.

\begin{theorem}
\label{thm:S_exactk_2k}
If $n \equiv 1~(\text{mod}~k)$ then $\cA_q(n,k,k) = \frac{q^n-q}{q^k-1} -q+1=\sum_{i=1}^{\frac{n-1}{k}-1} q^{ik+1}+1$.
\end{theorem}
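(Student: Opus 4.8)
The plan is to prove the two matching bounds separately, noting first that the two closed forms in the statement agree. Clearing denominators, both equal $\frac{q^n-q^{k+1}+q^k-1}{q^k-1}$: writing the sum over the common denominator $q^k-1$ gives numerator $q^{k+1}(q^{k(m-1)}-1)+q^k-1$ where $m=\frac{n-1}{k}$, and since $km=n-1$ this is $q^n-q^{k+1}+q^k-1$; the first form $\frac{q^n-q}{q^k-1}-q+1$ reduces to the same fraction after absorbing $(q-1)(q^k-1)$ into the numerator. Call this common value $S$.

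The lower bound $\cA_q(n,k,k)\ge S$ is already in hand: specializing Theorem~\ref{thm:lowerk_2k} to $r=1$ gives $\cA_q(n,k,k)\ge\frac{q^n-q^k(q-1)-1}{q^k-1}=S$, and such a code is produced explicitly by the Multilevel Construction. So the entire content of the theorem is the matching upper bound $\cA_q(n,k,k)\le S$. For this I would pass to projective language: an optimal $(n,k,k)_q$ code is a partial spread, a family $\dS$ of pairwise-disjoint $k$-dimensional subspaces, hence point-disjoint as flats in PG($n-1,q$). With $s=|\dS|$, the elements cover $s\frac{q^k-1}{q-1}$ of the $\frac{q^n-1}{q-1}$ points, so the number of uncovered points (``holes'') is $h=\frac{q^n-1}{q-1}-s\frac{q^k-1}{q-1}$. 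A one-line computation shows $s\le S\iff h\ge q^k$, so it suffices to prove that every partial spread leaves at least $q^k$ holes.

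The main tool is a hyperplane count. Fix a hyperplane $H$ and let $c$ be the number of elements of $\dS$ lying inside $H$; every other element meets $H$ in a $(k-1)$-subspace, since $U\not\subseteq H$ forces $\dim(U\cap H)=k-1$. Counting the covered points of $H$ gives
$$h_H=\frac{q^{n-1}-1}{q-1}-s\,\frac{q^{k-1}-1}{q-1}-c\,q^{k-1}\ \ge\ 0,$$
where $h_H$ is the number of holes contained in $H$. Two consequences drive the argument: the residue $h_H\equiv\frac{q^{n-1}-1}{q-1}-s\frac{q^{k-1}-1}{q-1}\pmod{q^{k-1}}$ is the \emph{same} for every hyperplane, and the nonnegativity caps how many spread elements any hyperplane can absorb. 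Together with the global identity $h=q\bigl(\frac{q^{n-1}-1}{q-1}-s\frac{q^{k-1}-1}{q-1}\bigr)+1-s$, these tie the distribution of holes across hyperplanes to the single quantity $h$.

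From here I would finish by induction on $m$, the base case $m=1$ being immediate (two distinct $k$-subspaces of $\F_q^{k+1}$ meet in dimension at least $k-1\ge 1$, so $s=1=S$), using the per-hyperplane inequalities and the uniform residue above to force $h\ge q^k$. The hard part is precisely this last step. The natural reductions that drop the dimension by $k$ and preserve the residue class of $n$ modulo $k$ — passing to a quotient $V/U$ by a spread element, or $V/P$ by a hole point — do \emph{not} preserve pairwise disjointness, since two spread elements can project onto overlapping images, so the projected family need not be a partial spread. Controlling these coincidences, i.e.\ showing the holes cannot be packed into fewer than $q^k$ points without violating one of the hyperplane constraints, is the crux of the Hong--Patel/Beutelspacher analysis; it is exactly where the congruence and the nonnegativity constraints must be exploited jointly rather than separately.
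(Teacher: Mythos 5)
The parts you actually carry out are correct: the two closed forms both reduce to $\frac{q^n-q^{k+1}+q^k-1}{q^k-1}$; the lower bound is indeed the $r=1$ case of Theorem~\ref{thm:lowerk_2k} (and, as the paper notes, is realized by $\CMRD$ with one subspace added); the equivalence between the upper bound $s\le S$ and the statement that every partial spread in $\cG_q(n,k)$, $n\equiv 1\pmod k$, leaves at least $q^k$ holes is exact; and your hyperplane count $h_H=\frac{q^{n-1}-1}{q-1}-s\frac{q^{k-1}-1}{q-1}-c\,q^{k-1}\ge 0$, with its fixed residue modulo $q^{k-1}$, is the right bookkeeping. But there is a genuine gap, and it is the whole theorem: the upper bound is never proved. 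Your last paragraph announces an induction on $m$ and asserts that the per-hyperplane inequalities together with the uniform residue ``force'' $h\ge q^k$, but no such deduction is given; you then explain why the natural inductive reductions fail (quotients destroy disjointness) and correctly identify the missing step as ``the crux of the Hong--Patel/Beutelspacher analysis.'' That crux is precisely what a proof must supply, and the two constraints you isolate do not suffice as stated: nonnegativity of $h$ and of all the $h_H$ alone yields only roughly the packing bound $s\le\lfloor(q^n-1)/(q^k-1)\rfloor$, which exceeds $S$ by about $q-1$, while the residue condition alone merely places $h$ and the $h_H$ in arithmetic progressions modulo $q^{k-1}$; excluding the finitely many values $S<s\le S+q-1$ (equivalently $0<h<q^k$) is exactly where the real combinatorial work lies, and it is absent.

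Two further remarks. First, the paper itself gives no proof of this theorem --- it cites Hong--Patel for $q=2$ and Beutelspacher for general $q$ --- so there is no internal argument your sketch could be leaning on; a self-contained proof has to fill the very step you left open. Second, your setup is the standard and salvageable one: from your identities one gets $h-h_H=q^{n-1}-(s-c)\,q^{k-1}$, hence $h_H\equiv h\pmod{q^{k-1}}$ for every hyperplane $H$, i.e.\ the hole set is ``divisible'' modulo $q^{k-1}$; the known completions of the proof lower-bound the size of a nonempty point set with this property (or run Beutelspacher's equivalent quadratic counting argument over hyperplanes). As submitted, however, the proposal stops exactly where that work begins, so it establishes only the lower bound, which was already known from Theorem~\ref{thm:lowerk_2k}.
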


The bound of Theorem~\ref{thm:S_exactk_2k} is attained with $\CMRD$
to which one subspace is added.
By Theorems~\ref{thm:exactk_2k} and~\ref{thm:S_exactk_2k}, the
value of $\cA_q(n,2,2)$ is known for all values of $q$ and $n$.
Theorem~\ref{thm:S_exactk_2k} was extended for the case where
$q=2$ and $k=3$ in~\cite{EJSSS} as follows.

\begin{theorem}
If $n \equiv c~(\text{mod}~3)$ then $\cA_2(n,3,3) = \frac{2^n-2^c}{7} -c$.
\end{theorem}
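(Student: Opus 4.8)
The plan is to read $\cA_2(n,3,3)$ as the maximum size of a \emph{partial $3$-spread} of $\F_2^n$ (a family of $3$-subspaces meeting pairwise in $\{{\bf 0}\}$) and to split on $c$. For $c=0$ the result is Theorem~\ref{thm:exactk_2k}, and for $c=1$ it is Theorem~\ref{thm:S_exactk_2k} with $k=3$ (both match the stated formula directly), so the real work is the case $c=2$, i.e.\ $n=3m+2$ with $n\ge 8$ (the formula fails at $n=5$, where the true value is $1$). Counting points, a partial spread of size $s$ covers $7s$ of the $2^n-1$ points and leaves $h=2^n-1-7s$ \emph{holes}; since $2^n-1\equiv 3\pmod 7$ we get $h\equiv 3\pmod 7$, while Theorem~\ref{thm:upperk_2k} already gives $s\le \frac{2^n-4}{7}-1$, i.e.\ $h\ge 10$. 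The target value $s=\frac{2^n-4}{7}-2$ is exactly $h=17$, so it remains to rule out $h=10$ (the upper bound) and to build a spread with $h=17$ (the lower bound).

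For the upper bound I would assume a partial spread $S$ with exactly $h=10$ holes and count holes hyperplane by hyperplane. For a hyperplane $\pi$, a member of $S$ lying in $\pi$ contributes all $7$ of its points while a member not in $\pi$ meets $\pi$ in a $2$-subspace (three points); hence the number of holes in $\pi$ is $x_\pi=(2^{n-1}-1)-3s-4a_\pi$, where $a_\pi$ is the number of members contained in $\pi$. A short computation using $s=\frac{2^n-11}{7}$ shows $(2^{n-1}-1-3s)\equiv 2\pmod 4$ when $n\equiv 2\pmod 3$, so $x_\pi\equiv 2\pmod 4$; as $0\le x_\pi\le 10$ this forces $x_\pi\in\{2,6,10\}$. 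Writing $n_2,n_6,n_{10}$ for the number of hyperplanes attaining each value, I would then use three elementary incidence identities: $\sum_\pi 1=2^n-1$, $\sum_\pi x_\pi=10(2^{n-1}-1)$ (each point lies on $2^{n-1}-1$ hyperplanes), and $\sum_\pi x_\pi(x_\pi-1)=90(2^{n-2}-1)$ (each ordered pair of points lies on $2^{n-2}-1$ hyperplanes). Solving this linear system gives $n_{10}=-(2^{n-6}+1)<0$, a contradiction; therefore $h\ne 10$, so $h\ge 17$ and $\cA_2(n,3,3)\le \frac{2^n-4}{7}-2$.

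For the lower bound I would use a \emph{hole-preserving} recursion. Fix in $\F_2^n$ a subspace $U$ of dimension $n-3$ (so $n-3\equiv 2\pmod 3$) and a complement $B$ of dimension $3$. Put an optimal partial $3$-spread of $U$ inside $U$, and adjoin the lifted $[(n-3)\times 3]$ MRD code of minimum rank distance $3$ from~(\ref{eq:MRD}): its $2^{n-3}$ codewords give $2^{n-3}$ three-subspaces, each a graph of a linear map $B\to U$, pairwise disjoint (their differences have rank $3$) and disjoint from $U$, and together they cover precisely the $7\cdot 2^{n-3}$ points outside $U$. The union is a partial spread with the same holes as the one inside $U$, so $\cA_2(n,3,3)\ge \cA_2(n-3,3,3)+2^{n-3}$; substituting $\frac{2^{n-3}-4}{7}-2$ reproduces $\frac{2^n-4}{7}-2$ exactly. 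This reduces the lower bound to a single base case, a partial $3$-spread of size $34$ (equivalently $17$ holes) in $\F_2^8$, which I would exhibit explicitly.

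I expect the conceptual heart, and the main obstacle, to be the upper bound: the decisive point is that for $n\equiv 2\pmod 3$ every hyperplane meets the hole set in $2\pmod 4$ points, and $10$ holes are simply too few to distribute over $\{2,6,10\}$ consistently with the first two moments. Once this congruence is isolated the contradiction is forced by routine counting. The recursion on the lower-bound side is essentially the lifted-MRD (Multilevel) idea and is routine; the only genuinely stubborn piece there is verifying the base case in $\F_2^8$, which appears to require an explicit (possibly computer-assisted) construction rather than a structural argument.
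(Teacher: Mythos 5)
Your case split, the reduction to $c=2$, and the upper-bound argument are all sound, and I verified the two computations on which the upper bound rests: for $h=10$ holes and $n\equiv 2~(\text{mod}~3)$ one indeed gets $x_\pi=(2^{n-1}-1)-3s-4a_\pi\equiv 2~(\text{mod}~4)$ for every hyperplane $\pi$, and the three incidence identities force $16\,n_{10}=45(2^{n-2}-1)+6(2^n-1)-35(2^{n-1}-1)=-2^{n-2}-16$, i.e.\ $n_{10}=-(2^{n-6}+1)<0$, a contradiction. Your remark that the stated formula fails at $n=5$ is also a legitimate caveat that the survey's statement silently omits. For the comparison you should know that the paper itself contains no proof of this theorem: it is quoted from~\cite{EJSSS}, where the upper bound is in essence the Drake--Freeman bound (Theorem~\ref{thm:best_U_bound}, obtained via a transformation into orthogonal arrays of strength two) and the new content is the construction. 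So your hyperplane/hole-counting derivation of the upper bound is a genuinely self-contained alternative to citing~\cite{DrFr79}, and that half of your argument is complete.

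The genuine gap is the base case of your lower bound. The recursion $\cA_2(n,3,3)\geq\cA_2(n-3,3,3)+2^{n-3}$ via graphs of linear maps (lifted MRD codes) is correct, but it only transports the bound downward in $n$; everything then rests on exhibiting $34$ pairwise disjoint $3$-dimensional subspaces of $\F_2^8$, and this is not a routine or safely deferrable step. Every standard construction available in this survey---lifted MRD codes, the Multilevel Construction, or equivalently Theorem~\ref{thm:lowerk_2k}---yields only $\frac{2^8-8\cdot 3-1}{7}=33$ for $n=8$; gaining the extra codeword is precisely the nontrivial construction that constitutes the main contribution of~\cite{EJSSS}, where it is an explicit and rather delicate piece of work, not a corollary of structural machinery. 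As written, your proposal therefore proves $\cA_2(n,3,3)\leq\frac{2^n-4}{7}-2$ for $n\equiv 2~(\text{mod}~3)$ but leaves the matching lower bound---the heart of the theorem---unestablished; to close it you must either reproduce such a size-$34$ partial $3$-spread explicitly (by hand or by computer, with the verification included) or invoke~\cite{EJSSS} for it, at which point the theorem is anyway theirs.
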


The upper bound implied by
Theorem~\ref{thm:S_exactk_2k} was improved for some cases
in~\cite{DrFr79} in which a transformation, of partial spreads into
orthogonal arrays of strength two, is considered.

\begin{theorem}
\label{thm:best_U_bound}
If $n=k \ell + c$ with $0 < c < k$, then
$\cA_q(n,k,k) \leq \sum_{i=0}^{\ell-1} q^{ik+c} -\Omega -1$,
where $2 \Omega = \sqrt{1+4q^k(q^k-q^c)} -(2q^k-2q^c+1)$.
\end{theorem}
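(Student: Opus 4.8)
The plan is to pass from the partial spread to its complementary set of uncovered points, the \emph{holes}, and to bound the number of holes from below; the reduction of the partial spread to a strength-$2$ orthogonal array is the device that makes this lower bound quantitative. Throughout I write $\theta_j=\frac{q^j-1}{q-1}$ for the number of points of a $j$-dimensional subspace, and view an $(n,k,k)_q$ code $\C$ of size $M$ as a family of $M$ pairwise disjoint $(k-1)$-flats of $\mathrm{PG}(n-1,q)$. The holes then form a set $\cH$ with $|\cH|=N=\theta_n-M\theta_k$, and a direct rearrangement (using $\sum_{i=0}^{\ell-1}q^{ik+c}=\frac{q^n-q^c}{q^k-1}$) shows that the claimed inequality is \emph{equivalent} to the lower bound on the holes
$$
N\ \ge\ \theta_c+(\Omega+1)\,\theta_k .
$$
Hence it suffices to prove this bound. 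I will assume $\ell\ge 2$, so that $n>2k$; the case $\ell=1$ is degenerate, since then $2k>n$ forces any two codewords to meet and $M\le 1$.

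First I record the incidence data with hyperplanes. For a hyperplane $H$ let $a_H$ be the number of codewords contained in $H$; each remaining codeword meets $H$ in a $(k-2)$-flat, so the number of holes in $H$ is
$$
x_H\ =\ \theta_{n-1}-M\theta_{k-1}-q^{k-1}a_H\ =:\ w-q^{k-1}a_H .
$$
Thus $x_H\ge 0$, and since $N-w=q^{k-1}(q^{n-k}-M)$ every hyperplane satisfies the congruence $x_H\equiv N\pmod{q^{k-1}}$. A standard double count---each codeword lies in $\theta_{n-k}$ hyperplanes, and two disjoint codewords span a $2k$-dimensional space contained in $\theta_{n-2k}$ hyperplanes (this is where $n>2k$ is used)---gives the two moment identities
$$
\sum_H a_H=M\theta_{n-k},\qquad \sum_H a_H^2=M(M-1)\theta_{n-2k}+M\theta_{n-k}.
$$

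The heart of the argument is to turn this moment data, together with the cap $a_H\le w/q^{k-1}$ coming from $x_H\ge 0$, into a \emph{quadratic} obstruction. Writing the deficiency $D:=\tfrac{N-\theta_c}{\theta_k}-1$ (so that $N\ge\theta_c+(\Omega+1)\theta_k$ is exactly $D\ge\Omega$), the transformation of the $M$ disjoint flats into a strength-$2$ orthogonal array on the hole set, followed by the Bose--Bush bound for such arrays, yields the inequality
$$
D\,\bigl(D+2q^k-2q^c+1\bigr)\ \ge\ (q^k-q^c)(q^c-1).
$$
Since the left-hand side is increasing in $D\ge 0$ and vanishes at $D=\Omega=\tfrac12\bigl(\sqrt{1+4q^k(q^k-q^c)}-(2q^k-2q^c+1)\bigr)$, this is equivalent to $D\ge\Omega$, which is the required bound on $N$, and hence on $M$.

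I expect the last step to be the real obstacle, and it is exactly the point at which the orthogonal-array reduction is indispensable. Feeding the cap into the identities naively---for example via $\sum_H a_H^2\le (w/q^{k-1})\sum_H a_H$, or via the maximal-variance inequality for a distribution on $[0,w]$---collapses to a \emph{linear} (Singleton-type) estimate with no square root, because $w$ itself is linear in $M$. The genuine quadratic, and with it the discriminant $1+4q^k(q^k-q^c)$, appears only after the integrality of the $a_H$ and the congruence $x_H\equiv N\pmod{q^{k-1}}$ are exploited through the strength-$2$ structure, which forces the $a_H$ to concentrate on essentially two values. Verifying that the Bose--Bush computation reproduces precisely the constants $2q^k-2q^c+1$ and $(q^k-q^c)(q^c-1)$, and analysing the extremal configurations where equality can hold, is where the substantive work lies.
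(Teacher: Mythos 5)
Your proposal has a genuine gap, and it sits exactly where you yourself say it does. Note first that the paper offers no proof of this theorem: it is quoted from Drake and Freeman~\cite{DrFr79}, with only the one-line indication that the proof goes via ``a transformation of partial spreads into orthogonal arrays of strength two'' --- precisely the route you sketch. Everything you actually carry out is correct but routine bookkeeping: the translation of the stated bound into the hole-count inequality $N \geq \theta_c + (\Omega+1)\theta_k$, the hyperplane formula $x_H = w - q^{k-1}a_H$ together with the congruence $x_H \equiv N \pmod{q^{k-1}}$, the two moment identities, and the algebraic fact that $D\,(D+2q^k-2q^c+1) \geq (q^k-q^c)(q^c-1)$ is equivalent to $D \geq \Omega$ (indeed $\Omega(\Omega+2q^k-2q^c+1)=(q^k-q^c)(q^c-1)$ holds exactly). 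What is missing is the entire mathematical content of the theorem: you never construct the orthogonal array --- what its rows, columns and symbols are, why it has $q^k$ levels and index $q^c$, why disjointness of the codewords yields strength $2$ --- and you never run the Bose--Bush argument on it. The sentence ``the transformation \dots followed by the Bose--Bush bound \dots yields the inequality'' simply asserts a statement which, as your own algebra shows, is a restatement of the theorem being proved. So the proposal, as written, assumes the result at its only nontrivial step, and you concede as much when you write that verifying the constants ``is where the substantive work lies.''

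To close the gap you would need what constitutes the actual work in~\cite{DrFr79}: the construction of a (group-constructible) net, equivalently a strength-$2$ orthogonal array, from a maximal partial spread, and the verification that the Bose--Bush inequality for $s = q^k$ levels and index $\mu = q^c$ translates into the stated bound on the deficiency. Your moment identities and the congruence modulo $q^{k-1}$ are plausible ingredients of such an argument, but, as your own discussion of the ``naive'' estimates correctly observes, they do not by themselves produce the square root; the quadratic comes from the array structure, which is exactly the object you never exhibit. Until that construction and computation are supplied, what you have is a correct reduction plus an unproved claim equivalent to the theorem.
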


\begin{problem}
Improve the lower bound on $\cA_q(n,k,k)$ given in Theorem~\ref{thm:lowerk_2k}.
\end{problem}

\begin{problem}
Characterize the cases in which the lower bound
on $\cA_q(n,k,k)$ given in Theorem~\ref{thm:lowerk_2k} is
the exact value of $\cA_q(n,k,k)$.
\end{problem}

\begin{problem}
Improve the upper bound on $\cA_q(n,k,k)$ given in Theorem~\ref{thm:best_U_bound}.
\end{problem}

\begin{problem}
Find more parameters for which we can give the exact value of $\cA_q(n,k,k)$
as given in Theorem~\ref{thm:S_exactk_2k}.
\end{problem}

\begin{problem}
Find the value of $\cA_q(n,3,3)$ for all $q$ and $n$.
\end{problem}

\begin{problem}
Find the value of $\cA_2(n,4,4)$ for a new infinite family
of values of~$n$.
\end{problem}

\section{Rank-Metric Codes}
\label{sec:rank}

Ferrers diagram rank-metric codes are the key for large codes based
on the Multilevel Construction.
Therefore, it is important (at least theoretically) to find large
Ferrers diagram rank-metric codes. Let $\dim (\cF , \delta )$ be the
the largest possible dimension of an $[\cF,\varrho,\delta]$ code.
The following theorem for the upper bound on the size of such codes
was proved in~\cite{EtSi09}.

\begin{theorem}
\label{thm:upper_rank} For a given $i$, $0 \leq i \leq \delta -1$,
if $\nu_i$ is the number of dots in
a Ferrers diagram $\cF$, which are not contained
in the first $i$ rows and are not contained in the rightmost
$\delta-1-i$ columns then $\text{min}_i \{ \nu_i \}$ is an upper
bound of $\dim (\cF,\delta)$.
\end{theorem}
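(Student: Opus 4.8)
The plan is to bound the dimension $\dim(\cF,\delta)$ of an $[\cF,\varrho,\delta]$ code by exhibiting, for each choice of $i$ with $0 \le i \le \delta-1$, a coordinate projection onto the $\nu_i$ dots that remain after deleting the top $i$ rows and the rightmost $\delta-1-i$ columns, and then arguing that this projection is injective on the code. Since an injective linear map cannot increase dimension, this immediately gives $\varrho \le \nu_i$ for every such $i$, and hence $\varrho \le \min_i \{\nu_i\}$, which is the claim.

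First I would fix $i$ and define the projection $\pi_i$ that sends each $m \times \eta$ matrix (with support inside $\cF$) to the restriction of its entries to the $\nu_i$ designated dot-positions. The key step is to show $\pi_i$ is injective on the code $\cC$, equivalently that its kernel contains only the zero matrix. Suppose $A \in \cC$ lies in $\ker \pi_i$, so $A$ vanishes on all dots outside the top $i$ rows and outside the rightmost $\delta-1-i$ columns; all remaining nonzero entries of $A$ are therefore confined to those $i$ rows together with those $\delta-1-i$ columns. The goal is to conclude $\text{rank}(A) \le i + (\delta-1-i) = \delta - 1$. Since $\cC$ has minimum rank distance $\delta$ and is linear, any codeword of rank at most $\delta-1$ must be the zero matrix; thus $A = 0$ and $\pi_i$ is injective, giving $\dim(\cC) \le \nu_i$.

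The rank bound for a matrix supported on $i$ prescribed rows and $\delta-1-i$ prescribed columns is the routine linear-algebra fact I would verify explicitly: the row space of such a matrix is spanned by the $i$ full rows together with the contributions of the $\delta-1-i$ columns, and a clean way to see the bound is that $A$ can be written as a sum of a matrix occupying only those $i$ rows (rank at most $i$) and a matrix occupying only those $\delta-1-i$ columns (rank at most $\delta-1-i$), whence subadditivity of rank yields $\text{rank}(A) \le \delta-1$. I expect this decomposition to be the main technical point, though it is elementary; care is needed to confirm that every dot of $\cF$ outside the projected region indeed lies in one of the $i$ chosen rows or one of the $\delta-1-i$ chosen columns, so that the support of $A$ really is covered by these rows and columns.

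Finally, since the argument holds for every $i$ in the range $0 \le i \le \delta-1$, I would take the best (smallest) such bound and conclude $\dim(\cF,\delta) = \max_{\cC} \varrho \le \min_i \{\nu_i\}$, completing the proof. The only subtlety worth flagging is ensuring the counting of $\nu_i$ matches the dots actually available in $\cF$ after the deletions, which is a direct consequence of the definition of the Ferrers diagram and its echelon form.
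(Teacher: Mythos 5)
Your proof is correct. Note that the paper itself states this theorem without proof, deferring to the reference [EtSi09]; your argument --- projecting codewords onto the $\nu_i$ surviving dots, and showing injectivity because any kernel element is supported on $i$ rows plus $\delta-1-i$ columns and hence has rank at most $i+(\delta-1-i)=\delta-1<\delta$ --- is precisely the argument given there, so there is nothing to add.
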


The bound of Theorem~\ref{thm:upper_rank} is attained trivially for $\delta=1$
and also attained for $\delta=2$ and for some sporadic parameters~\cite{EtSi09}.
Therefore, the most important questions in this context are as follows.

\begin{problem}
Find more parameters for which the bound of Theorem~\ref{thm:upper_rank} is attained.
\end{problem}

\begin{problem}
Prove that the bound of Theorem~\ref{thm:upper_rank} is attained for all sets of parameters,
$\cF$, $q$, and $\delta$; or disprove this claim.
\end{problem}

Rank-metric codes can raise some more interesting questions, but usually
these questions do not have a direct application or any relation for subspace codes.
However, a specific type of rank-metric codes, namely \emph{constant rank codes}
have an important direct application to constant dimension codes.
These codes were considered in~\cite{GaYa10a}. A constant rank code
is a rank-metric code in which all codewords have the same rank.

Let $\cA_q^R(m,n,d,r)$ be the maximum number of codewords in
a constant rank code with constant rank $r$ and
minimum rank distance $d$ over $\F_q^{n \times m}$.
The following two theorems~\cite{GaYa10a}
are the most relevant ones in the context
of optimal constant dimension codes.

\begin{theorem}
For all $q$, $2k \leq n \leq m$ and $1 \leq \delta \leq r$ we have
$\cA_q(n,\delta, r)=\cA_q^R (m,n,\delta +r,r)$ if either $\delta =r$
or $m \geq (n-r)(r-\delta+1)+r+1$.
\end{theorem}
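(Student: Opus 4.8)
The plan is to prove the two inequalities separately. The inequality $\cA_q^R(m,n,\delta+r,r)\le\cA_q(n,\delta,r)$ is easy and holds for all admissible parameters, while the reverse inequality is where the hypothesis on $m$ (or the alternative $\delta=r$) must be used.

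For the easy direction, I would start from a constant rank code $\cC\subseteq\F_q^{n\times m}$ of rank $r$ and minimum rank distance $\delta+r$, and send each codeword $A$ to its column space $\text{col}(A)\in\cG_q(n,r)$. This map is injective on $\cC$: if $A\ne B$ had the same column space, then the columns of $A-B$ would lie in an $r$-dimensional space, forcing $\text{rank}(A-B)\le r<\delta+r$ and contradicting the minimum distance. Moreover $\text{col}(A-B)\subseteq\text{col}(A)+\text{col}(B)$, so $\text{rank}(A-B)\le\dim\bigl(\text{col}(A)+\text{col}(B)\bigr)=r+d_G(\text{col}(A),\text{col}(B))$; combined with $\text{rank}(A-B)\ge\delta+r$ this gives $d_G\ge\delta$. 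Hence the image is a constant dimension code of the same size and minimum distance at least $\delta$, which proves $\cA_q^R(m,n,\delta+r,r)\le\cA_q(n,\delta,r)$.

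For the reverse inequality I would realize an optimal $(n,\delta,r)_q$ code $\C$ as a constant rank code. Write each candidate codeword as $A_X=U_XV_X$, where $U_X\in\F_q^{n\times r}$ has column space $X$ and $V_X\in\F_q^{r\times m}$ has full row rank, with $W_X:=\text{rowsp}(V_X)\in\cG_q(m,r)$. Factoring the difference as $\begin{bmatrix}U_X&U_Y\end{bmatrix}\begin{bmatrix}V_X\\-V_Y\end{bmatrix}$ and applying $\text{rank}(PQ)=\text{rank}(Q)-\dim(\text{col}(Q)\cap\ker P)$ together with $\dim\ker\begin{bmatrix}U_X&U_Y\end{bmatrix}=\dim(X\cap Y)=r-d_G(X,Y)$ yields
\[
\text{rank}(A_X-A_Y)\ \ge\ r+d_G(X,Y)-\dim(W_X\cap W_Y).
\]
Thus it suffices to choose the row spaces so that $\dim(W_X\cap W_Y)\le d_G(X,Y)-\delta$ for every pair, since then $\text{rank}(A_X-A_Y)\ge r+\delta$ and the rank-$r$ matrices $A_X$ form a constant rank code of size $|\C|$.

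This reduces the problem to finding an assignment $X\mapsto W_X\in\cG_q(m,r)$ on $\C$ with $\dim(W_X\cap W_Y)\le d_G(X,Y)-\delta$. When $\delta=r$ the code $\C$ is a partial spread, every pair has $d_G(X,Y)=r$, and the requirement collapses to $W_X\cap W_Y=\{\mathbf{0}\}$; since $m\ge n$ we have $\cA_q(m,r,r)\ge\cA_q(n,r,r)=|\C|$, so a partial spread of the required size exists in $\F_q^m$ and any bijection onto it works. For general $\delta$ I would use the lifting parametrization $W_X=\text{rowsp}[\,I_r\mid B_X\,]$ with $B_X\in\F_q^{r\times t}$ and $t=m-r$, under which $\dim(W_X\cap W_Y)=r-\text{rank}(B_X-B_Y)$, so the target becomes $\text{rank}(B_X-B_Y)\ge r+\delta-d_G(X,Y)$. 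The main obstacle is exhibiting such $B_X$: the closest pairs ($d_G=\delta$) demand $\text{rank}(B_X-B_Y)\ge r$, i.e.\ essentially MRD-type separation, and this is exactly what the bound $m\ge(n-r)(r-\delta+1)+r+1$ is calibrated for, the quantity $(n-r)(r-\delta+1)$ being the MRD dimension appearing in the lifted-$\CMRD$ bound. I expect to settle it by a greedy/counting argument: process the codewords in turn and, for each previously placed $Y$, discard the $B$ with $\text{rank}(B-B_Y)<r+\delta-d_G(X,Y)$; matrices of bounded rank are few enough that once $t-1\ge(n-r)(r-\delta+1)$ a valid $B_X$ always remains. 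Verifying that this count closes—rather than requiring a subtler direct construction—is the step I regard as the crux.
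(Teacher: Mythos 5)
A preliminary remark: the paper does not prove this theorem at all --- it is quoted from Gadouleau and Yan~\cite{GaYa10a} --- so your proposal can only be compared with that source, not with an in-paper argument. Within your proposal, three of the four steps are correct and complete: the column-space map giving $\cA_q^R(m,n,\delta+r,r)\le\cA_q(n,\delta,r)$; the factorization inequality $\mathrm{rank}(A_X-A_Y)\ge d_G(X,Y)+d_G(W_X,W_Y)$, which is exactly the key inequality of the cited source; and the case $\delta=r$, where pairwise disjoint row spaces suffice and exist because $m\ge n$.

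The step you yourself flagged as the crux is, however, a genuine gap, and the greedy/union-bound route you propose provably cannot close it at the stated threshold. When $d_G(X,Y)=\delta$ you must avoid every $B$ with $\mathrm{rank}(B-B_Y)\le r-1$, and there are on the order of $q^{(r-1)(t+1)}$ such matrices, where $t=m-r$; on the other hand the code to be embedded has at least $q^{(n-r)(r-\delta+1)}$ codewords (the lifted MRD bound quoted in Section~\ref{sec:cdc}). At the minimum admissible value $t=(n-r)(r-\delta+1)+1$ the union bound therefore totals at least $q^{(t-1)+(r-1)(t+1)}=q^{rt+r-2}\ge q^{rt}$ whenever $r\ge2$, i.e.\ it never drops below the total number of available matrices; and $r=1$ forces $\delta=r$, which is the case already settled. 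So the counting argument can only yield the theorem under a hypothesis of roughly $m\ge(n-r)(r-\delta+1)+2r$ plus constants, strictly weaker than the statement. The failure is not an artifact of crude estimates: in the smallest case covered by the second alternative, $q=2$, $n=4$, $r=2$, $\delta=1$, $m=7$, the optimal code is all of $\cG_2(4,2)$ ($35$ codewords, each with $18$ neighbours at distance $1$ and $16$ at distance $2$), each distance-$1$ neighbour forbids $1+93=94$ of the $2^{10}=1024$ lifted choices, and $18\cdot94+16=1708>1024$; the same happens if you allow all $\sbinomtwo{7}{2}=2667$ row spaces ($18\cdot187+16=3382>2667$). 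Yet a valid assignment does exist there: split $\cG_2(4,2)$ into the $7$ spreads of a parallelism of PG$(3,2)$, take $7$ pairwise disjoint $3$-subspaces $V_1,\dots,V_7$ of $\F_2^7$ (possible since $\cA_2(7,3,3)=17$), and give the five members of the $i$-th spread distinct $2$-subspaces of $V_i$; adjacent codewords then lie in different classes and receive disjoint row spaces. This illustrates what is missing: the assignment must be built from structure (as in the explicit constructions of~\cite{GaYa10a}), not found by counting, and supplying that structure for general parameters is precisely the part of the proof your proposal leaves open.
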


\begin{theorem}
$~$
\begin{enumerate}
\item $\cA_q^R (m,n,r+1,r)= \sbinomq{n}{r}$.

\item $\cA_q^R (n,m,2r,r)= \cA_q(n,r,r)$ (the largest partial spread).
\end{enumerate}
\end{theorem}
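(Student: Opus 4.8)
The plan is to prove the two parts separately, pairing in each case a column/row-space injection for the upper bound with an explicit matrix construction for the lower bound. The central tool is a rank characterization: two rank-$r$ matrices $A,B$ over $\F_q^{m\times n}$ satisfy $\text{rank}(A-B)=2r$ if and only if their column spaces meet only in $\{\mathbf{0}\}$ and their row spaces meet only in $\{\mathbf{0}\}$. Indeed, taking rank factorizations $A=P_1Q_1$ and $B=P_2Q_2$, one has $A-B=[\,P_1\mid P_2\,]\left(\begin{smallmatrix}Q_1\\-Q_2\end{smallmatrix}\right)$, and this product has rank $2r$ exactly when the $m\times 2r$ column factor and the $2r\times n$ row factor both have full rank $2r$, i.e. when the two column spaces and the two row spaces are independent.

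For part (2), since $2r$ is the largest possible rank distance between two rank-$r$ matrices, a code with minimum distance $2r$ must realize equality for every pair. First I would read off the upper bound: the row spaces of the codewords are then pairwise disjoint $r$-subspaces of $\F_q^n$, hence form a partial spread in $\cG_q(n,r)$, so the code has at most $\cA_q(n,r,r)$ codewords. For the lower bound I would start from a maximum partial spread $V_1,\dots,V_N$ in $\cG_q(n,r)$ with $N=\cA_q(n,r,r)$, choose pairwise-disjoint $r$-subspaces $U_1,\dots,U_N$ of $\F_q^m$ (possible because a partial spread of $\cG_q(n,r)$ embeds into $\cG_q(m,r)$ when $m\ge n$, so $\cA_q(m,r,r)\ge\cA_q(n,r,r)$), and set $A_i=P_iQ_i$ where $P_i$ has column space $U_i$ and $Q_i$ has row space $V_i$. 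The characterization then yields $\text{rank}(A_i-A_j)=2r$ for all $i\neq j$, giving a valid code of size $N$.

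For part (1) the upper bound is immediate: mapping each codeword $A\in\F_q^{n\times m}$ to its column space in $\cG_q(n,r)$ is injective, since two codewords with a common column space would have difference of rank at most $r$, contradicting minimum distance $r+1$; hence the code has at most $\sbinomq{n}{r}$ codewords. For the lower bound I would construct one rank-$r$ matrix $A_U=C_UD_U$ for each $U\in\cG_q(n,r)$, where $C_U$ is $n\times r$ with column space $U$ and the full-rank $r\times m$ factors $D_U$ are chosen so that their row spaces form a partial spread of size $\sbinomq{n}{r}$ in $\cG_q(m,r)$. Then $A_U$ has column space exactly $U$, so all $\sbinomq{n}{r}$ subspaces are hit, and for $U\neq W$ Sylvester's rank inequality applied to $A_U-A_W=[\,C_U\mid C_W\,]\left(\begin{smallmatrix}D_U\\-D_W\end{smallmatrix}\right)$ gives $\text{rank}(A_U-A_W)\ge \dim(U+W)+2r-2r=\dim(U+W)\ge r+1$, where the disjoint row spaces ensure the $2r\times m$ factor has rank $2r$.

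I expect the main obstacle to be the lower bound of part (1). Unlike part (2), where the extremal distance $2r$ forces a clean geometric picture, here one must simultaneously reach all $\sbinomq{n}{r}$ column spaces while keeping every pairwise distance at least $r+1$, and the construction only succeeds once $m$ is large enough to host a partial spread of $\cG_q(m,r)$ with $\sbinomq{n}{r}$ members (as in the preceding theorem, $m\ge r(n-r)+r+1$). I would therefore state this size condition on $m$ explicitly and verify the partial-spread count before invoking the Sylvester estimate. I note finally that both parts can also be obtained from the preceding theorem by specializing $\delta=1$ and $\delta=r$ together with $\cA_q(n,1,r)=\sbinomq{n}{r}$; the direct arguments above are preferable because they make the extremal constructions transparent.
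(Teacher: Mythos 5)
You cannot really be compared against the paper's own argument here, because the paper offers none: both parts are quoted from \cite{GaYa10a} without proof, and, exactly as your closing remark says, they are the specializations $\delta=1$ and $\delta=r$ of the theorem stated immediately before this one. Judged on its own, your proof of part (2) is correct and complete: the factorization argument does characterize rank distance $2r$ by simultaneous disjointness of column spaces and of row spaces, the row spaces of any distance-$2r$ constant-rank code therefore form a partial spread in $\cG_q(n,r)$, and embedding $\F_q^n$ into $\F_q^m$ supplies the disjoint column spaces for the matching construction; this part is genuinely unconditional (given the standing assumption $2r\le n\le m$).

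Part (1), however, has a genuine gap, located exactly where you placed your trust. Your lower bound needs $\sbinomq{n}{r}$ pairwise disjoint $r$-dimensional subspaces of $\F_q^m$, and you assert that the threshold $m\ge r(n-r)+r+1$ (the $\delta=1$ case of the preceding theorem's hypothesis) guarantees their existence. It does not. Take $q=2$, $n=6$, $r=3$, so that $m=13$: your construction requires $\sbinomtwo{6}{3}=1395$ pairwise disjoint $3$-dimensional subspaces of $\F_2^{13}$, whereas by the result of \cite{EJSSS} quoted in the paper, $\cA_2(13,3,3)=\frac{2^{13}-2}{7}-1=1169<1395$. The reason is quantitative: for $q=2$ and $r\ge 3$ the Gaussian coefficient $\sbinomtwo{n}{r}$ exceeds $2^{r(n-r)+1}$, while no partial spread in $\F_2^{r(n-r)+r+1}$ can have that many members. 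The honest hypothesis for your construction is $\cA_q(m,r,r)\ge\sbinomq{n}{r}$, which the stated threshold does imply for $q\ge 3$ but which for $q=2$ forces $m\ge r(n-r)+r+2$. To get part (1) at the threshold you quote (let alone as printed in the paper, with no hypothesis on $m$ at all) one needs a finer pairing than global disjointness: the Sylvester estimate only requires, pair by pair, that the assigned row spaces satisfy $\dim(R_U\cap R_W)\le r-1-\dim(U\cap W)$, so disjointness of images is needed only for column spaces meeting in dimension $r-1$, and constructing such a ``distance-reversing'' assignment (or avoiding the Sylvester sufficient condition altogether) is the missing ingredient. Your instinct that part (1), unlike part (2), cannot hold without some largeness hypothesis on $m$ is sound, and it is a point on which the paper's own formulation is too loose; but the specific condition you propose does not make your own construction go through.
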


\begin{problem}
Find constructions for constant rank codes which are not derived
from the known constructions of constant dimension codes.
\end{problem}

\begin{problem}
Continue to develop the theory of constant rank codes beyond the results
given in~\cite{GaYa10a}.
\end{problem}

\begin{problem}
Find the exact value of $\cA_q^R (n,m, \delta +r ,r)$ for
$2 \leq \delta < r$, where $m \geq (n-r)(r-\delta+1)+r+1$.
\end{problem}

\begin{problem}
Develop the theory for $\cA_q^R (n,m, \delta +r ,r)$ for
$2 \leq \delta < r$, where $m < (n-r)(r-\delta+1)+r+1$.
\end{problem}

\section{Decoding of Subspace Codes}
\label{sec:decoding}

Error-correcting codes over any channel are constructed
for the purpose of correcting errors caused during the transmission
of the information on the channel. Therefore,
from a theoretical point of view one might be interested
in the size of the largest possible code. But,
from a practical point of view when a code
is constructed, we are more interested in its
decoding (but not neglecting the requirement for a large
code). For some of the first constructions
mentioned earlier the authors gave decoding algorithms,
e.g.~\cite{EtSi09,GMR12,KoKs08,MGR08,SKK08,Ska10}.
All these decoding algorithms are based on maximum likelihood decoding.
Most of these decoding algorithms are for
Grassmannian codes, but some can be adopted
for subspace codes with either the subspace
distance or the injection distance.
Quite naturally also list-decoding algorithms were developed
for Grassmannian codes (constructed either by linearlized polynomials or
as lifted rank-metric codes),
e.g~\cite{Aga11,GNW12,GuXi12,MaVa10,MaVa11,MaVa12a,MaVa12,RoTr12,TSR13}.
Also, this direction of research has many problems for future research.

\begin{problem}
Suggest new classes of large Grassmannian codes, subspace codes
(with the subspace distance or the injection distance)
with efficient decoding algorithms.
\end{problem}

\begin{problem}
Design a list-decoding algorithm for codes which are not subcodes
of either linearlized polynomial codes or lifted rank-metric codes.
\end{problem}

\begin{problem}
Find good lower and upper bounds on the size of Grassmannian
codes for list-decoding when the size of the list is a small constant.
\end{problem}

\begin{problem}
Find good lower and upper bounds on the size of subspace
codes (not necessarily Grassmannian)
for list-decoding when the size of the list is a small constant.
\end{problem}

The subspace codes constructed by the various construction methods
are not linear and therefore we should have an encoding algorithm
from the list of information words into the list of codewords.
If the code consists for example from one lifted MRD
code then the encoding is trivial by using the
encoding of the related rank-metric code. If the code is
constructed by the Multilevel Construction then the
encoding is slightly more complicated and it was described
in~\cite{KSK09}. Finally, encoding of all the Grassmannian space was
described in~\cite{Med12,SiEt11}.

\begin{problem}
Find better encoding algorithms for subspace codes, the
Grassmannian $\cG_q(n,k)$, or the projective space $\cP_q(n)$.
\end{problem}

\section{Designs over GF($q$)}
\label{sec:designs}

$q$-analog of Steiner systems are one type of $q$-analog
of designs, also called designs over~$\F_q$.
As was mentioned before,
the notion of $t$-design have been extended to
vector spaces by Cameron~\cite{Cam74,Cam74a} and Delsarte~\cite{Del76}
in the early 1970s.

A $t-(n,k,\lambda)_q$ design is a collection $\B$
of $k$-dimensional subspaces (called \emph{blocks}) from $\cG_q(n,k)$ such that
each $t$-dimensional subspace of $\cG_q(n,t)$ is contained in
exactly $\lambda$ blocks of $\B$. If $\B$ contains all the
$k$-dimensional subspaces of $\cG_q(n,k)$ then the design
is said to be trivial.

Thomas~\cite{Tho87} was the first to find nontrivial
$t$-design over $\F_q$, which are not spreads.
The work of Thomas has motivated other research work to explore
this topic and more $t$-designs over $\F_q$
were found~\cite{BKL05,Ito98,MMY95,Suz90,Suz92,Tho96}.

Another type of design over $\F_q$ which was defined
is the \emph{subspace transversal design}~\cite{EtSi13}.
It is not a direct $q$-analog
of a transversal design as will be explained in the sequel.

Let $\V^{(n,k)}$ be the set of nonzero vectors of $\F_q^n$ whose
first $k$ entries form a nonzero vector.
For a given $X \in \cG_q(k,1)$, let $\V_X^{(n,k)}$ denote the set nonzero
vectors in $\F_q^n$ whose first $k$ entries form any given nonzero vector of $X$.
Let $\V_{\bf 0}^{(n,k)}$ denote a maximal set of $\frac{q^{n-k}-1}{q-1}$ nonzero
vectors in $\F_q^n$ whose first $k$ entries are \emph{zeroes}, for which
any two vectors in the set are linearly independent.
Let $\V_{\bf 0}$ denote the $k$-dimensional subspace spanned by $\V_{\bf 0}^{(n,k)}$.

A \emph{subspace transversal  design} of groupsize $q^{n-k}$,
block dimension $k$, and \emph{strength}~$t$, denoted by
$\text{STD}_q (t, k, n-k)$, is a triple
$(\V,\mathbb{G},\mathbb{B})$, where
$\V$ is a set of points,
$\mathbb{G}$ is a set of groups, and $\mathbb{B}$ is
a set of blocks. These three sets must satisfy the following five
properties:

\begin{enumerate}
\item $\V$ is a set of size $\frac{q^k-1}{q-1} q^{n-k}$ (the \emph{points}).
$\bigcup_{X \in \cG_q(k,1)} \V_X^{(n,k)}$ is used as the set of points $\V$.

\item $\mathbb{G}$ is a partition of $\V$ into
$\frac{q^k-1}{q-1}$ classes of size $q^{n-k}$ (the \emph{groups});
the groups which are used are defined by
$\V_X^{(n,k)}$, $X \in \cG_q(k,1)$.

\item $\mathbb{B}$ is a collection of $k$-dimensional
subspaces of $\F_q^n$ which contain nonzero vectors
only from $\mathbb{V}^{(n,k)}$ (the
\emph{blocks});

\item each block meets each group in exactly one point;

\item every $t$-dimensional subspace (with points from $\mathbb{V}$) which
meets each group in at most one point is contained in exactly
one block.
\end{enumerate}

This is not a direct $q$-analog of a transversal design since
the elements of $\V_{\bf 0}^{(n,k)}$
don't participate in any block of the
design. It was proved in~\cite{EtSi13} that
the codewords of an $(n,\delta,k)_q$
$\CMRD$ form the blocks of a
$\text{STD}_{q}(k-\delta+1,k, n-k)$. It was also
shown in~\cite{EtSi13} how to use the properties of
subspace transversal design to obtain better bounds
on $\cA_q(n,\delta,k)$ with codes which contains $\CMRD$.
These properties were also used to construct $q$-covering
designs~\cite{Etz13} and parallelisms~\cite{Etz13a} and they probably can
be used for constructions of other related structures.

\begin{problem}
Find new $t-(n,k,\lambda)_q$ designs with new parameters.
\end{problem}

\begin{problem}
Find $q$-analogs for other known types of block designs
for which an application for a construction of subspace
codes can be given.
\end{problem}

\begin{problem}
Find $q$-analogs for other types of combinatorial designs,
such as Latin squares, orthogonal arrays, etc.
\end{problem}

\begin{problem}
Prove that for each $1 < t < k$ and each $q$ there exists an integer $n_0 > k$
such that for each $n > n_0$ a nontrivial $t-(n,k,\lambda)_q$ design exists.
\end{problem}

Recently, another type of $q$-analog for designs was considered.
This is a large set of a $t-(n,k,\lambda)_q$ design.
A \emph{large set} of a design $\cS$ is a partition of the
space into disjoint copies of $\cS$. Hence, a large set of
$t-(n,k,\lambda)_q$ designs is a partition of $\cG_q(n,k)$
into disjoint copies of $t-(n,k,\lambda)_q$ designs.
Parallelism in projective geometry is a large set
and this topic will be discussed separately in Section~\ref{sec:parallel}.
Braun, Kohnert, \"Osterg\aa rd, and Wassermann~\cite{BKOW}
presented a large set of $2-(8,3,21)_2$ designs.
This large set consists of three
disjoint $2-(8,3,21)_2$ designs.

\begin{problem}
Find more large sets of $t-(n,k,\lambda)_q$ designs.
\end{problem}

\section{$q$-Covering Designs}
\label{sec:covering}

A \emph{$q$-covering design} $\C_q(n,k,r)$ is a collection $\dS$ of
elements from $\cG_q(n,k)$ such that each element of $\cG_q(n,r)$
is contained in at least one element of $\dS$.
Let $\cC_q(n,k,r)$ denote the minimum number of subspaces in a
$q$-covering design $\C_q(n,k,r)$.

$q$-covering designs were considered first in the context of
projective geometry. A set $\T$ of $t$-subspaces in PG($n,q$) such that each $s$-subspace
contains at least one element of $\T$ is called a \emph{blocking set}.
Such a design is a $q$-analog of the well-known
{\it Tur\'{a}n design}~\cite{Caen83,Caen91}. The
dual subspaces of the subspaces in a
blocking set form a $q$-covering design $\C_q(n+1,n-t,n-s)$.
Blocking sets were considered for example in~\cite{Met03,Met04}.
We note that blocking sets have also some different definitions
(and maybe more popular definitions which define
other structures which are not $q$-coverings).

Similarly, to the case of error-correcting codes in the
projective space (which are $q$-packing designs) there are some basic bounds
on the size of a $q$-covering design. The first one is
the $q$-analog Sch\"{o}nheim bound~\cite{Sch64} which was given
in~\cite{EtVa11a}.

\begin{theorem}
\label{thm:schonheim} $\cC_q(n,k,r) \geq \left\lceil
\frac{q^n-1}{q^k-1} \cC_q(n-1,k-1,r-1) \right\rceil$.
\end{theorem}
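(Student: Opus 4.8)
The plan is to mimic the classical Schönheim bound by combining a derived-design (quotient) argument with a double count of point-block incidences. Fix an optimal $q$-covering design $\dS$ realizing $\cC_q(n,k,r)$, so that every $r$-dimensional subspace of $\F_q^n$ lies in at least one block (a $k$-dimensional subspace) of $\dS$.

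First I would localize at a point. For a one-dimensional subspace $P \in \cG_q(n,1)$, let $b_P$ denote the number of blocks of $\dS$ containing $P$, and pass to the quotient space $\F_q^n / P \cong \F_q^{n-1}$. The subspaces of $\F_q^n$ containing $P$ correspond bijectively, and in an inclusion-preserving way, to the subspaces of the quotient, with all dimensions dropping by one: a block $B \supseteq P$ maps to a $(k-1)$-dimensional subspace $\bar B$, and an $r$-dimensional subspace $R \supseteq P$ maps to an $(r-1)$-dimensional subspace $\bar R$. The key observation is that every $(r-1)$-dimensional subspace of the quotient is covered: it lifts to an $r$-dimensional $R \supseteq P$, which by the covering property of $\dS$ lies in some block $B$; since $P \subseteq R \subseteq B$, this $B$ passes through $P$ and $\bar B \supseteq \bar R$. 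Hence the distinct images $\{\bar B : B \in \dS,\ P \subseteq B\}$ form a $q$-covering design $\C_q(n-1,k-1,r-1)$ in the quotient, and therefore $b_P \geq \cC_q(n-1,k-1,r-1)$.

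Next I would double count incidences between points and blocks. Summing $b_P$ over all $\frac{q^n-1}{q-1}$ points of $\F_q^n$ counts each incident (point, block) pair exactly once, and each block contains exactly $\frac{q^k-1}{q-1}$ points, so $\sum_P b_P = \cC_q(n,k,r)\cdot \frac{q^k-1}{q-1}$. Combining this with the per-point bound from the previous step gives $\cC_q(n,k,r)\cdot \frac{q^k-1}{q-1} \geq \frac{q^n-1}{q-1}\,\cC_q(n-1,k-1,r-1)$; dividing through and using that $\cC_q(n,k,r)$ is an integer yields the claimed ceiling bound.

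The main obstacle, and really the only delicate point, is justifying the quotient correspondence used above: one must verify that passing to $\F_q^n/P$ is a dimension-shifting, inclusion-preserving bijection on the subspaces through $P$, so that both the covering condition and the containment of $r$-subspaces in $k$-subspaces transfer faithfully to the $(r-1)$-in-$(k-1)$ condition in dimension $n-1$. A minor subtlety worth recording is that several blocks through $P$ may share the same image $\bar B$; this only lowers the count of distinct images, so the inequality $b_P \geq \cC_q(n-1,k-1,r-1)$ still holds, since the distinct images already constitute a valid covering of the quotient.
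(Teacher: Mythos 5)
Your proof is correct, and it is essentially the intended argument: the survey states this bound without proof, citing~\cite{EtVa11a}, and the proof there is precisely your localization at a point (the blocks through a one-dimensional subspace $P$ project under the correspondence theorem to a $q$-covering design $\C_q(n-1,k-1,r-1)$ in $\F_q^n/P \cong \F_q^{n-1}$) followed by double counting point--block incidences and invoking integrality. One small remark: your closing caveat is unnecessary, since the correspondence between subspaces of $\F_q^n$ containing $P$ and subspaces of $\F_q^n/P$ is a bijection, so two distinct blocks through $P$ can never share an image; as you observe, this does not affect the inequality in any case.
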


The basic covering bound is given in the following theorem~\cite{EtVa11a}.
\begin{theorem}
\label{thm:covering_bound} $\cC_q(n,k,r) \geq
\begin{small} \frac{\sbinomq{n}{r}}{\sbinomq{k}{r}} \end{small}$
with equality holds if and
only if a Steiner structure $\dS_q(r,k,n)$ exists.
\end{theorem}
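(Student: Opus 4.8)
The plan is to prove the inequality by a standard double-counting (incidence-counting) argument and then to read off the equality condition directly from the single point at which the counting inequality becomes tight.

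First I would fix an arbitrary $q$-covering design $\dS$ and count incident pairs $(R,B)$ with $R \in \cG_q(n,r)$, $B \in \dS$, and $R \subseteq B$. Counting by blocks, each $k$-dimensional block $B$ contains exactly $\sbinomq{k}{r}$ subspaces of dimension $r$, so the number of incident pairs equals $|\dS| \cdot \sbinomq{k}{r}$. Counting by $r$-subspaces instead, let $\lambda_R$ denote the number of blocks of $\dS$ that contain $R$; the covering property gives $\lambda_R \geq 1$ for every $R \in \cG_q(n,r)$, so the number of incident pairs equals $\sum_R \lambda_R \geq \sbinomq{n}{r}$. Combining the two counts yields $|\dS| \cdot \sbinomq{k}{r} \geq \sbinomq{n}{r}$, and since this holds for every covering design it holds in particular for one of minimum size, giving $\cC_q(n,k,r) \geq \sbinomq{n}{r} / \sbinomq{k}{r}$.

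For the equality characterization I would trace the single inequality used, namely $\sum_R \lambda_R \geq \sbinomq{n}{r}$, which is tight exactly when $\lambda_R = 1$ for every $R$. Hence a covering of minimum size $\sbinomq{n}{r}/\sbinomq{k}{r}$ exists if and only if there is a collection $\dS$ of $k$-subspaces in which every $r$-subspace is contained in exactly one block, which is precisely the definition of a Steiner structure $\dS_q(r,k,n)$. Conversely, any such Steiner structure is in particular a covering design, and recounting its incident pairs with all $\lambda_R = 1$ produces exactly $\sbinomq{n}{r}/\sbinomq{k}{r}$ blocks, so it meets the bound and witnesses equality.

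The argument has no genuine obstacle; the only point requiring care is the equality direction, where one must observe that $\cC_q(n,k,r) = \sbinomq{n}{r}/\sbinomq{k}{r}$ forces the ratio to be an integer and forces a minimum covering to cover each $r$-subspace exactly once. This observation is what ties the integer-valued covering number to the a priori fractional bound, and it is exactly the bridge to the Steiner-structure condition.
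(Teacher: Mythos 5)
Your proof is correct, and it is essentially the standard double-counting argument behind this bound: the paper itself states the theorem without proof (citing the Etzion--Vardy paper on $q$-analogs of Steiner systems and covering designs), and the argument there is the same incidence count, with equality forcing every $r$-dimensional subspace to lie in exactly one block, i.e.\ forcing a Steiner structure $\dS_q(r,k,n)$. Your handling of both directions of the equality characterization, including the observation that a minimum covering meeting the bound must cover each $r$-subspace exactly once, is exactly right.
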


As in the case of the $q$-analog Johnson bound (Theorem~\ref{thm:packing}) also the
$q$-analog Sch\"{o}nheim bound can be iterated.

\begin{theorem}
$$
\cC_q(n,k,r) \geq \left\lceil\frac{q^n\!-\!1}{q^k\!-\!1}
\left\lceil\frac{q^{n-1}\!-\!1}{q^{k-1}\!-\!1} \cdots
\left\lceil\frac{q^{n-r+1}\!-\!1}{q^{k-r+1}\!-\!1}\right\rceil
\cdots\right\rceil\right\rceil \geq
\frac{\sbinomq{n}{r}}{\sbinomq{k}{r}}~.
$$
\end{theorem}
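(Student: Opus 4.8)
The plan is to prove the two inequalities separately. The left-hand inequality follows by iterating the $q$-analog Sch\"{o}nheim bound (Theorem~\ref{thm:schonheim}), while the right-hand inequality is a purely arithmetic fact: dropping every rounding operation leaves a product of fractions that telescopes into the ratio of Gaussian coefficients, so the iterated bound can only be stronger than the covering bound of Theorem~\ref{thm:covering_bound}.

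First I would establish the left inequality by induction on $r$. For the base case $r=1$, a $q$-covering design $\C_q(n,k,1)$ must cover every one-dimensional subspace of $\F_q^n$; since each block covers $\frac{q^k-1}{q-1}$ of the $\frac{q^n-1}{q-1}$ one-dimensional subspaces, integrality of $\cC_q(n,k,1)$ forces $\cC_q(n,k,1) \geq \left\lceil \frac{q^n-1}{q^k-1}\right\rceil$, which is the depth-one nested expression. For the inductive step, Theorem~\ref{thm:schonheim} gives $\cC_q(n,k,r) \geq \left\lceil \frac{q^n-1}{q^k-1}\,\cC_q(n-1,k-1,r-1)\right\rceil$. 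Substituting the induction hypothesis for $\cC_q(n-1,k-1,r-1)$ and using that the map $x \mapsto \left\lceil \frac{q^n-1}{q^k-1}\,x\right\rceil$ is nondecreasing yields precisely the nested-ceiling expression with $r$ levels, completing the induction.

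Next I would prove the right inequality. Writing $f_i = \frac{q^{n-i}-1}{q^{k-i}-1}$ for $0 \leq i \leq r-1$, so that the full expression is $\left\lceil f_0 \left\lceil f_1 \cdots \left\lceil f_{r-1}\right\rceil \cdots\right\rceil\right\rceil$, I would show by a downward induction on the nesting level that each partial nested ceiling is at least the corresponding tail product of the $f_i$, using only $\lceil x\rceil \geq x$ together with $f_i \geq 0$. At the outermost level this gives that the whole expression is at least $\prod_{i=0}^{r-1} f_i = \prod_{i=0}^{r-1}\frac{q^{n-i}-1}{q^{k-i}-1} = \frac{\sbinomq{n}{r}}{\sbinomq{k}{r}}$, where the final equality is the direct cancellation of the common denominator $(q^r-1)(q^{r-1}-1)\cdots(q-1)$ in the definition of the Gaussian coefficients.

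The only delicate point is the bookkeeping in the inductive step of the left inequality: one must verify that replacing the quantity inside a ceiling by a smaller lower bound never increases the ceiling, i.e. the monotonicity of $x \mapsto \lceil c\,x\rceil$ for $c>0$. This is elementary, so I expect no genuine obstacle; once the two rounding facts are organized correctly the theorem is a formal consequence of Theorems~\ref{thm:schonheim} and~\ref{thm:covering_bound}.
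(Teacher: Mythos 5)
Your proposal is correct and follows exactly the route the paper intends: the paper states this theorem without a written proof, remarking only that the $q$-analog Sch\"{o}nheim bound (Theorem~\ref{thm:schonheim}) ``can be iterated,'' which is precisely your induction, and the second inequality is the same ceiling-dropping telescoping argument that yields $\frac{\sbinomq{n}{r}}{\sbinomq{k}{r}}$. Your writeup simply makes explicit the routine details (the base case $r=1$, the monotonicity of $x \mapsto \lceil cx \rceil$ for $c>0$, and the cancellation of the common factor $(q^r-1)\cdots(q-1)$ in the Gaussian coefficients) that the paper leaves to the reader.
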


Another lower bound given in~\cite{EtVa11a} is a $q$-analog of a
theorem given by de Caen in~\cite{Caen83,Caen91}.

\begin{theorem}
\label{thm:bound_k,k-1} $\cC_q(n,k,k-1) \geq
\frac{(q^k-1)(q-1)}{(q^{n-k}-1)^2} \sbinomq{n}{k+1}$.
\end{theorem}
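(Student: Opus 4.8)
The plan is to mimic de Caen's double-counting argument in the Grassmannian. Fix an optimal covering $\dS$, so that $N := |\dS| = \cC_q(n,k,k-1)$ and every $(k-1)$-subspace of $\F_q^n$ lies in at least one block. I would track two incidence statistics: for a $(k-1)$-subspace $R$ let $\lambda(R)$ be the number of blocks containing $R$ (the covering hypothesis is exactly $\lambda(R)\ge 1$), and for a $(k+1)$-subspace $W$ let $m_W$ be the number of blocks contained in $W$. Counting the containments $R\subset B$ and $B\subset W$ in two ways gives $\sum_R \lambda(R) = N\sbinomq{k}{1}$ and $\sum_W m_W = N\sbinomq{n-k}{1}$, the sums running over all $(k-1)$- and $(k+1)$-subspaces respectively. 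Since there are $\sbinomq{n}{k+1}$ subspaces $W$, the desired inequality is equivalent to the clean statement that the \emph{average} number of blocks inside a $(k+1)$-subspace is at least $\sbinomq{k}{1}/\sbinomq{n-k}{1}$; this is the form I would aim to establish.

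The engine is a local covering inequality inside a single $(k+1)$-subspace $W$. Its $\sbinomq{k+1}{2}$ subspaces of dimension $k-1$ must all be covered, and a block $B$ covering such an $R\subset W$ meets $W$ in dimension at least $k-1$: either $B\subseteq W$, in which case $B$ is a hyperplane of $W$ and covers $\sbinomq{k}{1}$ of these subspaces at once, or $B\cap W=R$, an external block accounting for the single subspace $R$ only. Writing $e_W$ for the number of external blocks relevant to $W$, this yields $\sbinomq{k}{1}\,m_W + e_W \ge \sbinomq{k+1}{2}$. I would then sum over all $W$, insert $\sum_W m_W = N\sbinomq{n-k}{1}$, and evaluate $\sum_W e_W$ by counting, for each block $B$, the number of $(k+1)$-subspaces meeting $B$ in dimension $k-1$ (a pure $q$-binomial computation via the quotient $\F_q^n/R$). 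This already yields a lower bound on $N$ of the same shape as the statement, with a companion constant.

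The delicate point, and the main obstacle, is extracting the sharp constant $\tfrac{(q^k-1)(q-1)}{(q^{n-k}-1)^2}$ rather than this weaker companion. The crude inequality charges each internal hyperplane of $W$ with $\sbinomq{k}{1}$ fresh subspaces, but any two hyperplanes of $W$ meet in a common $(k-1)$-subspace, so the internal blocks necessarily overlap in what they cover; this forced overlap, together with the irregularity of $G(\cP_q^S(n))$ noted earlier, must be absorbed by a second-moment (Cauchy--Schwarz or convexity) estimate applied to the multiplicities $\#\{B\in\dS : R\subseteq B\subseteq W\}$ within each $W$ before the summation is carried out. It is precisely in balancing the internal contribution $\sbinomq{k}{1}m_W$ against the external budget $e_W$ through this second-moment step that the squared factor $(q^{n-k}-1)^2$ should emerge, and I expect the $q$-binomial bookkeeping across that step to be the part demanding the most care.
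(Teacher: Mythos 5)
First, a framing remark: the paper itself contains no proof of this theorem --- it is quoted from~\cite{EtVa11a} as the $q$-analog of de Caen's bound~\cite{Caen83,Caen91} --- so your proposal has to stand on its own. Its first two paragraphs do stand: the reformulation is correct, since $\sbinomq{k}{1}/\sbinomq{n-k}{1}^2 = (q^k-1)(q-1)/(q^{n-k}-1)^2$; the local inequality $\sbinomq{k}{1}\,m_W + e_W \geq \sbinomq{k+1}{2}$ is valid (an external block covering $R\subset W$ indeed satisfies $B\cap W=R$); and the two global counts come out as you predict, namely $\sum_W m_W = N\sbinomq{n-k}{1}$ and $\sum_W e_W = N\sbinomq{k}{1}\bigl(\sbinomq{n-k+1}{2}-\sbinomq{n-k}{1}\bigr) = N\sbinomq{k}{1}q^2\sbinomq{n-k}{2}$. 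Using the $q$-Pascal identity $\sbinomq{n-k}{1}+q^2\sbinomq{n-k}{2}=\sbinomq{n-k+1}{2}$, the summation yields
$$
N \;\geq\; \frac{(q^{k+1}-1)(q-1)}{(q^{n-k+1}-1)(q^{n-k}-1)}\,\sbinomq{n}{k+1}\,,
$$
your ``companion'' bound.

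The gap is in the third paragraph, and it is twofold. First, the second-moment step that is supposed to turn this companion constant into the stated one is never formulated: no inequality is written down, no quantity is identified to which Cauchy--Schwarz is applied, and no mechanism is given by which the denominator $(q^{n-k}-1)^2$ would appear. That step is precisely the content of the theorem, so what you have is a plan, not a proof. Second, the premise of the plan is miscalibrated. Cross-multiplying the two constants gives $(q^{k+1}-1)(q^{n-k}-1)-(q^k-1)(q^{n-k+1}-1)=(q-1)(q^{n-k}-q^k)$, so your companion bound is \emph{stronger} than the theorem when $n>2k$, equal when $n=2k$, and weaker only when $n<2k$. Hence your argument already proves the theorem in the range $n\geq 2k$, and all of the remaining difficulty sits in the range $n<2k$, where it is far from clear that charging the overlaps of internal hyperplanes can close the gap: two hyperplanes of $W$ do share a $(k-1)$-subspace, but distinct pairs may share the \emph{same} one (in the dual PG($k,q$) this is collinearity of the dual points), so the correction you can legitimately subtract from the union bound is not $\binom{m_W}{2}$ but only the linear term $m_W-1$; a genuinely quadratic correction would require controlling collinear triples of dual points, which nothing in the proposal addresses. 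Until that step is actually carried out --- or replaced by the counting scheme of de Caen that~\cite{EtVa11a} adapts --- the statement remains unproven for $n<2k$.
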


The following two theorems given for example
in~\cite{EtVa11a} are also simple to obtain.
\begin{theorem}
\label{thm:optc2} If $1 \leq k \leq n$, then
$\cC_q(n,k,1)=\left\lceil \frac{q^n-1}{q^k-1} \right\rceil$.
\end{theorem}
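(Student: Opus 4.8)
The plan is to establish matching lower and upper bounds on $\cC_q(n,k,1)$. The lower bound is immediate from the covering bound of Theorem~\ref{thm:covering_bound}: specializing to $r=1$ gives $\cC_q(n,k,1)\ge \sbinomq{n}{1}/\sbinomq{k}{1}=(q^n-1)/(q^k-1)$, and since $\cC_q(n,k,1)$ is a positive integer it must be at least $\ceilenv{(q^n-1)/(q^k-1)}$. Thus the entire difficulty lies in exhibiting a $q$-covering design of exactly this size.

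For the upper bound I would peel off $k$ dimensions at a time. Assume first that $n\ge 2k$ and fix an $(n-k)$-dimensional subspace $U$ of $\F_q^n$. The points (one-dimensional subspaces) lying in $U$ will be covered recursively by an optimal covering of $U\cong \F_q^{n-k}$ using $\cC_q(n-k,k,1)$ subspaces, which are $k$-subspaces of $\F_q^n$ as well. The key step is to cover the remaining points --- those not contained in $U$ --- by exactly $q^{n-k}$ pairwise intersection-free $k$-subspaces, each meeting $U$ only in $\{{\bf 0}\}$. Granting this, I obtain the recursion $\cC_q(n,k,1)\le q^{n-k}+\cC_q(n-k,k,1)$.

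To build those $q^{n-k}$ subspaces, write $\F_q^n=W\oplus U$ with $\dim W=k$ and represent a $k$-subspace complementary to $U$ as the graph $V_\phi=\{(w,\phi(w)):w\in W\}$ of a linear map $\phi\colon W\to U$. Two such graphs meet only in $\{{\bf 0}\}$ precisely when $\phi-\phi'$ is injective, i.e. has full column rank $k$; likewise every $V_\phi$ meets $U$ trivially. Hence what I need is a family of $(n-k)\times k$ matrices over $\F_q$, any two of which differ by a rank-$k$ matrix --- exactly a rank-metric code of minimum distance $k$. Since $k\le n-k$, the Singleton bound~\eqref{eq:MRD} permits dimension $n-k$, and an MRD code attaining it supplies $q^{n-k}$ such maps. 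Because these $q^{n-k}$ graphs are pairwise disjoint (projectively) and each carries $(q^k-1)/(q-1)$ points, they cover $q^{n-k}(q^k-1)/(q-1)$ points, which is exactly the number of points outside $U$; being disjoint, they partition that set, as required.

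It remains to handle the base range $k\le n<2k$ and to check that the recursion telescopes. Writing $n=k+b$ with $0\le b<k$, the case $b=0$ is covered by the single subspace $\F_q^n$, while for $1\le b<k$ I fix a $(k-b)$-dimensional subspace $S$ and take all $k$-subspaces containing $S$ that are pulled back from a spread of the quotient $\F_q^n/S\cong\F_q^{2b}$ into $b$-dimensional subspaces (such a spread exists by Theorem~\ref{thm:exactk_2k}, as $b\mid 2b$, and has $q^b+1$ members); a point not in $S$ projects into a unique spread element and so lies in the corresponding $k$-subspace, hence these $q^b+1$ subspaces cover everything. Unfolding the recursion from $n=ak+b$ down to the base case then yields $\sum_{i=0}^{a-1}q^{ik+b}+1$ subspaces when $b>0$ and $\sum_{i=0}^{a-1}q^{ik}$ when $b=0$, and a short computation shows that each of these equals $\ceilenv{(q^n-1)/(q^k-1)}$. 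The main obstacle is the third paragraph --- producing the intersection-free family that partitions the points off $U$; once this is recast as the existence of an MRD code of minimum rank distance $k$, the remaining bookkeeping is routine.
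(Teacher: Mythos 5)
Your proof is correct, but there is nothing in the paper to compare it against line by line: the survey states Theorem~\ref{thm:optc2} without proof, attributing it to~\cite{EtVa11a} and, in projective-geometry language, to Beutelspacher~\cite{Beu79}, so your argument supplies a construction that the paper only cites. All the key steps check out. The lower bound is the $r=1$ case of Theorem~\ref{thm:covering_bound} plus integrality. For the upper bound, your family $\{V_\phi\}$ is precisely the lifted MRD code $\CMRD$ of Section~\ref{sec:cdc} viewed as an $(n,k,k)_q$ partial spread: for $(n-k)\times k$ matrices with minimum rank distance $\delta=k$, the Singleton bound~(\ref{eq:MRD}) reads $\varrho\le\min\{n-k,\,k(n-2k+1)\}$, and this minimum equals $n-k$ whenever $n\ge 2k$, so an MRD code supplies the required $q^{n-k}$ maps; your disjointness-plus-counting argument then correctly shows that these graphs partition the points off $U$, giving $\cC_q(n,k,1)\le q^{n-k}+\cC_q(n-k,k,1)$. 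The base case $k\le n<2k$, obtained by pulling back a $b$-spread of $\F_q^n/S\cong\F_q^{2b}$ (which exists by Theorem~\ref{thm:exactk_2k}), and the telescoping to $\ceilenv{\frac{q^n-1}{q^k-1}}$ are also right. One remark worth making explicit: peeling off $k$ dimensions at a time is essential, not a convenience. The paper's own recursion, Theorem~\ref{thm:recursiveC}, peels off one dimension at a time, and for $r=1$ it telescopes to $\frac{q^{n-k+1}-q}{q-1}+1$, which exceeds the true value by a factor of roughly $\frac{q}{q-1}$; so the exact value cannot be extracted from the general-purpose bounds stated in the survey, and your partial-spread decomposition (essentially Beutelspacher's original construction) is exactly what makes the ceiling attainable.
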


\begin{theorem}
\label{thm:Turan=1} If $1 \leq r \leq n-1$, then $\cC_q(n,n-1,r) = \frac{q^{r+1}-1}{q-1}$.
\end{theorem}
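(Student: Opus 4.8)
The plan is to establish matching upper and lower bounds, both of which are short given the tools already assembled in this section. For the upper bound I would exhibit an explicit $q$-covering design. Fix any $(n-r-1)$-dimensional subspace $S_0$ of $\F_q^n$ and let $\dS$ consist of \emph{all} hyperplanes (i.e. $(n-1)$-dimensional subspaces) containing $S_0$. The number of such hyperplanes equals the number of hyperplanes of the quotient $\F_q^n/S_0 \cong \F_q^{r+1}$, which is $\sbinomq{r+1}{1}=\frac{q^{r+1}-1}{q-1}$. To see that $\dS$ covers every $r$-dimensional subspace $W$, observe that $\dim(S_0+W)\le (n-r-1)+r=n-1$, so $S_0+W$ lies in some hyperplane $H$; this $H$ contains $S_0$, hence $H\in\dS$, and it contains $W$. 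Thus $\cC_q(n,n-1,r)\le \frac{q^{r+1}-1}{q-1}$.

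For the lower bound I would iterate the $q$-analog Sch\"onheim bound (Theorem~\ref{thm:schonheim}). Applying it $r-1$ times, each step lowering each of $n$, $k$, and $r$ by one (so the relation $k=n-1$ is preserved), reduces the problem to the base quantity $\cC_q(n-r+1,n-r,1)$, which by Theorem~\ref{thm:optc2} equals $\lceil \frac{q^{n-r+1}-1}{q^{n-r}-1}\rceil=q+1$ (here $n-r\ge 1$ since $r\le n-1$). It then remains to evaluate the resulting nested ceiling and show that it collapses to exactly $\frac{q^{r+1}-1}{q-1}$.

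I would prove this last claim by induction on $r$. Writing $f(n,r)$ for the iterated lower bound, the base case $r=1$ is Theorem~\ref{thm:optc2} and gives $q+1=\frac{q^2-1}{q-1}$. For the inductive step one has $f(n,r)=\lceil \frac{q^n-1}{q^{n-1}-1} f(n-1,r-1)\rceil$, and substituting $f(n-1,r-1)=\frac{q^r-1}{q-1}$ reduces everything to verifying
$$
\frac{q^{r+1}-1}{q-1}-1 \;<\; \frac{(q^n-1)(q^r-1)}{(q^{n-1}-1)(q-1)} \;\le\; \frac{q^{r+1}-1}{q-1}.
$$
The right inequality is equivalent to $(q-1)(q^{n-1}-q^{r})\ge 0$, which holds because $n-1\ge r$; the left inequality simplifies to $q(q^{n-1}-1)<q^n-1$, i.e. $q>1$. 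Hence the ceiling equals $\frac{q^{r+1}-1}{q-1}$, matching the construction, and the theorem follows.

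The one place that genuinely requires care — and the main obstacle — is precisely this control of the nested ceilings: the two one-line inequalities above are what guarantee that no rounding loss or gain accumulates over the $r-1$ iterations, so I would isolate them as the crux. (As a sanity check, the whole statement is the dual, via orthogonal complements, of the Bose--Burton theorem on the minimum number of points meeting every $(n-r)$-dimensional subspace, which confirms that $\frac{q^{r+1}-1}{q-1}$ is the correct extremal value; note that the naive incidence bound $\cC_q(n,n-1,r)\ge \frac{q^n-1}{q^{n-r}-1}$ from Theorem~\ref{thm:covering_bound} is already too weak once $r\ge 2$, so the iteration really is needed.)
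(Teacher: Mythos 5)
Your proof is correct, but it cannot be compared to ``the paper's proof'' because the paper gives none: Theorem~\ref{thm:Turan=1} is stated as a known result, described as ``simple to obtain,'' and attributed to~\cite{EtVa11a} and, in the language of projective geometry, to Bose and Burton~\cite{BoBu66}. What you have produced is a valid self-contained reconstruction using exactly the tools this section supplies. Your upper bound (all hyperplanes through a fixed $(n-r-1)$-dimensional subspace $S_0$, of which there are $\sbinomq{r+1}{1}=\frac{q^{r+1}-1}{q-1}$, covering every $r$-dimensional $W$ because $\dim(S_0+W)\le n-1$) is precisely the dual of the Bose--Burton extremal configuration, and your lower bound via iterating Theorem~\ref{thm:schonheim} down to Theorem~\ref{thm:optc2} is the route one would expect from~\cite{EtVa11a}. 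The ceiling-control computation, which you correctly identify as the only delicate point, checks out: for the induction one really only needs the strict lower inequality (to force the ceiling up to $\frac{q^{r+1}-1}{q-1}$), since the matching construction already caps the value from above; your additional verification that the ceiling collapses exactly is harmless. Your closing observation that the plain covering bound of Theorem~\ref{thm:covering_bound} gives only $\frac{q^n-1}{q^{n-r}-1}$, which is strictly weaker for $r\ge 2$, is also accurate and explains why the iteration is genuinely needed rather than a convenience.
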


Theorem~\ref{thm:Turan=1} was proved before in the context of
projective geometry by Bose and Burton in~\cite{BoBu66}.
Another lower bound was given in~\cite{EiMe97} by considering sets
of lines in PG($2s,q$) contained in $s$-subspaces.
\begin{theorem}
\label{thm:EiMe}
$\cC_q (2s+1,2s-1,s) \geq \frac{q^{2s+2}-q^2}{q^2-1} + \frac{q^{s+1}-1}{q-1}$ for
every integer $s \geq 2$.
\end{theorem}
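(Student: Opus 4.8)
The plan is to pass to the dual object and then bound it by an averaging recursion over hyperplanes. First I would use orthogonal complements. If $\dS \subseteq \cG_q(2s+1,2s-1)$ is an optimal covering realizing $\cC_q(2s+1,2s-1,s)$, then $L \deff \{ W^{\perp} : W \in \dS\}$ is a set of $2$-dimensional subspaces of $\F_q^{2s+1}$. Since $R \subseteq W$ is equivalent to $W^{\perp} \subseteq R^{\perp}$, and $R \mapsto R^{\perp}$ is a bijection between $s$-dimensional and $(s+1)$-dimensional subspaces, the covering condition becomes: every $(s+1)$-dimensional subspace of $\F_q^{2s+1}$ contains a member of $L$. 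In the language of PG$(2s,q)$ this says exactly that $L$ is a set of \emph{lines} such that every $s$-subspace contains one of them, the object studied in~\cite{EiMe97}, and $\cC_q(2s+1,2s-1,s)=|L|$. So it suffices to lower bound the size of a set of lines blocking every $s$-subspace of PG$(2s,q)$.

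As a warm-up, and to see what must be gained, the $q$-analog covering bound (Theorem~\ref{thm:covering_bound}) already gives $\cC_q(2s+1,2s-1,s) \geq \sbinomq{2s+1}{s}/\sbinomq{2s-1}{s} = \frac{(q^{2s+1}-1)(q^{2s}-1)}{(q^{s+1}-1)(q^s-1)}$, whose leading term is the correct $q^{2s}$. However, in the case $s=2$, $q=2$ this reads $\lceil 465/21\rceil = 23$, whereas the claimed bound is $27$; so the covering bound falls short by lower-order terms, and the surplus $\frac{q^{s+1}-1}{q-1}$ must be produced by a finer argument.

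For that refinement I would average $L$ over the hyperplanes of PG$(2s,q)$. The key local observation is that for any hyperplane $H \cong$ PG$(2s-1,q)$, every $s$-subspace lying inside $H$ still must contain a line of $L$, and that line necessarily lies in $H$; hence the lines of $L$ contained in $H$ already block all $s$-subspaces of $H$. Writing $F(m)$ for the minimum number of lines blocking all $s$-subspaces in PG$(m,q)$ with $s$ fixed (so the theorem concerns $F(2s)$), this yields $|L \cap H| \geq F(2s-1)$ for every hyperplane $H$. Double counting incident pairs $(\ell,H)$ with $\ell \in L$ and $\ell \subseteq H$ — a line lies in $\frac{q^{2s-1}-1}{q-1}$ of the $\frac{q^{2s+1}-1}{q-1}$ hyperplanes — gives the recursion $F(2s) \geq \frac{q^{2s+1}-1}{q^{2s-1}-1}\,F(2s-1)$, and the identical restriction argument applied inside a hyperplane relates $F(2s-1)$ to $F(2s-2)$, and so on down the dimension.

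The hard part is that this averaging, taken by itself, only reproduces the covering bound and hence the leading term: the additive surplus $\frac{q^{s+1}-1}{q-1}$ has to be injected somewhere and then propagated. I expect the correct route is to prove, at the bottom of the recursion (a low-dimensional PG in which exact blocking numbers are available), a bound that beats the local covering estimate by an integral amount, and then to track this integrality gap — together with the floors forced at each averaging step — as it is amplified on the way back up to dimension $2s$. Making this accounting \emph{exact} rather than merely order-correct is precisely the delicate combinatorial analysis of~\cite{EiMe97}; an equivalent alternative is to write down the full system of standard equations for the numbers of lines of $L$ contained in the subspaces of each dimension and to extract the inequality directly, checking the base case $s=2$ (namely $\cC_q(5,3,2)\geq q^4+2q^2+q+1$) separately.
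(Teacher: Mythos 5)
First, a point of comparison: the paper contains no proof of this theorem at all --- it is quoted from Eisfeld and Metsch~\cite{EiMe97}, and the sentence preceding the statement records precisely the translation you perform, namely that a set of lines of PG($2s,q$) meeting every $s$-subspace is, under the blocking-set/covering duality described at the start of Section~\ref{sec:covering}, the same object as a $q$-covering design $\C_q(2s+1,2s-1,s)$. So your opening duality step is correct and coincides exactly with the paper's framing, and your calibration is also accurate: Theorem~\ref{thm:covering_bound} gives only $\lceil 465/21 \rceil = 23$ for $(s,q)=(2,2)$, against the claimed $27$.

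The rest of the proposal, however, has a genuine gap: the surplus $\frac{q^{s+1}-1}{q-1}$, which is the entire content of the theorem beyond the covering bound, is never obtained, and the mechanism you propose for obtaining it demonstrably cannot work. Your hyperplane-averaging recursion telescopes (with the trivial value $1$ at the bottom) to exactly the covering bound, as you note. The hoped-for remedy --- a strict gain over the local covering estimate at the bottom of the recursion, propagated upward through the floors --- does not materialize. Concretely, for $(s,q)=(2,2)$ the first nontrivial level of your recursion is PG($3,2$), where the minimum number of lines blocking all planes is \emph{exactly} $5$: a line spread suffices (a plane containing no spread line would meet the five disjoint lines in at most five of its seven points, a contradiction), and $5$ is also the lower bound from the dual point-cover via Theorem~\ref{thm:optc2}. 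Thus no surplus is generated at the bottom, and feeding the exact value $5$ into your averaging step yields only $F(4) \geq \lceil \frac{31}{7}\cdot 5\rceil = 23$, still short of $27$. Hence no bookkeeping of integrality gaps along this recursion can reach the stated bound; the Eisfeld--Metsch argument is a structurally different counting, not a refinement of the Sch\"{o}nheim-type iteration, and your deferral of it to~\cite{EiMe97} (like your vaguer alternative of ``standard equations'') leaves the statement unproven --- though, to be fair, citing~\cite{EiMe97} is also all the paper itself does.
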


Metsch~\cite{Met03} also gave a construction for a set of lines in
PG($2s+x-1,q$), for every $1 \leq x \leq s$, contained in $s$-subspaces,
which yields the following theorem:
\begin{theorem}
\label{thm:Met03a}
For any given integers $q \geq 2$, $1 \leq x \leq s$ we have
$\cC_q (2s+x,2s+x-2,s+x-1)  \leq \frac{q^{2s+2x}-q^{2x}}{q^2-1}
+ \frac{q^x-1}{q-1} \cdot \frac{q^{s+x}-q^{x-1}}{q-1}$.
\end{theorem}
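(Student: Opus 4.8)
The plan is to route the entire argument through the blocking-set/covering-design duality recorded earlier in this section and then to exhibit an explicit family of lines of the right cardinality.

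\textbf{Reduction via duality.} First I would translate the statement into a line-covering problem. Recall that a set of $t$-subspaces of PG$(n,q)$ meeting every $s$-subspace (a blocking set) dualizes to a $q$-covering design $\C_q(n+1,n-t,n-s)$ of the same cardinality. Taking $t=1$ and $n=2s+x-1$, a family $\mathcal{L}$ of lines of PG$(2s+x-1,q)$ with the property that \emph{every} $s$-subspace contains at least one member of $\mathcal{L}$ dualizes to a design with parameters $\C_q(2s+x,\,2s+x-2,\,s+x-1)$ having exactly $|\mathcal{L}|$ blocks. Hence it suffices to construct such an $\mathcal{L}$ with $|\mathcal{L}|$ at most the claimed right-hand side.

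\textbf{The construction.} Fix complementary subspaces $V\cong$ PG$(x-1,q)$ and $\Sigma\cong$ PG$(2s-1,q)$ spanning the ambient space, and fix a line spread $\cS$ of $\Sigma$, which has $\frac{q^{2s}-1}{q^2-1}$ lines. I would build $\mathcal{L}$ in two parts. For the first part, to each spread line $\ell\in\cS$ attach the cone $\langle V,\ell\rangle\cong$ PG$(x+1,q)$ and place into $\mathcal{L}$ all lines of this cone disjoint from $V$; a short count (such a line is a $2$-dimensional subspace complementary to $V$ inside the $(x+2)$-dimensional space underlying the cone) gives exactly $q^{2x}$ of them, and since distinct cones meet precisely in $V$, these contribute $q^{2x}\cdot\frac{q^{2s}-1}{q^2-1}=\frac{q^{2s+2x}-q^{2x}}{q^2-1}$ distinct lines, the first summand. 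The second part consists of suitably chosen lines through the points of $V$, designed to catch exactly those $s$-subspaces $W$ whose trace $W\cap\Sigma$ is too small to contain a spread line; their number is arranged to be $\frac{q^x-1}{q-1}\cdot\frac{q^{s+x}-q^{x-1}}{q-1}$, the second summand.

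\textbf{Verification and the main obstacle.} The routine part is the cardinality bookkeeping: the two summands, and checking that the two parts are disjoint. The crux is the covering property, that every $s$-subspace $W$ of PG$(2s+x-1,q)$ contains a line of $\mathcal{L}$. I would split on $\dim(W\cap V)$: whenever $W\cap\Sigma$ already contains a spread line the cone lines of the first part suffice, and the degenerate cases---in particular $x=s$, where $W$ may meet $\Sigma$ in a single point---are precisely what the second part is tuned to handle, so the delicate point is to show the second part is simultaneously large enough to cover all such $W$ and small enough not to exceed the stated count. As a consistency check, specializing to $x=1$ reduces the bound to $\frac{q^{2s+2}-q^2}{q^2-1}+\frac{q^{s+1}-1}{q-1}$, which coincides with the lower bound of Theorem~\ref{thm:EiMe} and must therefore be exact there; reproducing that equality is a good test of both the construction and the covering argument.
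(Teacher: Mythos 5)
You have reproduced correctly the only part of the argument that the paper itself supplies: the paper does not prove Theorem~\ref{thm:Met03a} at all, but obtains it by citing Metsch's line construction and dualizing via the blocking-set/covering-design correspondence stated just before Theorem~\ref{thm:schonheim}. Your duality step (with $t=1$, $n=2s+x-1$, yielding $\cC_q(2s+x,2s+x-2,s+x-1)$) matches the paper exactly, your first family of lines is well defined and correctly counted (there are indeed $q^{2x}$ lines disjoint from $V$ in each cone $\langle V,\ell\rangle$, and distinct cones meet exactly in $V$), and the $x=1$ consistency check against Theorem~\ref{thm:EiMe} is apt. But the substance of the theorem is the construction and verification of the blocking set, and there your proposal has two genuine gaps.

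First, the case analysis cannot be completed as you structured it. Take an $s$-subspace $W$ with $W\cap V=\emptyset$ and let $\pi$ denote projection from $V$ onto $\Sigma$. One checks that $W$ contains a line of your first family \emph{if and only if} the $(s+1)$-dimensional (vector) subspace $\pi(W)$ of $\Sigma$ contains a spread line; the intersection $W\cap\Sigma$, which your dichotomy is built on, is the wrong object — its vector dimension can be as small as $s+1-x$. Your plan sends every $W$ whose intersection with $\Sigma$ lacks a spread line to the second family, but all lines of the second family meet $V$, so none of them can lie in such a $W$. Hence the first family alone must block \emph{all} $W$ disjoint from $V$, which requires that every $(s+1)$-dimensional subspace of $\F_q^{2s}$ contain a spread member. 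Nothing in your argument guarantees this for an arbitrary spread once $s\geq 3$ (the counting that forces it for $s=2$, comparing $\frac{q^{2s}-1}{q^2-1}$ with $\frac{q^{s+1}-1}{q-1}$, breaks down); the spread must be chosen specially, e.g. a Desarguesian (normal) spread, where writing $\F_q^{2s}=\F_{q^2}^s$ and taking $\lambda\in\F_{q^2}\setminus\F_q$ gives $\dim_{\F_q}(U\cap\lambda^{-1}U)\geq 2(s+1)-2s>0$, so $U$ contains an $\F_{q^2}$-line. Second, the family of lines meeting $V$ is never actually defined — "suitably chosen" and "arranged to be" is precisely the content of Metsch's construction — and it provably cannot be built pointwise: by the dual form of Bose--Burton (Theorem~\ref{thm:Turan=1}), blocking all $s$-subspaces through a fixed point $p\in V$ by lines through $p$ requires $\frac{q^{s+x}-1}{q-1}$ lines, strictly more than your per-point budget $\frac{q^{s+x}-q^{x-1}}{q-1}$. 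So the covering of subspaces meeting $V$ must exploit the interplay between the two families and among the different points of $V$, which is exactly the verification you defer. As it stands, the frame around the theorem is right, but the mathematical core of the proof is missing.
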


Finally, also Theorem~\ref{thm:optc2} was proved in terms of
projective geometry. In projective geometry, the quantity $\cC_q(n+1,k+1,1)$,
is the minimum number of $k$-subspaces in PG($n,q$) such that
each point of PG($n,q$) is contained in at least one of these subspaces.
The solution obtained in Theorem~\ref{thm:optc2} was obtained
before by Beutelspacher~\cite{Beu79}.
The proofs for all these results and related results
in projective geometry were also given by Metsch in~\cite{Met03}.

The most basic upper bound, given by construction, on the size of a $q$-covering
design was proved in~\cite{EtVa11a}.
\begin{theorem}
\label{thm:recursiveC}
$\cC_q (n,k,r) \leq q^{n-k} \cC_q
(n-1,k-1,r-1) + \cC_q (n-1,k,r)$.
\end{theorem}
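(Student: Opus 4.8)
The plan is to give a direct construction: fix a hyperplane $H$ of $\F_q^n$, i.e.\ an $(n-1)$-dimensional subspace, and build a $q$-covering design by treating separately the $r$-subspaces that lie inside $H$ and those that do not. Identifying $H$ with $\F_q^{n-1}$, every $r$-subspace of $\F_q^n$ contained in $H$ is an $r$-subspace of $H$, so an optimal $\C_q(n-1,k,r)$ on $H$ --- whose blocks are $k$-subspaces of $H$, hence of $\F_q^n$ --- covers all of them using $\cC_q(n-1,k,r)$ blocks. This accounts for the second term.

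For an $r$-subspace $R$ with $R\not\subseteq H$, since $H$ is a hyperplane we have $R+H=\F_q^n$ and therefore $\dim(R\cap H)=r-1$. The idea is to first cover the traces $R\cap H$: take an optimal $\C_q(n-1,k-1,r-1)$ on $H$, so that every $(r-1)$-subspace of $H$ lies in some block $B$, a $(k-1)$-subspace of $H$. For each such $B$ I would then adjoin to the design \emph{all} $k$-subspaces $A$ of $\F_q^n$ with $B\subseteq A$ and $A\not\subseteq H$.

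The key counting/covering step is the following. Passing to the quotient $\F_q^n/B$, the $k$-subspaces $A\supseteq B$ correspond to $1$-subspaces of $\F_q^n/B$, and the condition $A\not\subseteq H$ selects exactly those $1$-subspaces not lying in the hyperplane $H/B$ of $\F_q^n/B$; a direct count gives $\frac{q^{n-k+1}-1}{q-1}-\frac{q^{n-k}-1}{q-1}=q^{n-k}$ of them. Hence each block $B$ contributes $q^{n-k}$ new $k$-subspaces, for a total of $q^{n-k}\,\cC_q(n-1,k-1,r-1)$ in this part. To see these cover every $R\not\subseteq H$: the trace $R\cap H$ lies in some block $B$, and choosing any $v\in R\setminus H$ the subspace $A=\Span{B,v}$ is a $k$-subspace (as $v\notin H\supseteq B$), satisfies $A\not\subseteq H$, and contains $R=(R\cap H)+\Span{v}\subseteq B+\Span{v}=A$; thus $A$ is among the adjoined blocks and covers $R$.

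Adding the two parts gives a $q$-covering design $\C_q(n,k,r)$ of size at most $q^{n-k}\cC_q(n-1,k-1,r-1)+\cC_q(n-1,k,r)$, which is the claim. The main obstacle is the step establishing that the $q^{n-k}$ extensions of the traces suffice to cover all $r$-subspaces meeting $H$ in dimension $r-1$; everything else is bookkeeping on a hyperplane. A minor point to record is that the construction needs $k\le n-1$, but for $k=n$ the inequality is trivial since $\cC_q(n,n,r)=1$.
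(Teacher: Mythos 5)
Your construction is correct, and it is essentially the standard argument for this bound: the paper itself states Theorem~\ref{thm:recursiveC} without proof, citing~\cite{EtVa11a}, where exactly this hyperplane decomposition is used --- cover the $r$-subspaces inside a hyperplane $H\cong\F_q^{n-1}$ by an optimal $\C_q(n-1,k,r)$, and cover those meeting $H$ in dimension $r-1$ by taking the $q^{n-k}$ extensions $A\supseteq B$, $A\not\subseteq H$, of each block $B$ of an optimal $\C_q(n-1,k-1,r-1)$. Your counting of the extensions via the quotient $\F_q^n/B$ and the covering verification $R=(R\cap H)+\Span{v}\subseteq B+\Span{v}$ are both sound, so there is nothing to fix.
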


Normal spreads~\cite{Lun99}, also known as geometric spreads~\cite{BeUe91},
are used to prove the following values of $\cC_q(n,k,r)$~\cite{BlEt11}.

\begin{theorem}
\label{thm:n_spreads}
$\cC_q (vm+\delta,vm-m+\delta,v-1) =
\frac{q^{vm}-1}{q^m-1}$ for all $v \geq 2$, $m \geq 2$, and $\delta \geq 0$.
\end{theorem}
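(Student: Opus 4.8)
The plan is to prove the two inequalities separately. Write $N = \frac{q^{vm}-1}{q^m-1}$, the number of elements of an $m$-spread of $\F_q^{vm}$. For the lower bound $\cC_q(vm+\delta,vm-m+\delta,v-1)\ge N$ I would evaluate the iterated Sch\"onheim bound (obtained by iterating Theorem~\ref{thm:schonheim} down to $r=1$ and closing with Theorem~\ref{thm:optc2}); for the upper bound I would exhibit a covering of size exactly $N$ built from a normal spread and then lift it to arbitrary $\delta$.

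\emph{Lower bound.} First I would note that $n-k=m$ throughout, so every fraction in the iterated bound has the form $\frac{q^{A}-1}{q^{A-m}-1}=q^m+\frac{q^m-1}{q^{A-m}-1}$ with remainder in $(0,1]$. The innermost ceiling is $\bigl\lceil \frac{q^{vm+\delta-v+2}-1}{q^{vm-m+\delta-v+2}-1}\bigr\rceil = q^m+1 = \frac{q^{2m}-1}{q^m-1}$. I would then induct on the nesting depth $s$ (the current value of $r$, running $1,\dots,v-1$), with hypothesis that the value after $s$ levels equals $\frac{q^{(s+1)m}-1}{q^m-1}$. In the inductive step the current value is multiplied by the next fraction $\frac{q^{A}-1}{q^{A-m}-1}$ with $A=vm+\delta-v+1+s$; since $q^m\cdot\frac{q^{sm}-1}{q^m-1}$ is an integer and the leftover term equals $\frac{q^{sm}-1}{q^{A-m}-1}$, which lies in $(0,1]$ exactly when $s(m-1)\le (v-1)(m-1)+\delta$ (true for all $s\le v-1$ because $m\ge2$), each ceiling increases the integer part by precisely one, so the value telescopes to $\frac{q^{(s+1)m}-1}{q^m-1}$. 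After $r=v-1$ levels this gives $\frac{q^{vm}-1}{q^m-1}=N$.

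\emph{Upper bound, case $\delta=0$.} I would take a normal (geometric) $m$-spread $\cS$ of $\F_q^{vm}$ \cite{Lun99,BeUe91} and use field reduction to identify $\F_q^{vm}$ with $\F_{q^m}^v$ so that the elements of $\cS$ are exactly the one-dimensional $\F_{q^m}$-subspaces. The $\F_{q^m}$-hyperplanes of $\F_{q^m}^v$ are $\F_q$-subspaces of dimension $(v-1)m=vm-m$, and there are $N$ of them; let $\dS_0$ be this family. I claim $\dS_0$ covers every $(v-1)$-dimensional $\F_q$-subspace $U$: being spanned over $\F_q$ by $v-1$ vectors, $U$ has $\F_{q^m}$-span of $\F_{q^m}$-dimension at most $v-1<v$, hence contained in some $\F_{q^m}$-hyperplane $H$, and the member of $\dS_0$ corresponding to $H$ contains $U$. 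Thus $\cC_q(vm,vm-m,v-1)\le N$.

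\emph{Upper bound, general $\delta$, and the main obstacle.} I would embed $\F_q^{vm}\subseteq\F_q^{vm+\delta}$ with a fixed complement $W$ of dimension $\delta$ and replace each $K\in\dS_0$ by $K\oplus W$, a subspace of dimension $vm-m+\delta$; these $N$ subspaces are distinct since $(K\oplus W)\cap\F_q^{vm}=K$. Any $(v-1)$-subspace $U'\subseteq\F_q^{vm+\delta}$ has image of dimension at most $v-1$ under the projection $\pi$ along $W$, so extending $\pi(U')$ to a $(v-1)$-subspace and covering it by some $K\in\dS_0$ gives $\pi(U')\subseteq K$, whence $U'\subseteq K\oplus W$. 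This matches the lower bound and completes the proof. I expect the substantive step to be the $\delta=0$ construction: it rests on invoking the existence and field-reduction description of normal spreads and on checking that the $\F_{q^m}$-span of every $(v-1)$-dimensional $\F_q$-subspace is proper. The lower bound, by contrast, is a routine telescoping once one verifies that every ceiling stays tight, i.e. that the inequality $s(m-1)\le(v-1)(m-1)+\delta$ holds for all $s\le v-1$.
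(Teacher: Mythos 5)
Your proposal is correct and takes essentially the route the paper points to: the paper itself gives no proof, only citing \cite{BlEt11} and noting that normal (geometric) spreads are the key tool, which is exactly your field-reduction/hyperplane construction for the upper bound (with the $K\oplus W$ lift handling general $\delta$), while your lower bound is the paper's own iterated Sch\"onheim bound via Theorems~\ref{thm:schonheim} and~\ref{thm:optc2}. Everything checks out, up to a harmless indexing slip in the induction (the hypothesis gives $\frac{q^{(s+1)m}-1}{q^m-1}$ after $s$ ceilings, yet the step multiplies $\frac{q^{sm}-1}{q^m-1}$); with consistent indexing each ceiling tightens by exactly one as you claim, telescoping to $\frac{q^{vm}-1}{q^m-1}$.
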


The next theorem given in~\cite{EtVa11a} is used infinitely many times once
an exact bound for some given parameters is known.
\begin{theorem}
\label{thm:lengthening} $\cC_q (n+ 1 , k + 1 , r) \leq \cC_q (n,k,r)$.
\end{theorem}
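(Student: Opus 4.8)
The plan is to give a direct, constructive proof: from an optimal $q$-covering design in $\cG_q(n,k)$ I will build a $q$-covering design in $\cG_q(n+1,k+1)$ using no more blocks, by a ``coning'' operation that adjoins one fixed vector to every block.

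First I would fix an optimal covering $\dS = \{B_1,\dots,B_N\}$ realizing $N = \cC_q(n,k,r)$, regard $\F_q^n$ as the subspace $W \subseteq \F_q^{n+1}$ spanned by the first $n$ coordinates, and choose a vector $e \notin W$ (say the last unit vector). For each block I would set $B_i' = B_i + \Span{e}$; since $B_i \subseteq W$ and $e \notin W$, the sum is direct, so each $B_i'$ is a genuine $(k+1)$-dimensional subspace of $\F_q^{n+1}$, and $\dS' = \{B_1',\dots,B_N'\}$ has at most $N$ members. It then suffices to show that $\dS'$ is a $q$-covering design $\C_q(n+1,k+1,r)$, for then $\cC_q(n+1,k+1,r) \leq |\dS'| \leq N = \cC_q(n,k,r)$, which is exactly the claim.

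The verification would use the projection $\pi \colon \F_q^{n+1} \to W$ along $\Span{e}$, whose kernel is $\Span{e}$ and which satisfies $\pi^{-1}(S) = S + \Span{e}$ for every $S \subseteq W$. Given an arbitrary $r$-dimensional subspace $U$ of $\F_q^{n+1}$ to be covered, I would split into two cases according to whether $e \in U$. If $e \notin U$, then $\pi$ is injective on $U$, so $\pi(U)$ is an $r$-subspace of $W$; choosing a block $B_i \supseteq \pi(U)$ yields $U \subseteq \pi^{-1}(\pi(U)) = \pi(U) + \Span{e} \subseteq B_i + \Span{e} = B_i'$. If $e \in U$, then $\pi(U)$ has dimension $r-1$ and one checks $U = \pi(U) + \Span{e}$, so again any block $B_i \supseteq \pi(U)$ gives $U \subseteq B_i'$.

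The main obstacle is the second case: there $\pi(U)$ has dimension $r-1$ rather than $r$, so I cannot invoke the covering property of $\dS$ directly on $\pi(U)$. The step to get right is the (easy but essential) monotonicity observation that a covering of all $r$-subspaces of $\F_q^n$ automatically covers every subspace of dimension at most $r$, since any $(r-1)$-subspace extends to some $r$-subspace of $\F_q^n$ (using $r \leq k \leq n$) which in turn lies in a block. With that in hand, both cases produce a covering block of $\dS'$, and the construction establishes the inequality.
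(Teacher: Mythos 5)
Your proof is correct. The survey states Theorem~\ref{thm:lengthening} without including a proof (it is cited from~\cite{EtVa11a}), and your argument --- adjoining a fixed vector $e \notin W$ to every block of an optimal covering and verifying the covering property via the projection along $\Span{e}$, with the two cases $e \notin U$ and $e \in U$ and the monotonicity observation that covers the dimension-drop in the second case --- is exactly the standard lengthening construction behind the cited result, so there is nothing to correct or add.
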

Theorem~\ref{thm:lengthening} implies a very interesting property on the behavior
of optimal $q$-design coverings.
\begin{cor}
\label{cor:leng} For any given $r >0$ and $\delta >0$ there exists
a constant $c_{q,\delta,r}$ and an integer $n_0$ such that for
each $n > n_0$, $\cC_q(n,n-\delta,r)=c_{q,\delta,r}$.
\end{cor}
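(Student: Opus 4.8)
The plan is to show that, for fixed $q$, $\delta$, and $r$, the integer sequence $a_n \deff \cC_q(n,n-\delta,r)$ is non-increasing in $n$, and then to use the elementary fact that a non-increasing sequence of positive integers is eventually constant; that eventual value is the claimed $c_{q,\delta,r}$ and the index from which it stabilizes is the claimed $n_0$. First I would restrict attention to $n \geq r+\delta$, so that $n-\delta \geq r$ and the quantity $\cC_q(n,n-\delta,r)$ is genuinely defined: a $q$-covering design $\C_q(n,n-\delta,r)$ certainly exists (for instance the trivial collection of all $(n-\delta)$-subspaces), so each $a_n$ is a well-defined positive integer.

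The monotonicity is exactly Theorem~\ref{thm:lengthening}. Applying that theorem with $k=n-\delta$ yields $\cC_q(n+1,(n-\delta)+1,r) \leq \cC_q(n,n-\delta,r)$, and since $(n-\delta)+1=(n+1)-\delta$ this is precisely $a_{n+1} \leq a_n$. Hence $(a_n)_{n\geq r+\delta}$ is non-increasing. This is the entire content of the argument, and the only thing to watch is the parameter bookkeeping $(n+1)-\delta=(n-\delta)+1$ that lets Theorem~\ref{thm:lengthening} apply cleanly to consecutive terms; there is no real obstacle beyond this.

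Finally I would combine monotonicity with integrality and boundedness below. Each $a_n$ is a positive integer, so a non-increasing $(a_n)$ is bounded below (by $1$) and can strictly decrease only finitely often; it therefore stabilizes, giving an index $n_0$ and a value $c_{q,\delta,r}$ with $a_n=c_{q,\delta,r}$ for all $n>n_0$, which is exactly the assertion of the corollary. As an optional refinement pinning down the stable value, one may invoke the covering bound of Theorem~\ref{thm:covering_bound} to get $a_n \geq \sbinomq{n}{r}/\sbinomq{n-\delta}{r}$, a quantity that decreases to $q^{\delta r}$ as $n\to\infty$; this shows the eventual constant satisfies $c_{q,\delta,r} \geq q^{\delta r}$ and confirms that the sequence is bounded below by a meaningful, $n$-independent constant.
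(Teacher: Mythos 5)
Your proposal is correct and is essentially the paper's own argument: the corollary is stated there as a direct consequence of Theorem~\ref{thm:lengthening}, which with $k=n-\delta$ gives exactly your monotonicity $\cC_q(n+1,(n+1)-\delta,r)\leq\cC_q(n,n-\delta,r)$, and the conclusion follows since a non-increasing sequence of positive integers is eventually constant. Your optional lower bound $c_{q,\delta,r}\geq q^{\delta r}$ via Theorem~\ref{thm:covering_bound} is a small correct addition beyond what the paper records.
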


The usage of lifted MRD codes as subspace transversal designs made
it possible to obtain some interesting bounds on $\cC_2(n,k,2)$ and
$\cC_2(n,k,3)$~\cite{Etz13}.

\begin{problem}
Find new techniques to construct $q$-covering designs for $q=2$.
\end{problem}

\begin{problem}
Find new techniques to obtain new lower bounds
on $\cC_2(n,k,r)$.
\end{problem}

\begin{problem}
Improve the bounds on $\cC_2(n,k,r)$ for $n \leq 10$ (tables for the
known bounds are given in~\cite{Etz13}).
\end{problem}

\begin{problem}
Find new parameters for which the exact value of $\cC_2(n,k,r)$ can be obtained.
\end{problem}

\begin{problem}
Develop new techniques to construct $q$-covering designs for $q>2$
and generate related tables for the bounds on $\cC_q(n,k,r)$
for small $q$, e.g. $q=3$, 4, and 5.
\end{problem}

\begin{problem}
Make use of subspace transversal designs to obtain new bounds
on $\cC_q (n,k,r)$ for $q > 2$ or for $q=2$ and $r > 3$.
\end{problem}

Another type of codes which can be considered are covering codes
in the projective space which are the $q$-analog for covering
codes in the Hamming scheme. A code $\C$ is a $q$-covering code
with covering radius $R$ in $\cP_q(n)$ if $\C$ consists of subspaces
from $\cP_q(q)$ and for each subspace $X \in \cP_q(n)$ there
exists a subspace $Y \in \C$ such that $d_S(X,Y) \leq R$.
Let $\cC_q(n,R)$ be the minimum size
of a $q$-covering code with radius $R$ in $\cP_q(n)$. To develop
the theory of $q$-covering codes in $\cP_q(n)$ we can consider the
following problems.

\begin{problem}
Develop some basic lower bounds on $\cC_q(n,R)$.
\end{problem}

\begin{problem}
Develop constructions to obtain upper bounds on $\cC_q(n,R)$.
\end{problem}

\begin{problem}
Compile a table with lower and upper bounds on $\cC_q(n,R)$ for
small values of $q$ and $n$.
\end{problem}

Some work in this direction was done in~\cite{GaYa10}. The bounds on
$\cC_q(n,k,r)$ can play an important role in this direction (to obtain
bounds on $\cC_q(n,R)$)
in the same way that the bounds on $\cA_q(n,\delta,k)$ play an important role
when the bounds on $\cA_q^S (n,d)$ are considered. These covering questions can be
considered similarly for the injection distance instead of the subspace distance.
Covering codes can be considered also for the injection distance~\cite{GaYa10}.
But, $q$-covering codes are mainly interesting from
a theoretical point of view. From combinatorial perspective
the subspace distance is more interesting than the injection distance
and hence covering codes with the injection
distance might attract less attention.

\section{Asymptotic Behavior}
\label{sec:asymptotic}

One important topic to consider is the asymptotic behavior
of $\cA_q(n,d,k)$, $\cA_q^S (n,d)$, $\cA_q^I (n,d)$,
$\cC_q(n,k,r)$, and $\cC_q(n,R)$. A family of Grassmannian codes $\C_n$,
of subspaces from $\cG_q(n,k)$ with minimum Grassmannian distance $\delta$,
is called \emph{asymptotically optimal} if
$\frac{| \C_n |}{\cA_q(n,\delta,k)} \rightarrow 1$ as $n \rightarrow \infty$.
The quantity $\frac{| \C_n |}{\cA_q(n,\delta,k)}$
as $n \rightarrow \infty$ is called the \emph{density}
of the code.
Similarly, we define asymptotically optimal subspace codes
with either the subspace distance or the
injection distance, and $q$-covering designs. Density for the other
types of codes is also defined similarly. It is quite obvious that
there could be many cases in which we are not able to know whether
the codes are asymptotically optimal since we don't know the magnitude
of the optimal codes. Therefore, in the context of the asymptotic
behavior we are concerned with the following three problems:
\begin{enumerate}
\item What is the size of an asymptotically optimal code?

\item What are the lower bounds on the density of the known codes
(upper bounds for covering codes and designs)?

\item Constructions of asymptotically optimal codes.
\end{enumerate}

Of course, we are mostly interested for
an answer to the third problem since constructions
of asymptotically optimal codes also determine the size of asymptotically
optimal codes and produces codes with density one.

\begin{problem}
Find constructions for families of asymptotically optimal Grassmannian
codes.
\end{problem}

\begin{problem}
Find constructions for families of asymptotically optimal
$q$-covering designs.
\end{problem}

But, when a construction
of such asymptotically optimal
codes is not available we would like to consider solutions for the
the other two problems.

Blackburn and Etzion~\cite{BlEt11} consider the asymptotic behavior
of Grassmannian codes and $q$-covering designs. They first presented
a connection between the size of an optimal Grassmannian code and
the size of an optimal $q$-covering design.

\begin{theorem}
\label{thm:design_equiv_code}
We have that
$$
\cC_q(n,k,k-\delta)\leq
\cA_q(n,\delta+1,k)+\\
\sbinomq{n}{k-\delta}-\sbinomq{k}{k-\delta}
\cA_q(n,\delta+1,k)
$$
and
$$
\cA_q(n,\delta+1,k)\geq \cC_q(n,k,k-\delta) +\\
\sbinomq{n}{k-\delta}-
\sbinomq{k}{k-\delta}\cC_q(n,k,k-\delta).
$$
\end{theorem}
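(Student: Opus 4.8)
The plan is to exploit the elementary duality between packings and coverings, once the minimum-distance condition is translated into an incidence condition on $(k-\delta)$-dimensional subspaces. The key observation is that a set $\C \subseteq \cG_q(n,k)$ has minimum Grassmannian distance at least $\delta+1$ if and only if no $(k-\delta)$-dimensional subspace of $\F_q^n$ is contained in two distinct members of $\C$: indeed $\gd{X}{Y} \leq \delta$ means $\dim(X\cap Y) \geq k-\delta$, so $X\cap Y$ contains a $(k-\delta)$-subspace lying in both $X$ and $Y$, and conversely. Thus an $(n,\delta+1,k)_q$ code is exactly a \emph{packing} in which every $(k-\delta)$-subspace is covered at most once, while a $q$-covering design $\C_q(n,k,k-\delta)$ is one in which every $(k-\delta)$-subspace is covered at least once. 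Every $k$-subspace contains exactly $\sbinomq{k}{k-\delta}$ subspaces of dimension $k-\delta$, and there are $\sbinomq{n}{k-\delta}$ of them in total; these two counts drive both inequalities.

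\textbf{First inequality.} Here I would start from an optimal code $\C$ of size $\cA_q(n,\delta+1,k)$. Since $\C$ is a packing, the families of $(k-\delta)$-subspaces contained in distinct codewords are pairwise disjoint, so $\C$ covers exactly $\sbinomq{k}{k-\delta}\cA_q(n,\delta+1,k)$ of them, leaving $\sbinomq{n}{k-\delta}-\sbinomq{k}{k-\delta}\cA_q(n,\delta+1,k)$ uncovered. Adjoining, for each uncovered $(k-\delta)$-subspace, a single $k$-subspace containing it produces a $q$-covering design whose size is at most $\cA_q(n,\delta+1,k)+\sbinomq{n}{k-\delta}-\sbinomq{k}{k-\delta}\cA_q(n,\delta+1,k)$, and minimality of $\cC_q(n,k,k-\delta)$ yields the stated bound.

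\textbf{Second inequality.} For the harder direction I would run the reverse process: start from an optimal covering $\cD$ of size $C=\cC_q(n,k,k-\delta)$ and delete blocks until it becomes a packing, hence a constant dimension code of distance $\delta+1$. The total number of (block, $(k-\delta)$-subspace) incidences is $\sbinomq{k}{k-\delta}\,C$, and since each of the $\sbinomq{n}{k-\delta}$ subspaces is covered at least once, the number of \emph{excess} incidences is $\sbinomq{k}{k-\delta}\,C-\sbinomq{n}{k-\delta}$. The crux is to show that at most this many deletions suffice. I would track the monovariant $\Phi=\sum_W \max(m_W-1,0)$, where $m_W$ is the current number of blocks containing the $(k-\delta)$-subspace $W$; initially $\Phi$ equals the excess, and $\Phi=0$ is precisely the packing condition. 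As long as $\Phi>0$ some $W$ has $m_W\geq 2$; deleting any one block through such a $W$ decreases $\Phi$ by at least one (that single $W$ already accounts for a unit drop). Hence the number of deletions is bounded by the excess, and the surviving packing has size at least $C-(\sbinomq{k}{k-\delta}\,C-\sbinomq{n}{k-\delta})$, which is the claimed lower bound on $\cA_q(n,\delta+1,k)$.

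\textbf{Main obstacle.} The delicate point is exactly this deletion count: one must ensure that resolving conflicts never costs more removals than the accumulated excess, which is why the potential $\Phi$ — rather than a naive count of multiply-covered subspaces — is the right bookkeeping device. One also checks that deletions, which may leave some $(k-\delta)$-subspaces entirely uncovered, can only lower every $m_W$ and hence never increase $\Phi$, so the process is monotone and terminates exactly at a packing.
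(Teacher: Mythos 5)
Your proof is correct. The paper itself states this theorem without proof, quoting it from Blackburn and Etzion~\cite{BlEt11}; your argument --- translating minimum Grassmannian distance $\delta+1$ into the condition that every $(k-\delta)$-dimensional subspace lies in at most one codeword, then augmenting an optimal packing to a covering (first inequality) and greedily deleting blocks of an optimal covering down to a packing, with the excess-incidence count bounding the number of deletions (second inequality) --- is exactly the standard argument used in that reference.
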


Let $A(n) \sim B(n)$ means that $\lim_{n \rightarrow \infty} A(n)/B(n)=1$.
By considering hypergraphs, probabilistic arguments, and Theorem~\ref{thm:design_equiv_code}
we have the following two theorems~\cite{BlEt11}.
\begin{theorem}
\label{thm:packing_asymptotics} Let $q$, $k$ and $\delta$ be fixed
integers, with $0 \leq \delta\leq k$ and such that $q$ is a prime
power. Then
\begin{equation}
\label{eqn_packing}
\cA_q(n,\delta+1,k)\sim \frac{\sbinomq{n}{k-\delta}}{\sbinomq{k}{k-\delta}}
\end{equation}
as $n\rightarrow\infty$.
\end{theorem}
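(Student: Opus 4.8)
The plan is to sandwich $\cA_q(n,\delta+1,k)$ between matching bounds. The upper bound is immediate: applying the packing bound (Theorem~\ref{thm:packing}) with distance $\delta+1$ gives
$$
\cA_q(n,\delta+1,k)\leq \frac{\sbinomq{n}{k-\delta}}{\sbinomq{k}{k-\delta}},
$$
which is exactly the claimed right-hand side. Hence the entire content of the theorem is a matching lower bound, and I only need to treat $1\le\delta\le k-1$; the cases $\delta=0$ and $\delta=k$ give equality trivially, since then $\cA_q(n,1,k)=\sbinomq{n}{k}$ and $\cA_q(n,k+1,k)=1$ respectively.

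First I would recast the lower bound as a hypergraph matching problem. Write $N=\sbinomq{n}{k-\delta}$ and $K=\sbinomq{k}{k-\delta}$, and let $H$ be the $K$-uniform hypergraph whose vertex set is $\cG_q(n,k-\delta)$ and whose edges are, for each $B\in\cG_q(n,k)$, the family of $(k-\delta)$-subspaces contained in $B$. Since $d_G(X,Y)\ge\delta+1$ is equivalent to $\dim(X\cap Y)\le k-\delta-1$, i.e.\ to no $(k-\delta)$-subspace lying in both $X$ and $Y$, a constant dimension code of minimum distance $\delta+1$ is precisely a matching in $H$; thus $\cA_q(n,\delta+1,k)$ equals the matching number $\nu(H)$. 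The hypergraph is regular: by transitivity of $GL_n(q)$ on $(k-\delta)$-subspaces, every vertex lies in exactly $D=\sbinomq{n-k+\delta}{\delta}$ edges. Its codegree is small: two distinct $(k-\delta)$-subspaces span a subspace of dimension at least $k-\delta+1$, so they occur together in at most $\sbinomq{n-k+\delta-1}{\delta-1}$ edges, and since $D=\Theta(q^{\delta n})$ while this quantity is $\Theta(q^{(\delta-1)n})$, the maximum codegree is $o(D)$ as $n\to\infty$.

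With regularity and the codegree estimate in hand I would invoke the semi-random (``nibble'') method, in the form of the Pippenger--Spencer / Frankl--R\"odl theorem on near-perfect matchings: a $K$-uniform, $(1+o(1))D$-regular hypergraph on $N$ vertices whose maximum codegree is $o(D)$ contains a matching of size $(1-o(1))N/K$. This yields $\cA_q(n,\delta+1,k)=\nu(H)\ge(1-o(1))N/K$, which together with the upper bound gives the asymptotic equivalence. The main obstacle, and the only nonelementary step, is this probabilistic argument itself; everything else is the routine verification that the Gaussian-coefficient degree and codegree counts meet its hypotheses. An essentially equivalent route, matching the tools highlighted just before the statement, is to establish the dual covering asymptotics $\cC_q(n,k,k-\delta)\sim N/K$ (the lower bound being Theorem~\ref{thm:covering_bound} and the upper bound being the same nibble argument applied to covers) and then feed it into the second inequality of Theorem~\ref{thm:design_equiv_code}, which converts $\cC_q(n,k,k-\delta)=(1+o(1))N/K$ into $\cA_q(n,\delta+1,k)\ge N-(K-1)(1+o(1))N/K=(1-o(1))N/K$.
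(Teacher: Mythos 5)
Your proposal is correct, and it runs on the same engine as the paper's proof, though your primary route is wired slightly differently. The paper defers the proof to Blackburn--Etzion~\cite{BlEt11}, and, as the sentence preceding the theorem indicates, that argument couples a probabilistic hypergraph (nibble) argument with the packing--covering transfer of Theorem~\ref{thm:design_equiv_code}: one of the two asymptotic statements (naturally the covering one, Theorem~\ref{thm:covering_asymptotics}, with Theorem~\ref{thm:covering_bound} supplying the matching lower bound) is read off from the semi-random method, and the other is then obtained through the transfer inequality --- which is precisely the ``essentially equivalent route'' you sketch at the end. Your primary route instead invokes the matching form of the Pippenger--Spencer/Frankl--R\"odl theorem directly on the inclusion hypergraph whose vertices are the elements of $\cG_q(n,k-\delta)$ and whose edges are indexed by the elements of $\cG_q(n,k)$, so the lower bound $(1-o(1))\sbinomq{n}{k-\delta}/\sbinomq{k}{k-\delta}$ appears in one step and Theorem~\ref{thm:design_equiv_code} is never needed; your identification of $(n,\delta+1,k)_q$ codes with matchings, the degree count $D=\sbinomq{n-k+\delta}{\delta}$, the codegree bound $\sbinomq{n-k+\delta-1}{\delta-1}=o(D)$, the boundary cases $\delta\in\{0,k\}$, and the upper bound via Theorem~\ref{thm:packing} are exactly the verifications required, and all are correct. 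The trade-off is minor: your direct route is self-contained for the packing statement alone, while the paper's route extracts Theorem~\ref{thm:packing_asymptotics} and Theorem~\ref{thm:covering_asymptotics} simultaneously from a single application of the probabilistic tool, at the price of routing through the transfer inequality. In either formulation the nibble is the only nonelementary ingredient, just as you say.
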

\begin{theorem}
\label{thm:covering_asymptotics} Let $q$, $k$ and $\delta$ be
fixed integers, with ${0\leq \delta\leq k}$ and such that $q$ is a
prime power. Then
\[
\cC_q(n,k,k-\delta)\sim \frac{\sbinomq{n}{k-\delta}}{\sbinomq{k}{k-\delta}}
\]
as $n\rightarrow\infty$.
\end{theorem}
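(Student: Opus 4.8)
The plan is to sandwich $\cC_q(n,k,k-\delta)$ between two quantities that both tend to $\sbinomq{n}{k-\delta}/\sbinomq{k}{k-\delta}$, letting the two immediately preceding theorems supply the bounds. The lower bound requires no asymptotics at all: Theorem~\ref{thm:covering_bound}, the basic covering bound with $r=k-\delta$, gives for every $n$ that
$$
\cC_q(n,k,k-\delta) \geq \frac{\sbinomq{n}{k-\delta}}{\sbinomq{k}{k-\delta}}~.
$$
So the entire content is in establishing a matching upper bound.

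For the upper bound I would invoke the first inequality of Theorem~\ref{thm:design_equiv_code}, which turns an optimal $(n,\delta+1,k)_q$ constant dimension code into a covering design by adjoining one $k$-subspace for each $(k-\delta)$-subspace left uncovered by the packing:
$$
\cC_q(n,k,k-\delta) \leq \cA_q(n,\delta+1,k) + \sbinomq{n}{k-\delta} - \sbinomq{k}{k-\delta}\,\cA_q(n,\delta+1,k)~.
$$
Abbreviating $N = \sbinomq{n}{k-\delta}$, $K = \sbinomq{k}{k-\delta}$, and $P = \cA_q(n,\delta+1,k)$, the right-hand side is $N-(K-1)P$. The decisive input is the packing asymptotics of Theorem~\ref{thm:packing_asymptotics}, namely $P \sim N/K$, equivalently $P/N \to 1/K$. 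Dividing the upper bound by the target value $N/K$ then gives
$$
\frac{N-(K-1)P}{N/K} = K\left(1 - (K-1)\frac{P}{N}\right) \longrightarrow K\left(1 - \frac{K-1}{K}\right) = 1~,
$$
so the upper bound is itself asymptotic to $N/K = \sbinomq{n}{k-\delta}/\sbinomq{k}{k-\delta}$. Together with the lower bound this yields the claimed equivalence.

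The single point that warrants care, rather than a real obstacle, is that the upper bound is a \emph{difference} $N-(K-1)P$ of two quantities each of order $\Theta(N)$, so one might worry that cancellation could spoil the crude estimate $P \sim N/K$. What rescues the argument is that $K=\sbinomq{k}{k-\delta}$ is a fixed constant, since $q$, $k$, and $\delta$ are held fixed while $n \to \infty$: the two leading terms $N$ and $(K-1)\cdot N/K$ combine to leave precisely $N/K$, while the $o(N)$ error hidden in $P \sim N/K$ is only amplified by the constant factor $K-1$ and therefore remains $o(N)$. In other words, all the genuine difficulty is already absorbed into Theorem~\ref{thm:packing_asymptotics} (the hypergraph and probabilistic packing estimate), and once that is granted the covering asymptotics follow from this elementary sandwich.
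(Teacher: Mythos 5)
Your proof is correct, and it is essentially the route the paper indicates rather than a genuinely different one: the survey gives no self-contained argument for this theorem, but attributes it (together with Theorem~\ref{thm:packing_asymptotics}) to~\cite{BlEt11}, remarking that both follow from hypergraph and probabilistic arguments combined with Theorem~\ref{thm:design_equiv_code} --- and those are exactly your ingredients, plus the trivial lower bound of Theorem~\ref{thm:covering_bound}. Your algebra is right: writing $N=\sbinomq{n}{k-\delta}$, $K=\sbinomq{k}{k-\delta}$, $P=\cA_q(n,\delta+1,k)$, the first inequality of Theorem~\ref{thm:design_equiv_code} gives $\cC_q(n,k,k-\delta)\leq N-(K-1)P$, and since $K$ is a constant the $o(N)$ error in $P\sim N/K$ survives multiplication by $K-1$, so the upper bound is $\sim N/K$ and the sandwich closes; your explicit discussion of why the cancellation in the difference is harmless is exactly the point that needs to be made. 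The one caveat worth recording concerns the direction of the transfer. The two inequalities of Theorem~\ref{thm:design_equiv_code} allow the asymptotics to be moved in either direction: from packing to covering (your route, using the covering lower bound of Theorem~\ref{thm:covering_bound}) or from covering to packing (using the second inequality together with the packing upper bound of Theorem~\ref{thm:packing}). In the source~\cite{BlEt11}, the probabilistic (nibble-type) hypergraph argument is naturally stated for covers, so the covering asymptotics is plausibly the statement established directly there, with the packing asymptotics deduced from it --- the reverse of your order. Within this survey, where Theorem~\ref{thm:packing_asymptotics} is available as a stated result, your derivation is perfectly legitimate and symmetric in difficulty to the other direction; just be aware that it is a transfer argument whose probabilistic content is entirely absorbed into the cited packing theorem, and it would be circular if read as a reconstruction of the original proof in~\cite{BlEt11} in the event that the dependency there runs the other way.
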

For specific values, asymptotically optimal
Grassmannian codes and $q$-covering designs
were mentioned in the previous sections. But, the following problems
remained unsolved for most parameters.

\begin{problem}
Extend the range for which the size of
asymptotically optimal Grassmannian codes
can be shown.
\end{problem}

\begin{problem}
What is the asymptotic behavior of $\cA_q^S (n,d)$ and $\cA_q^I (n,d)$?
\end{problem}

\begin{problem}
What is the asymptotic behavior of $\cA_q (n,\delta,k)$ when $\delta$ or $k$
are not fixed?
\end{problem}

\begin{problem}
What is the asymptotic behavior of $\cC_q (n,k,r)$ when $k$ or $r$
are not fixed?
\end{problem}

\begin{problem}
What is the asymptotic behavior of $\cC_q (n,R)$?
\end{problem}

When we cannot construct asymptotically optimal codes
we are interested in dense Grassmannian codes and sparse
$q$-covering designs. Koetter and Kschischang~\cite{KoKs08}
proved that the density of $\CMRD$ codes is at least $\frac{1}{4}$.
Thus, this class of codes is good enough for any practical purpose.
The lower bounds on the density were considerably improved with
better constructions and improved upper bounds on the size of
the codes. An analysis of the lower bounds on the densities is
given in~\cite{EtSi13}, where it was shown that the density
is at least $\frac{3}{5}$. Some work in this direction for
$q$-covering designs was done in~\cite{Etz13}. But, there is a lot of
ground for further research in this direction.

\begin{problem}
Show a general bound of considerably more than $\frac{3}{5}$
for the density of Grassmannian codes.
\end{problem}

\begin{problem}
Produce a comprehensive analysis of the densities for
Grassmannian codes and $q$-covering designs.
\end{problem}

\section{Parallelism}
\label{sec:parallel}

A $k$-spread is called a parallel class as it partition
the set of all the points of PG($n,q$). A $k$-parallelism in
PG($n,q$) is a partition of the $k$-subspaces
of PG($n,q$) into pairwise disjoint $k$-spreads.
Hence, a parallelism is also a type of a large set
as mentioned in Section~\ref{sec:designs}. Some 1-parallelisms
of PG($n,q$) are known for many years. For $q=2$ and
odd~$n$ there is an 1-parallelism in PG($n,2$). Such a parallelism
was found in the context of Preparata codes and it is known that
many such parallelisms exist~\cite{Bak76,BLW,ZZS71}. For any other power
of a prime~$q$, if $n=2^i-1$, $i \geq 2$, then
an 1-parallelism was shown in~\cite{Beu74}.
In the last forty years no new parameters for
1-parallelisms were shown until recently, when
an 1-parallelism in PG(5,3) was proved to exist in~\cite{EtVa12}.
A $k$-parallelism for $k > 1$ was not known until a 2-parallelism
in PG($5,2$) was shown in~\cite{Sar02}.

Clearly, such parallelisms can be described in terms of spreads
in the Grassmannian. As it seems to be extremely difficult,
we consider two problems which are generalizations of the
parallelism problem. The first one is to consider what
is the maximum number
of pairwise disjoint $k$-spreads that exist in PG($n,q$)?
Beutelspacher~\cite{Beu90} has proved that if $n$ is odd then there exist
$q^{2 \lfloor \log n \rfloor} + \cdots +q+1$ pairwise
disjoint 1-spreads in PG($n,q$). It is proved in~\cite{Etz13a}
that two disjoint $k$-spreads exist
in PG($n,q$) and $2^{k+1}-1$ pairwise disjoint spreads
exist in PG($n,2$).

For the second problem,
we will define a \emph{partial Grassmannian} $\cG_q(n_1,n_2,k)$,
$n_1 > n_2 \geq k$, as the set of all $k$-dimensional subspaces from
the space $\F_q^{n_1}$
which are not contained in a given $n_2$-dimensional subspace $U$ of $\F_q^{n_1}$.
It can be readily verified that $\V^{(n,k)}$ is a partial Grassmannian
$\cG_q(n,n-k,k)$, where $\V_0^{(n,k)}$ is the $(n-k)$-dimensional
subspace $U$.
A spread in $\cG_q(n_1,n_2,k)$ is a set $\dS$ of pairwise disjoint $k$-dimensional
subspaces from $\cG_q(n_1,n_2,k)$ such that each nonzero element of $\F_q^{n_1} \setminus U$
is contained in exactly one element of~$\dS$. A parallelism of $\cG_q(n_1,n_2,k)$
is a set of pairwise disjoint spreads in $\cG_q(n_1,n_2,k)$ such that each $k$-dimensional
subspace of $\cG_q(n_1,n_2,k)$ is contained in exactly one of the spreads.
Beutelspacher~\cite{Beu90} proved that if $k=2$ then such a parallelism exists
if $n_2 \geq 2$, $n_1 - n_2 = 2^i$, for all $i \geq 1$ and any~$q > 2$.
If $k=2$ and $q=2$ then such a parallelism exists
if and only if $n_2 \geq 3$ and $n_1 - n_2$ is even.
Etzion~\cite{Etz13a} proved that
if $k=n_1 - n_2$ then there exists a parallelism in $\cG_q(n_1,n_2,k)$

\begin{problem}
For any $q >2$ and $k \geq 1$, improve the lower bounds on the number of pairwise disjoint
$k$-spreads in PG($n,q$).
\end{problem}

\begin{problem}
For $q =2$ and any $k > 1$, improve the lower bounds on the number of pairwise disjoint
$k$-spreads in PG($n,q$).
\end{problem}

\begin{problem}
Find nontrivial necessary conditions for the existence of a parallelism in
$\cG_q(n_1,n_2,k)$.
\end{problem}

\begin{problem}
Find new parameters for which there exists a parallelism in
$\cG_q(n_1,n_2,k)$.
\end{problem}

\begin{problem}
For a power of a prime $q>2$, find new parameters for which there
exists an 1-parallelism in PG($n,q$).
\end{problem}

\begin{problem}
For $k>1$ and any power of a prime $q$, starting
with $q=2$, find new parameters for which there
exists a $k$-parallelism in PG($n,q$).
\end{problem}

\begin{problem}
For $k>1$, find an infinite family of $k$-parallelisms in PG($n,q$).
\end{problem}

\section{Other Problems in Coding Theory}
\label{sec:other}

There are many other interesting problems in connections
to coding theory which can be defined on the projective space
and the Grassmannian. We will consider three topics: Gray codes,
self-complements codes, and linear codes.

\subsection{Gray Codes}

Gray codes have many applications and they are defined on
variety of objects~\cite{Sav97}.
A Gray code in $\cP_q(n)$ or $\cG_q(n,k)$ is a path in the
related graphs $G(\cP_q^S(n))$ and
$G(\cG_q(n,k))$, respectively. In $G(\cP_q^S(n))$
the vertices represent the subspaces of $\F_q^n$. Two vertices $X$ and $Y$
are connected by an undirected edge if $d_S (X,Y) =1$.
In $G(\cG_q(n,k))$
the vertices represent the $k$-dimensional subspaces
of $\F_q^n$. Two vertices $X$ and $Y$
are connected by an undirected edge if $d_S (X,Y) =2$ ($d_G (X,Y) =1)$.
The goal is to find the longest path and if possible a path
which contains all the vertices in the graph, i.e. it will be
a Hamiltonian path. Moreover, it is
also desired that the path will be a cycle. One can easily verify
that in $\cP_q(n)$ there is no Hamiltonian path if $n$ is even since the
number of vertices with even dimension is greater than the
number of vertices with odd dimension. Therefore, the four obvious
research problems can be stated as follows.

\begin{problem}
Is there a Hamiltonian cycle in $G(\cP_q^S(n))$, $n$ odd?
\end{problem}

\begin{problem}
What is the the length of the longest path in $G(\cP_q^S(n))$?
\end{problem}

\begin{problem}
What is the the length of the longest cycle in $G(\cP_q^S(n))$?
\end{problem}

\begin{problem}
Is there a Hamiltonian cycle in $G(\cG_q(n,k))$?
\end{problem}

\begin{problem}
What is the the length of the longest path in $G(\cG_q(n,k))$?
\end{problem}

We note that the graph $G(\cP_q^I(n))$ is not of interest for
this problem as it lacks the natural combinatorial structure
as the other two graphs.

Another interesting problem in this context
is the $q$-analog of the middle levels
problem which is a well-known
unsolved problem for the Hamming graph~\cite{SaWi95}.
The $q$-analog problems are presented as follows.

\begin{problem}
Is there a cycle in $G(\cP_q^S(2k+1))$ which contains
all the $k$-dimensional subspaces and all
the $(k+1)$-dimensional subspaces?
\end{problem}

\begin{problem}
What is the length of the longest path in $G(\cP_q^S(2k+1))$ which contains
only $k$-dimensional subspaces and $(k+1)$-dimensional subspaces?
\end{problem}

In~\cite{Etz13b} it is shown that for any given $q$ and $k=1$ or $k=2$
there exists a Hamiltonian cycle in the
middle levels of $\cP_q(2k+1)$. The method is using cyclic shifts
of subspaces in a modification of a similar method which is making use of
necklaces in the Hamming graph.

\subsection{Complements}

Complements of binary codewords and binary codes are used as a tool
in various aspects of coding theory. The $q$-analog was considered
in~\cite{BEV13}. Various related problems concerning
complements of subspaces over $\F_q$ were considered before,
e.\,g.~\cite{Cla92,ClSh95,ClSh97}.

\begin{defn}
Let $\cU$ be a subset of $\cP_q(n)$ and let $\cU_k:=\cU \cap \cG_q(n,k)$.
We say that a function $f: \cU
\to \cU$ is a \emph{complement on $\cU$} (and denote
$\overline{X} = f(X)$ for all $X\in\cU$) if $f$ has the following
properties:

\begin{enumerate}
\item[\bf P1.] $X \cap \overline{X} = \{ {\bf 0} \}$ and $X + \overline{X} =
\F_q^n$, i.e. $X\oplus \overline{X} =\F_q^n$ for all $X \in \cU$.

\item[\bf P2.] $f$ establishes a bijection between $\cU_k$ and $\cU_{n-k}$
for all $k$, $0\le k\le n$.

\item[\bf P3.] $f(f(X)) = X$ for all $X \in \cU$.

\item[\bf P4.] $d_S(\overline{X},\overline{Y}) = d_S(X,Y)$ for all $X,Y \in
\cU$.
\end{enumerate}
\end{defn}

The existence problems of complements in $\cP_q(n)$ was
considered in~\cite{BEV13}. Some of the results are based on
representation of subspaces by lattices~\cite{Bru11}.
The main open problem which remains unsolved in this discussion
is our next open problem.

\begin{problem}
Prove that the largest subset of $\cP_q(n)$ on which a complement
can be defined is the set $\cV_q(n)=\{X\in\cP_q(n):X\cap X^\perp = \{ {\bf 0} \} \}$;
or disprove this claim.
\end{problem}

A closed-form expression for $|\cV_q(n)|$ was given by
Sendrier~\cite{Sen97}. Using the results of~\cite{Sen97}, it
can be shown that the size of $\cV_q(n)$ is \emph{proportional} to
$|\cP_q(n)|$, specifically:
\[
\lim_{n\to\infty}\frac{|\cV_q(n)|}{|\cP_q(n)|} = \prod_{i=1}^\infty
\frac{1}{1+q^{-i}}~.
\]
The limit converge to $0.4194 \ldots$ when $q=2$, $0.639 \ldots$
when $q=3$, $0.7375 \ldots$ when $q=4$, and $0.9961 \ldots$ when
$q=256$.

\subsection{Linear Codes}

Linear codes are one of the basic concepts in coding theory.
In~\cite{BEV13} there is a comprehensive discussion of
$q$-analog of linear codes in the projective space.

\begin{defn} \label{def:quasilinear}
Let $\cU$ be a subset of $\cP_2(n)$ with $\{ {\bf 0} \} \in\cU$. We say
that $\cU$ is a \emph{linear code} in $\cP_2(n)$ if there
exists a function $\linadd: \cU \times \cU \to \cU$ such that
$(\cU,\linadd)$ is an abelian group with the following properties:
the identity element is $\{ {\bf 0} \}$, and the inverse of every
group element $X \in \cU$ is $X$ itself. The function $\linadd$ is
\emph{isometric}, namely:
$$
d_S(X \linadd Y_1,X \linadd Y_2) = d_S(Y_1,Y_2)\quad\text{for all}\quad X,Y_1,Y_2\in \cU~.
$$
\end{defn}

The discussion in~\cite{BEV13} yields a linear code of size $2^n$ in $\cP_2(n)$
and two intriguing questions remained open.

\begin{problem}
Prove that the size of the largest linear code in $\cP_2(n)$ is $2^n$
or show a linear code in $\cP_2(n)$ of size $2^{n+1}$.
\end{problem}

\begin{problem}
Given a linear code $\C$, in $\cP_2(n)$,
which contains $\F_2^n$ as a codeword, prove or disprove that
the number of codewords with dimension $k$ in $\C$ is at most
${n\choose k}$.
\end{problem}

\begin{center}
{\bf Acknowledgments}
\end{center}
I am grateful to Alexander Vardy for many discussions
on the topics mentioned in this paper.
This survey was  motivated by
COST Action IC1104 "Random Network Coding and Designs over GF($q$)".
It should be acknowledged that this action has made it possible to
bring together many researchers in related areas for common discussions
and new joint research on problems related to this paper.



\begin{thebibliography}{99}
\bibitem{AnEr04}
      {\sc G. E. Andrews and K. Eriksson,}
      {\em Integer Partitions}, Cambridge
      University Press, 2004.
\bibitem{Aga11}
    {\sc R. Agarwal,}
    {\sl Combinatorial lower bound for list decoding of
         codes on finite field Grassmannian,}
    in proc. of \emph{ISIT 2011}, pp. 2293--2297.
\bibitem{AhAy}
     {\sc R. Ahlswede and H. K. Aydinian,}
     {\sl On error control codes for random network coding,}
     in proc. of \emph{NetCod 2009}, pp. 68--73.
\bibitem{AAK01}
     {\sc R. Ahlswede, H. K. Aydinian, and L. H. Khachatrian,}
     {\sl On perfect codes and related concepts,}
     {\em Designs, Codes, and Cryptography,} 22 (2001), 221--237.
\bibitem{BPV13}
     {\sc C. Bachoc, A. Passuello, and F. Vallentin,}
     {\sl Bounds for projective codes from semidefinite programming,}
     {\em Advances\ of Mathematics in \ Communications}, 7 (2013), 127--145.
\bibitem{Bak76}
     {\sc R. D. Baker,}
     {\sl Partitioning the planes AG$_{2m}$(2) into 2-designs},
     {\em Discrete Math.}, 15 (1976), 205--211.
\bibitem{BLW}
     {\sc R. D. Baker, J. H. van Lint, nd R. M. Wilson,}
     {\sl On the Preparata and Goethals codes},
     {\em IEEE Trans.\ Inform. Theory}, 29 (1983), 342--345.
\bibitem{Beu74}
     {\sc A. Beutelspacher,}
     {\sl On parallelisms in finite projective spaces,}
     {\em Geometriae Dedicata,} 3 (1974), 35--40.
\bibitem{Beu75}
     {\sc A. Beutelspacher,}
     {\sl Partial spreads in finite projective spaces and partial designs,}
     {\em Math. Zeit.,} 145 (1975), 211--229.
\bibitem{Beu79}
     {\sc A. Beutelspacher,}
     {\sl On $t$-covers in finite projective spaces,}
     {\em Geometriae Dedicata,} 12 (1979), 10--16.
\bibitem{Beu90}
    {\sc A. Beutelspacher,}
    {\sl Partial parallelisms in finite projective spaces},
    {\em Geometriae Dedicata}, 36 (1990), 273--278.
\bibitem{BeUe91}
     {\sc A. Beutelspacher and J. Ueberberg,}
     {\sl A characteristic property of geometric $t$-spreads
       in finite projective spaces,}
     {\em Europ. J. Comb.,} 12 (1991), 277--281.
\bibitem{BlEt11}
    {\sc S. Blackburn and T. Etzion,}
    {\sl The Asymptotic Behavior of Grassmannian Codes,}
    {\em IEEE Trans.\ Inform. Theory}, 58 (2012), 6605--6609.
\bibitem{BBS12}
     {\sc A. Blokhuis, A. E. Brouwer, and T. Sz\H{o}nyi,}
     {\sl On the chromatic number of $q$-Kneser graphs,}
     {\em Designs, Codes, and Cryptography,} 65 (2012), 187--197.
\bibitem{BBSW}
     {\sc A. Blokhuis, A. E. Brouwer, T. Sz\H{o}nyi, and Zs. Weiner,}
     {\sl On $q$-analogues and stability theorems,}
     {\em J. Geometry,} 101 (2011), 31--50.
\bibitem{BoBu66}
     {\sc R. C. Bose and R.C. Burton,}
     {\sl A characterization of flat spaces in finite geometry
           and the uniqueness of the Hamming and the MacDonald codes,}
     {\em Journal Combinatorial Theory,} 1 (1966), 96--104.
\bibitem{Bru11}
     {\sc M. Braun,}
     {\sl On lattices, binary codes, and network codes},
     {\em Advances\ of Mathematics in \ Communications}, 5 (2011), 225--232.
\bibitem{BEOVW}
     {\sc M. Braun,  T. Etzion, P. R. J. \"Osterg\aa rd, A. Vardy, and A. Wassermann}
     {\sl Existence of $q$-Analogs of Steiner Systems,}
     {\em arxiv.org/abs/1304.1462}, April 2013.
\bibitem{BEV13}
     {\sc M. Braun,  T. Etzion, and A. Vardy,}
     {\sl Linearity and Complements in Projective Space,}
     {\em Linear Algebra and its Applications,} 438 (2013), 57--70.
\bibitem{BKL05}
    {\sc M. Braun, A. Kerber, and R. Laue,}
    {\sl Systematic construction of $q$-analogs of $t-(v,k,\lambda)$-designs,}
    {\em Designs, Codes, and Cryptography,} 34 (2005), 55--70.
\bibitem{BKOW}
     {\sc M. Braun,  A. Kohnert, P. R. J. \"Osterg\aa rd, and A. Wassermann}
     {\sl Large Sets of $t$-Designs over Finite Fields,}
     {\em arxiv.org/abs/1305.1455}, May 2013.
\bibitem{BrRe12}
    {\sc M. Braun and J. Reichelt,}
    {\sl q-Analog of packing designs,}
     {\em arxiv.org/abs/1212.4614}, December 2012.
\bibitem{BSSS}
    {\sc A. E. Brouwer, J. B. Shearer, N. J. A. Sloane, and W. D. Smith,}
    {\sl A new table of constant weight codes,}
    {\em IEEE Trans. on Inform. Theory}, 36 (1990), 1334--1380.
\bibitem{Cai13}
    {\sc N. Cai,}
    {\sl A new table of constant weight codes,}
    {\em IEEE Trans. on Inform. Theory}, 59 (2013), 3282--3294.
\bibitem{Cam74}
    P.\,{\sc Cameron},
    Generalisation of Fisher's inequality to fields with more than one element,
    in T.P.\,{\sc McDonough} and V.C.\,{\sc Mavron}, Eds.,
		\emph{Combinatorics},
    London Math.\ Soc.\ Lecture Note Ser.\ 13
    Cambridge Univ.\ Press, Cambridge,		
    1974, pp.~9--13.
\bibitem{Cam74a}
    {\sc P. Cameron},
    {\sl Locally symmetric designs},
    {\em Geometriae Dedicata,} 3 (1974), 65--76.
\bibitem{Chi87}
    {\sc L. Chihara,}
    {\sl On the zeros of the Askey-Wilson polynomials, with
        applications to coding theory,}
    {\em SIAM J. Math. Anal.,} 18 (1987), 191--207.
\bibitem{CGR06}
    {\sc A. Chowdhury C. Godsil, and G. Royle,}
    {\sl Colourings lines in projective space,}
    {\em Journal of Combinatorial Theory, Series A,} 113 (2006), 39--52.
\bibitem{ChPa10}
    {\sc A. Chowdhury and B. Patk\'{o}s,}
    {\sl Shadows and intersections in vector spaces,}
    {\em Journal of Combinatorial Theory, Series A,} 117 (2010), 1095--1106.
\bibitem{Cla92}
    {\sc W.\,E. Clark,}
    {\sl Matching subspaces with complements in finite vector spaces,}
    {\em Bull. Inst. Combin. Appl.,} 6 (1992), 33--38.
\bibitem{ClSh95}
    {\sc W.\,E. Clark and B. Shekhtman,}
    {\sl Covering by complements of subspaces, II,}
    {\em Proceedings of the American Mathematical Society,} 125 (1997), 251--254.
\bibitem{ClSh97}
    {\sc W.\,E. Clark and B. Shekhtman,}
    {\sl Covering by complements of subspaces,}
    {\em Linear and Multilinear Algebra,} 40 (1997), 1--13.
\bibitem{CoDi07}
    C. J. Colbourn and J. H. Dinitz,
    {\em Handbook of Combinatorial Designs},
    Boca Raton, Florida: Chapman and Hall/CRC, 2007.
\bibitem{Caen83}
    {\sc D. de Caen,}
    {\sl Extension of a theorem of Moon and Moser on Complete subgraphs,}
    {\em Ars Combinatoria,} 16 (1983), 5--10.
\bibitem{Caen91}
    {\sc D. de Caen,}
    {\sl The current status of Tur\'{a}n's problem on
    hypergraphs,} in
    {\em Extremal Problems for Finite Sets,}
    edited by {\sc P. frankl. Z.F\"{u}redi, G. Katona, and D. Mikl\'{o}s},
    J\'{a}nos Bolyai Mathematical Society, Budapest (1991), 187--197.
\bibitem{Del73}
    {\sc P. Delsarte},
    {\sl An algebraic approach to the association schemes of coding theory},
    {\em Philips Res. Repts. Suppl.,} 10 (1973), 1--97.
\bibitem{Del76}
    {\sc P. Delsarte},
    {\sl Association schemes and $t$-designs in regular semilattices},
    {\em Journal of Combinatorial Theory, Series A,} 20 (1976), 230--243.
\bibitem{Del78}
    {\sc P. Delsarte,}
    {\sl Bilinear forms over a finite field, with
            applications to coding theory,}
    {\em Journal of Combinatorial Theory, Series A,} 25 (1978), 226--241.
\bibitem{DrFr79}
    {\sc D. A. Drake and J. W. Freeman,}
    {\sl Partial $t$-spreads and group constructible $(s,r,\mu)$-nets,}
    {\em Journal of Geometry,} 13 (1979), 210--216.
\bibitem{EiMe97}
    {\sc J. Eisfeld and K. Metsch,}
    {\sl Blocking $s$-dimensional subspaces by lines in PG($2s,q$),}
    {\em Combinatorica,} 17 (1997), 151--162.
\bibitem{EJSSS}
    {\sc S. El-Zanati, H. Jordon, G. Seelinger, P. Sissokho, and L. Spence}
    {\sl The maximum size of a maximal 3-spread in a finite vector space over GF(2)}
    {\em Designs, Codes, and Cryptography,} 54 (2010) 101--107.
\bibitem{Etz91}
    {\sc T. Etzion,}
    {\sl New Lower Bounds for Asymmetric and Unidirectional Codes,}
    {\em IEEE Trans.\ Inform. Theory}, 37 (1991), 1696--1704.
\bibitem{Etz98}
    {\sc T. Etzion,}
    {\sl Perfect byte-correcting codes,}
    {\em IEEE Trans.\ Inform. Theory}, 44 (1998), 3140--3146.
\bibitem{Etz13}
    {\sc T. Etzion}
    {\sl Covering of subspaces by subspaces,}
    {\em arxiv.org/abs/1111.4319}, November 2011.
    {\em Designs, Codes, and Cryptography,} to appear.
\bibitem{Etz13a}
    {\sc T. Etzion}
    {\sl Partial $k$-Parallelisms in Finite Projective Spaces},
    {\em arxiv.org/abs/1302.3629}, February 2013.
\bibitem{Etz13b}
    {\sc T. Etzion}
    {\sl The $q$-Analog of the Middle Levels Problem},
    {\em arxiv.org/abs/1303.7110}, March 2013.
\bibitem{EtOs98}
    {\sc T. Etzion and P. R. J. \"Osterg\aa rd,}
    {\sl Greedy and Heuristic Algorithms for Codes and Colorings,}
    {\em IEEE Trans.\ Inform. Theory}, 44 (1998), 382--388.
   (see correction in 44 (1992), 1183--1184).
\bibitem{EtSi09}
    {\sc T. Etzion and N. Silberstein,}
    {\sl Error-correcting codes in projective space via rank-metric codes and Ferrers
         diagrams,}
    {\em IEEE Trans.\ Inform. Theory}, 55 (2009), 2909--2919.
\bibitem{EtSi13}
    {\sc T. Etzion and N. Silberstein,}
    {\sl Codes and Designs Related to Lifted MRD Codes,}
    {\em IEEE Trans.\ Inform. Theory}, 59 (2013), 1004--1017.
\bibitem{EtVa11}
    {\sc T. Etzion and A. Vardy,}
    {\sl Error-correcting codes in projective spaces,}
    {\em IEEE Trans.\ Inform. Theory}, 57 (2011), 1165--1173.
\bibitem{EtVa12}
     {\sc T. Etzion and A. Vardy,}
     {\sl Automorphisms of codes in the Grassmann scheme,}
     {\em arxiv.org/abs/1210.5724}, October 2012.
\bibitem{EtVa11a}
     {\sc T. Etzion and A. Vardy,}
     {\sl On $q$-Analogs for Steiner Systems and Covering Designs,}
     {\em Advances in Mathematics of Communications}, 5 (2011), 161--176.
\bibitem{FrWi86}
    {\sc P. Frankl and R. M. Wilson},
    {\sl The Erd\H{o}s-Ko-Rado theorem for vector spaces,},
    {\em Journal of Combinatorial Theory, Series A,} 43 (1986), 228--236.
\bibitem{Gab85}
     {\sc E. M. Gabidulin,}
     {\sl Theory of codes with maximum rank distance,}
     {\em Problems of Information Transmission}, 21 (1985), 1--12.
\bibitem{GaYa10}
    {\sc M. Gadouleau and Z. Yan,}
    {\sl Packing and covering properties of subspace
         codes for error control in random linear network coding,}
    {\em IEEE Trans.\ Inform. Theory}, 56 (2010), 2097--2108.
\bibitem{GaYa10a}
    {\sc M. Gadouleau and Z. Yan,}
    {\sl Constant-rank codes and their connection to constant-dimension codes,}
    {\em IEEE Trans.\ Inform. Theory}, 56 (2010), 3207--3216.
\bibitem{Gha13}
     {\sc A. Ghatak,}
     {\sl Subspace codes for random networks based on Pl\"{u}cker coordinates and Schubert cells},
     {\em arxiv.org/abs/1301.6362}, January 2013.
\bibitem{GMR12}
    {\sc E. Gorla, F. Manganiello, and J. Rosenthal,}
    {\sl An Algebraic Approach for Decoding Spread Codes,}
    {\em Advances in Mathematics of Communications}, 6 (2012), 443--466.
\bibitem{GNW12}
     {\sc V. Guruswami, S. Narayanan, and C. Wang,}
     {\sl List decoding subspace codes from insertions and deletions,}
     {\em arxiv.org/abs/1202.0535}, February 2012.
\bibitem{GuXi12}
    {\sc V. Guruswami and C. Xing,}
    {\sl List-decoding Reed-Solomon, algebraic-geometric,
         and Gabidulin subcodes up to the Singleton bound,}
    {\em Electronic Colloquium on Computational Complexity}, 19 (2012), 1--37.
\bibitem{Hal77}
    {\sc J. I. Hall,}
    {\sl Bounds for equididtant codes and partial projective planes,}
    {\em Discrete Mathematics}, 17 (1977), 85--94.
\bibitem{Han12}
    {\sc J. P. Hansen}
    {\sl Equidistant linear codes with maximal error-protection
    from Veronese varieties},
    {\em arxiv.org/abs/1207.2083}, July 2012.
\bibitem{Han12a}
    {\sc J. P. Hansen}
    {\sl Osculating spaces of varieties and linear network codes},
    {\em arxiv.org/abs/1210.7961}, October 2012.
\bibitem{Han13}
    {\sc J. P. Hansen}
    {\sl Forms and linear network codes},
    {\em arxiv.org/abs/1303.1626}, March 2013.
\bibitem{HoPa72}
    {\sc S.\,J.\ Hong and A.\,M.\ Patel,}
    {\sl A general class of maximal codes for computer applications,}
    {\em IEEE Trans. Computers}, 21 (1972), 1322--1331.
\bibitem{Hsi75}
    {\sc W. N. Hsieh,}
    {\sl Intersection theorems for systems of finite vector spaces,}
    {\em Discrete Math.,} 12 (1975), 1--16.
\bibitem{Ito98}
    {\sc T. Itoh,}
    {\sl A new family of 2-designs over GF($q$) admitting SL$_m$($q^\ell$),}
    {\em Geometriae Dedicata,} 69 (1998), 261--286.
\bibitem{KhKs09}
     {\sc A. Khaleghi and F. R. Kschischang,}
     {\sl Projective space codes for injection metric,}
     {\em arxiv.org/abs/0904.0813}, February 2009.
\bibitem{KSK09}
     {\sc A. Khaleghi, D. Silva, and F. R. Kschischang,}
     {\sl Subspace codes,} in Coding and Cryptography,
    {\em Lecture Notes in Computer Science,} 5921 (2009), 1--21.
\bibitem{KoKs08}
    {\sc R. Koetter and F. R. Kschischang,}
    {\sl Coding for errors and erasures in random network coding,}
    {\em IEEE Trans. on Inform. Theory},  54 (2008), 3579--3591.
\bibitem{KoKu08}
    {\sc A. Kohnert and S. Kurz,}
    {\sl Construction of large constant dimension
         codes with a prescribed minimum distance,}
    {\em Lecture Notes Computer Science}, 5393 (2008), 31--42.
\bibitem{Lun99}
    {\sc G. Lunardon,}
    {\sl Normal spreads,}
    {\em Geometriae Dedicata,} 75 (1999), 245--261.
\bibitem{MaVa10}
    {\sc H. Mahdavifar and A. Vardy,}
    {\sl Algebraic list-decoding on the operator channel,}
    in proc. of \emph{ISIT 2010}, pp. 1193--1197.
\bibitem{MaVa11}
    {\sc H. Mahdavifar and A. Vardy,}
    {\sl Algebraic list-decoding of subspace codes with multiplicities,}
    in proc. of \emph{49th Annual Allerton Conference}, (2011), 1430--1437.
\bibitem{MaVa12a}
    {\sc H. Mahdavifar and A. Vardy,}
    {\sl Algebraic list-decoding of subspace codes,}
     {\em arxiv.org/abs/1202.0338}, Ffebruary 2012.
\bibitem{MaVa12}
    {\sc H. Mahdavifar and A. Vardy,}
    {\sl List-decoding of subspace codes and rank-metric
              codes up to Singleton bound,}
    in proc. of \emph{ISIT 2012}, pp. 1188--1192.
\bibitem{MGR08}
    {\sc F. Manganiello, E. Gorla, and J. Rosenthal,}
    {\sl Spread codes and spread decoding in network coding,}
    in proc. of \emph{ISIT 2008}, pp. 881--885.
\bibitem{MTR11}
    {\sc F. Manganiello, A.-L. Trautmann, and J. Rosenthal,}
    {\sl On conjugacy classes of subgroups of the general linear group and cyclic orbit codes,}
    in proc. of \emph{ISIT 2011}, pp. 1916--1920.
\bibitem{MaZh95}
    {\sc W. J. Martin and X. J. Zhu,}
    {\sl Anticodes for the Grassmann and bilinear forms graphs,}
    {\em Designs, Codes, and Cryptography,} 6 (1995), 73--79.
\bibitem{Med12}
    {\sc Y. Medvedeva,}
    {\sl Fast enumeration for Grassmannian space,}
    in proc. of \emph{Problems of Redundancy in Information and Control Systems 2012}, pp. 48--52.
\bibitem{Met99}
    {\sc K. Metsch},
	{\sl Bose-Burton type theorems for finite projective, affine and polar spaces,}
		in J.D.\,{\sc Lamb} and D.A.\,{\sc Preece}, Eds.,
    {\em Surveys in Combinatorics, 1999},
                London Math.\ Soc.\ Lecture Note Ser.\ 267,
		Cambridge Univ.\ Press, Cambridge, pp.~137--166, 1999.
\bibitem{Met03}
    {\sc K. Metsch,}
    {\sl Blocking sets in projective spaces and polar spaces,}
    {\em Journal of Geometry,} 76 (2003), 216--232.
\bibitem{Met04}
    {\sc K. Metsch,}
    {\sl Blocking subspaces by lines in PG($n,q$),}
    {\em Combinatorica,} 24 (2004), 459--486.
\bibitem{MMY95}
    {\sc M. Miyakawa, A. Munemasa, and S. Yoshiara,}
    {\sl On a class of small 2-designs over GF($q$),}
    {\em Journal Combinatorial Designs,} 3 (1995), 61--77.
\bibitem{RoTr12}
     {\sc J. Rosenthal and A.-L. Trautmann,}
     {\sl Decoding of subspace codes, a problem of Schubert calculus over finite fields,}
     {\em arxiv.org/abs/1209.2887}, September 2012.
\bibitem{Rot91}
    {\sc R. M. Roth,}
    {\sl Maximum-rank array codes and their application to
         crisscross error correction,}
    {\em IEEE Trans.\ Inform. Theory}, 37 (1991), 328--336.
\bibitem{Sar02}
     {\sc J. Sarmiento,}
     {\sl On point-cyclic resolutions of the
          $2-(63,7,15)$ design associated with PG(5,2)},
     {\em Graphs and Comb.}, 18 (2002), 621--632.
\bibitem{Sav97}
    {\sc C. Savage,}
    {\sl A survey of combinatorial Gray codes,}
    {\em SIAM Review,} 39 (1997), 605--629.
\bibitem{SaWi95}
    {\sc C. D. Savage and P. Winkler,}
    {\sl Monotone Gray codes and the Middle levels problem},
    {\em Journal of Combinatorial Theory, Series A}, 70 (1995), 230--248.
\bibitem{Sch79}
     {\sc N. Schrijver,}
     {\sl A comparison of the Delsarte and Lov\'{a}sz bound},
     {\em IEEE Trans. Inform. Theory}, 25 (1979), 425--429.
\bibitem{Sch05}
     {\sc N. Schrijver,}
     {\sl New code upper bounds from the Terwilliger algebra and semidenite programming},
     {\em IEEE Trans. Inform. Theory}, 5 (2005), 2859--2866.
\bibitem{Sch64}
    {\sc J. Sch\"{o}nheim,}
    {\sl On coverings,}
    {\em Pacific J. Math.} 14 (1964), 1405--1411.
\bibitem{ScEt}
    {\sc M. Schwartz and T. Etzion,}
    {\sl Codes and anticodes in the Grassman graph,}
    {\em Journal Combinatorial Theory, Series A,} 97 (2002), 27--42.
\bibitem{Shi02}
     {\sc V. P. Shilo,}
     {\sl New lower bounds of the size of error-correcting codes for the Z-channel},
     {\em Cybernetics and Systems Analysis}, 38 (2002), 13--16.
\bibitem{SiEt11}
     {\sc N. Silberstein and T. Etzion,}
     {\sl Enumerative coding for Grassmannian space},
     {\em IEEE Trans. Inform. Theory}, 57 (2011), 365--374.
\bibitem{SiEt11a}
     {\sc N. Silberstein and T. Etzion,}
     {\sl Large constant dimension codes and lexicodes},
     {\em Advances\ of Mathematics in \ Communications}, 5 (2011), 177--189.
\bibitem{SiTr13}
     {\sc N. Silberstein and A.-L. Trautmann,}
     {\sl New lower bounds for constant dimension codes},
     {\em arxiv.org/abs/1301.5961}, January 2013.
\bibitem{SKK08}
    {\sc D. Silva, F.\,R.\ Kschischang, and R.\ Koetter,}
    {\sl A rank-metric approach to error control in random network coding,}
    {\em IEEE Trans. Inform. Theory}, 54 (2008), 3951--3967.
\bibitem{SiKs09}
    {\sc D. Silva and F.\,R.\ Kschischang,}
    {\sl On metrics for error-correction in network coding,}
    {\em IEEE Trans. Inform. Theory}, 55 (2009), 5479--5490.
\bibitem{Ska10}
    {\sc V. Skachek,}
    {\sl Recursive code construction for random networks,}
    {\em IEEE Trans. Inform. Theory}, 56 (2010), 1378--1382.
\bibitem{SMN12}
     {\sc V. Skachek, O. Milenkovic, and A. Nedi\'{c},}
     {\sl Hybrid noncoherent network coding,}
    {\em IEEE Trans. Inform. Theory}, 59 (2013), 3317--3331.
\bibitem{Sta86}
     {\sc R. P. Stanley,}
     {\em Enumerative Combinatorics, volume 1},
     Wadsworth, 1986.
\bibitem{Sen97}
     {\sc N. Sendrier,}
     {\sl On the dimension of the hull,}
     {\em SIAM J. Discr. Math.,} 10 91997), 282--293.
\bibitem{Suz90}
    {\sc H. Suzuki,}
    {\sl 2-designs over GF($2^m$),}
    {\em Graphs and Combinatorics,} 6 (1990), 293--296.
\bibitem{Suz92}
    {\sc H. Suzuki,}
    {\sl 2-designs over GF($q$),}
    {\em Graphs and Combinatorics,} 8 (1992), 381--389.
\bibitem{Tho87}
    {\sc S. Thomas,}
    {\sl Designs over finite fields,}
    {\em Geometriae Dedicata,} 21 (1987), 237--242.
\bibitem{Tho96}
    {\sc S. Thomas,}
    {\sl Designs and partial geometries over finite fields,}
    {\em Geometriae Dedicata,} 63 (1996), 247--253.
\bibitem{Tra13}
     {\sc A.-L. Trautmann,}
     {\sl isometry and automorphisms of constant dimension codes},
     {\em Advances\ of Mathematics in \ Communications}, 7 (2013), 147--160.
\bibitem{TMBR11}
     {\sc A.-L. Trautmann, F. Manganiello, M. Braun, and J. Rosenthal,}
     {\sl Cyclic Orbit Codes,}
     {\em arxiv.org/abs/1112.1238}, December 2011.
\bibitem{TMR11}
     {\sc A.-L. Trautmann, F. Manganiello, and J. Rosenthal,}
     {\sl Orbit codes - a new concept in the area of network coding,}
     {\em Information Theory Workshop}, pp. 1-–4, Dublin, Ireland, August 2010.
\bibitem{TSR13}
     {\sc A.-L. Trautmann, N. Silberstein, and J. Rosenthal,}
     {\sl List decoding of lifted Gabidulin codes via the Pl\"{u}cker embedding,}
     {\em arxiv.org/abs/1301.2165}, January 2013.
\bibitem{TrRo10}
     {\sc A.-L. Trautmann and J. Rosenthal,}
     {\sl New improvements on the echelon-Ferrers construction,}
     in Proc. of {\em Int. Symp. on Math. Theory of Networks and Systems}, pp. 405-–408, July 2010.
\bibitem{TrRo12}
     {\sc A.-L. Trautmann and J. Rosenthal,}
     {\sl Decoding of subspace codes, a problem of schubert calculus over finite fields,}
     {\em arxiv.org/abs/1209.2887}, September 2012.
\bibitem{TrRo13}
     {\sc A.-L. Trautmann and J. Rosenthal,}
     {\sl A complete characterization of irreducible cyclic orbit
     codes and their Pl\"{u}cker embedding,}
     {\em Designs, Codes, and Cryptography,} 66 (2013), 275--289.
\bibitem{vLWi92}
    {\sc J. H. van Lint and R. M. Wilson,}
    {\em A course in Combinatorics},
    Cambridge University Press, 1992.
\bibitem{WXSN03}
    {\sc H. Wang, C. Xing, and R. M. Safavi-Naini,}
    {\sl Linear authentication codes: bounds and constructions,}
    {\em IEEE Trans. Inform. Theory}, 49 (2003), 866--872.
\bibitem{XiFu09}
     {\sc S.-T. Xia and F.-W. Fu,}
     {\sl ohnson type bounds on constant dimension codes,}
     {\em Designs, Codes, and Cryptography,} 50 (2009), 163--172.
\bibitem{YSLB}
    {\sc E. Yaakobi, M. Schwartz, M. Langberg, and J. Bruck,}
    {\sl Sequence reconstruction for Grasmann graphs and Permutations,}
    to be presented in \emph{ISIT 2013}.
\bibitem{ZZS71}
    {\sc G. V. Zaicev, V. A. Zinoviev, and N. V. Semakov,}
    {\sl Interrelations of Preparata and Hamming codes
      and extension of Hamming codes to new double error-correcting codes,}
    in {\em Proc. 2nd inter. symp. information theory,}
    Budapest (1971), 257--263.
\end{thebibliography}
\end{document}